\newif\ifnotes

\documentclass[conference,10pt]{IEEEtran}

\usepackage[utf8]{inputenc}

\usepackage[
    backend=biber,
    style=ieee,
    style=numeric-comp,
    sorting=none, 
    isbn=false,
    url=true,
    doi=false,
    url=false,
    mincrossrefs=100,
    maxnames=12, 
    giveninits=true
]{biblatex}

\AtEveryBibitem{\clearname{editor}}
\AtEveryBibitem{\clearfield{note}}

\usepackage{multicol}
\usepackage{algpseudocode}
\usepackage[normalem]{ulem}
\usepackage{setspace}
\usepackage[dvipsnames, table]{xcolor}
\usepackage{soul}
\ifnotes
\usepackage[nomargin, inline, draft]{fixme}
\else
\usepackage[nomargin, inline, final]{fixme}
\fi

\usepackage[anythingbreaks]{breakurl}

\usepackage[T1]{fontenc}
\usepackage{microtype}
\usepackage{pstricks}
\usepackage{graphicx}
\usepackage{amsmath}
\usepackage{xspace}
\usepackage{listings}
\usepackage{multirow} 
\usepackage[most]{tcolorbox}
\usepackage{booktabs}
\usepackage{tabularx}
\usepackage{xspace}
\usepackage{multirow}
\usepackage{pifont}
\usepackage{amsthm}

\usepackage{mathtools}

\usepackage{relsize}
\usepackage{flushend}

\usepackage{caption}
\usepackage{subcaption}
\usepackage{mdframed}

\usepackage{tikz}
\usetikzlibrary{arrows.meta, positioning, calc}

\definecolor{highlight1}{HTML}{F58F33} 
\definecolor{highlight2}{HTML}{2F7BA3} 

\colorlet{highlight1_transparent}{highlight1!50} 
\colorlet{highlight2_transparent}{highlight2!50} 

\newtheorem{definition}{Definition}

\newcommand{\commentout}[1]{}

\definecolor{white}{RGB}{255,255,255}
\definecolor{vrpink}{RGB}{255,0,127}
\definecolor{vrblue}{RGB}{30,144,255}
\definecolor{vrolive}{RGB}{85,107,47}
\definecolor{vrroyalblue}{RGB}{65,105,225}
\definecolor{brgreen}{RGB}{100,200,70}
\definecolor{ivsalmon}{RGB}{255,160,122}
\definecolor{vrlpink}{RGB}{255,192,203}
\definecolor{mvcol}{RGB}{5,150,25}

\usepackage[obeyFinal]{todonotes}

\FXRegisterAuthor{vv}{avr}{\color{vrpink}[vince]}
\FXRegisterAuthor{rvv}{arvr}{\color{red}[vince]}

\newcommand{\gbof}{$\gamma$-batch-order-fairness\xspace}

\usepackage{xcolor}
\usepackage{bbding}
\definecolor{ForestGreen}{rgb}{0.133, 0.543, 0.133}
\newcommand{\checkno}{\textcolor{red}{\XSolidBrush}}
\newcommand{\checkyes}{\textcolor{ForestGreen}{\Checkmark}}

\fxusetheme{color}
\fxuseenvlayout{color}

\usepackage{graphicx}
\graphicspath{{./figures/}}
\DeclareGraphicsExtensions{.pdf,.jpeg,.png,.jpg}

\usepackage{comment}

\usepackage{algorithm}
\usepackage{algpseudocode}
\usepackage{amsmath,amssymb}

\algblockdefx{As}{EndAs}[1]%
{\textbf{as a }#1}%
{\textbf{}}
\algblockdefx{All}{EndAll}[1]%
{\textbf{all replicas}#1}%
{\textbf{}}
\algblockdefx{Timeout}{EndTimeout}[1]%
{\textbf{upon timeout}#1}%
{\textbf{}}
\algblockdefx{Func}{EndFunc}[1]%
{\textbf{function }#1}%
{\textbf{}}
\algblockdefx{WFunc}{EndWFunc}[3]%
{\textbf{function }#1\textbf{ where }#2\textbf{ is }#3}%
{\textbf{}}
\algblockdefx{WFuncx}{EndWFuncx}[3]%
{
  \begin{tabular}{r@{\hspace{0.03in}}l}\textbf{function } & #1\\\textbf{ where } & #2\textbf{ is }#3
\end{tabular}}%
{\textbf{}}
\algblockdefx{WFuncb}{EndWFuncb}[3]%
{\textbf{function }#1\textbf{ where }#2\textbf{ is }#3}%
{\textbf{}}
\algtext*{EndWFuncb}%
\algblockdefx{WWFunc}{EndWWFunc}[5]%
{
  \begin{tabular}{l@{\hspace{0.03in}}r@{\hspace{0.03in}}l}\textbf{function }#1&\textbf{where}&#2\textbf{ is }#3\\&\textbf{and}&#4\textbf{ is }#5
\end{tabular}}%
{\textbf{}}
\algblockdefx{WWWFunc}{EndWWWFunc}[7]%
{
  \begin{tabular}{l@{\hspace{0.03in}}r@{\hspace{0.03in}}l}\textbf{function }#1&\textbf{where}&#2\textbf{ is }#3\\&\textbf{and}&#4\textbf{ is }#5\\&\textbf{and}&#6\textbf{ is }#7
\end{tabular}}%
{\textbf{}}
\algblockdefx{Funcs}{EndFuncs}[1]%
{\textbf{functions }#1}%
{\textbf{}}
\algblockdefx{Funcsb}{EndFuncsb}[1]%
{\textbf{functions }#1}%
{\textbf{}}
\algtext*{EndFuncsb}%
\algblockdefx{Funcb}{EndFuncb}[1]%
{\textbf{function }#1}%
{\textbf{}}
\algtext*{EndFuncb}%
\algnewcommand\lFor[2]{\State\textbf{for}\ #1\ \textbf{do}\ #2}
\algnewcommand\lIf[2]{\State\textbf{if}\ #1\ \textbf{then}\ #2}

\algnewcommand\algorithmicforeach{\textbf{for each}}
\algdef{S}[FOR]{ForEach}[1]{\algorithmicforeach\ #1\ \algorithmicdo}

\newcommand{\hide}[1]{}

\newcommand{\proc}[1]{\textnormal{\textsc{#1}}}
\algnewcommand\Send[1]{\State \textbf{send}\ #1}
\algnewcommand\Broadcast[1]{\State \textbf{broadcast}\ #1}
\algnewcommand\WaitUntil[1]{\State \textbf{wait until}\ #1}

\newcommand{\lc}{\mathit{lc}}
\newcommand{\phiclock}{\varphi}
\newcommand{\tx}{\mathit{tx}}

\usepackage{hyperref}
\hypersetup{colorlinks=true,citecolor=purple}

\makeatletter
\renewenvironment{proof}[1][\proofname]{\par
  \pushQED{\qed}%
  \normalfont \topsep0pt \partopsep0pt 
  \trivlist
\item[\hskip\labelsep \itshape #1\@addpunct{.}]\ignorespaces }{%
  \endtrivlist\@endpefalse
}
\makeatother

\usepackage{mathtools}
\newtagform{blue}{\color{BlueViolet}
(}{)}

\newtheorem{theorem}{Theorem}

\newtheorem{lemma}[theorem]{Lemma}

\newcommand{\name}{Tilikum\xspace}
\newcommand{\nameol}{Tilikum-OL\xspace}
\newcommand{\namebof}{Tilikum-BOF\xspace}
\newcommand{\pompe}{Pomp\=e\xspace}

\begin{document}
\def\sectionautorefname{Sec.} \def\subsectionautorefname{Sec.} \def\figureautorefname{Fig.}
\def\tableautorefname{Tab.} \def\algorithmautorefname{Alg.} \def\equationautorefname{Eq.}

\IEEEoverridecommandlockouts

\title{\name{}: Transaction Fair Ordering on a DAG without Weak Edges}


\author{\IEEEauthorblockN{Giulio Segalini\textsuperscript{*}}
	\IEEEauthorblockA{Université de Neuchâtel\\
		giulio.segalini@unine.ch}
	\and
	\IEEEauthorblockN{Yigit Çolakoğlu}
	\IEEEauthorblockA{Delft University of Technology\\
		Y.Colakoglu@student.tudelft.nl}
	\and
	\IEEEauthorblockN{Marko Putnik}
	\IEEEauthorblockA{Delft University of Technology\\
		M.Putnik@student.tudelft.nl}
    \and
	\IEEEauthorblockN{Jérémie Decouchant}
	\IEEEauthorblockA{Delft University of Technology\\
		j.decouchant@tudelft.nl}
\thanks{\textsuperscript{*}Work performed while at TU Delft.}
}


\makeatletter\def\@IEEEpubidpullup{6.5\baselineskip}\makeatother

\maketitle

\begin{abstract}
  Decentralized Finance (DeFi) applications rely heavily on the order in which transactions are executed, making them susceptible to reordering attacks that enable adversaries to extract Blockchain Extractable Value (BEV).
  While linear blockchain systems such as Ethereum have inspired extensive research into fair ordering mechanisms, DAG-based consensus protocols have remained largely unprotected despite their growing adoption for scalability and performance.
  In this paper, we introduce \name{}, a DAG-based ledger protocol that ensures fair transaction ordering without relying on weak edges.
  \name{} achieves ordering linearizability by leveraging median-based timestamp aggregation, or batch order fairness, while maintaining low data redundancy and robust garbage collection.
  We implemented \name{} in Rust and evaluated it against representative baselines, namely Narwhal/Tusk, \pompe{}, Themis and FairDAG.
  Our results show that \name{} achieves up to $39\times$ higher throughput than other fair-ordering baselines, while fully blocking state-of-the-art DAG-specific reordering attacks.
\end{abstract}

\begin{IEEEkeywords}
Order-Fair Consensus, Transaction Ordering, DAG-based Consensus, Blockchains.
\end{IEEEkeywords}

\IEEEpeerreviewmaketitle

\section{Introduction}

Since Bitcoin~\cite{nakamoto2008bitcoin},
cryptocurrencies and blockchain technology have remained a continuous focus of academic research and industrial development. 
Ethereum~\cite{wood2014ethereum} introduced smart
contracts, which allow generalized applications to run on top of blockchains,
achieving strong security guarantees, such as integrity, transparency,
and decentralization. One particular application of smart contracts
that has gained traction is Decentralized Finance (DeFi) systems that offer traditional
financial services without the need for trusted intermediaries~\cite{schueffel2021defi}.

Transactions executed through a DeFi smart contract are often order-dependent,
as the state and outcome of a transaction are dictated by those that precede it. This dependency enables transaction reordering
attacks, where actors extract value by manipulating the order in which transactions are committed or by inserting their own at strategic positions. Because block producers (i.e., miners or validators) have the final authority over block content and ordering, this extracted profit is known as Blockchain Extractable Value (BEV).\footnote{BEV is a more general term than MEV, which specifically refers to blockchains utilizing Proof-of-Work.}

Simple transactions and exchanges between two parties are not susceptible to
these attacks, but other applications are. Some major examples are lending protocols,
which dynamically adjust interest rates depending on previous interactions, and
decentralized exchanges, where buyers and sellers of different tokens are
automatically matched based on previous market interactions. EigenPhi\footnote{\url{https://eigenphi.io/}} reports more than \$$2$M extracted from Ethereum DeFi in a single week (snapshot, June 2024), and Flashbots' MEV-Explore platform~\cite{FlashBot-mevexplore} measured monthly extraction rates around \$$100$M throughout 2022.
These numbers result in an opaque tax paid by users directly or indirectly
through higher fees or worse exchange rates~\cite{daian2019flash}.

BEV mitigation solutions typically follow one of two philosophies. They might first democratize the extraction process by allowing everyone to reap profits from
positioning their transactions favorably. A second approach  enforces fair transaction ordering guarantees. Fair ordering involves computing what would be a fair way to order transactions based on the order in which different replicas receive them~\cite{kelkar2020order,zhang2020byzantine}. These systems define order fairness either using absolute time
indicators (i.e., timestamps) or relative pair-wise orderings to derive a final sequence from replica proposals.

Transaction fair-ordering mechanisms have been mainly designed for linear blockchain structures, where blocks of transactions are committed in a single chain, produced by algorithms such as PBFT~\cite{pbft} or HotStuff~\cite{yin2019hotstuff}. Linear blockchains are also the main support for smart contract platforms like Ethereum~\cite{wood2014ethereum}. However, there is a growing shift toward alternative architectures. In particular, the latest generation of consensus algorithms increasingly utilizes Directed Acyclic Graphs (DAGs) to organize transactions and blocks~\cite{keidar2021all,danezis2022narwhal,spiegelman2022bullshark,jovanovic2025mahi,polyanskii2025starfish}. By decoupling transaction dissemination from ordering, these DAG-based approaches achieve orders-of-magnitude higher throughput without  compromising security.

One might assume that the high performance of DAG-based algorithms would naturally mitigate BEV attacks by narrowing the window of opportunity for attackers.
Zhang et al.~\cite{zhang2025nofish} and Mahe et al.~\cite{mahe2025order} show otherwise: DAG-based ledgers remain susceptible to sophisticated reordering exploits.
Currently, the intersection of DAG-based consensus and fair ordering remains largely unexplored. To the best of our knowledge, the only existing solution is FairDAG~\cite{kang2026fairdag}.
However, FairDAG relies on full transaction redundancy across replicas and on weak edges, both of which preclude garbage collection and limit its practical deployment.
It is also vulnerable to malicious clients that can halt execution by broadcasting transactions to a subset of replicas.
We discuss these limitations in detail in Appx.~\ref{sec:details_FairDAG_tilikum}.

We introduce \name{}, a novel blockchain consensus framework that benefits from the performance of the DAG-based paradigm, supports fair-ordering properties and overcomes FairDAG's limitations. We implement two variants of \name{}, \nameol{} and \namebof{}, that respectively implement ordering linearizability~\cite{zhang2020byzantine} and $\gamma$-batch-order fairness~\cite{kelkar2023themis}. \\

As a summary, this work makes the following \textbf{contributions}. \\
$\bullet$ \textbf{\name{}: ordering linearizability and $\gamma$-batch-order-fairness on a DAG.} We introduce two variants of \name.
\nameol{} establishes fair transaction ordering directly within a DAG by combining a dual-timestamping mechanism with sealed batches: a Logical Table tracks sequence holes, which are synchronized using metadata-rich Hole Fillers, and median-based execution thresholds define safe points for finalizing and executing transactions fairly.
\namebof{} extends this core design to $\gamma$-batch-order-fairness using dependency graphs and digests to resolve ordering cycles, with batch-unspooling and continuity mechanisms to keep overhead low. \\
$\bullet$ \textbf{First weak-edge-free DAG-based fair-ordering protocol.} \name{} is, to the best of our knowledge, the first DAG-based fair-ordering protocol that does not rely on weak edges, thus restoring garbage collection, a prerequisite for production deployment. We formally prove that \nameol{} and \namebof{} guarantee execution safety and liveness, and respectively ensure ordering linearizability and $\gamma$-batch-order fairness. We implement \name{} in Rust as an extension of Narwhal/Tusk~\cite{danezis2022narwhal}, the foundation of subsequent DAG-based algorithms~\cite{spiegelman2022bullshark,babel2025mysticeti,spiegelman2024shoal,arun2025shoal++}; the source code will be open-sourced upon acceptance. \\
$\bullet$ \textbf{Liveness attack on FairDAG with malicious clients.} We identify performance and liveness attacks on FairDAG~\cite{kang2026fairdag} in which malicious clients selectively broadcast transactions to a subset of replicas, harming or halting execution (see Appx.~\ref{sec:details_FairDAG_tilikum}). \name{} is not vulnerable to this attack by design. \\
$\bullet$ \textbf{Evaluation, including under attack.} We benchmark \nameol{} and \namebof{} against state-of-the-art fair-ordering protocols: Themis~\cite{kelkar2023themis}, \pompe{}~\cite{zhang2020byzantine} and FairDAG~\cite{kang2026fairdag}. \nameol{} sustains $14{,}000$\,tx/s at $N{=}10$ with $1.2$\,s latency, $39\times$ Pomp\=e's throughput at the same scale and still $4\times$ faster than Pomp\=e at $N{=}25$. \namebof{} doubles Themis' throughput at every evaluated system size. Under the reordering attacks of Zhang et al.~\cite{zhang2025nofish}, which succeed on vanilla Narwhal/Tusk between $14\%$ and $95\%$ of the time, \nameol{} drops the success rate to $0\%$.

\section{Background on Narwhal/Tusk}

Narwhal/Tusk~\cite{danezis2022narwhal} decouples transaction dissemination from ordering, addressing the bandwidth bottleneck of leader-based BFT. Narwhal serves as the mempool, organizing transactions into a Directed Acyclic Graph (DAG) through a round-based reliable broadcast. Tusk runs on top of Narwhal as the consensus layer.

\textbf{Reliable broadcast and Certificates of Availability.}
When a validator creates a block, it broadcasts it; peers reply with a digital signature once they have validated the block and its $2f+1$ parent certificates. Collecting $2f+1$ signatures forms a Certificate of Availability (CoA), which guarantees that the block's data is stored by at least $f+1$ honest parties and is retrievable even if the original sender fails.

\textbf{Primary-worker architecture.}
For horizontal scalability, each party splits responsibilities between a primary that processes block metadata and several workers that reliably broadcast transaction batches and feed their hashes to the primary.

\textbf{Tusk consensus.}
Tusk groups DAG rounds into three waves: proposal, voting via references, and a shared coin that elects a leader block. A leader is committed once it receives at least $f+1$ references in the next round, and its uncommitted causal history is then ordered deterministically. Clients submit transactions to multiple parties so that, as long as one honest validator receives a transaction, it is eventually included in the total order.

\section{System Model and Objectives}
\label{sec:overview}

\subsection{Blockchain Consensus}

We consider a system of $n$ parties that communicate via message passing. Together, the parties run a Byzantine fault-tolerant state-machine replication (BFT-SMR) protocol that provides a total ordering service for clients. Clients broadcast their transactions to all parties so that they can be ordered as soon as possible to limit the risk of BEV attack. 

We consider a computationally bounded and static adversary that can corrupt up to $f$ parties, which are then said to be faulty or Byzantine. Faulty parties can deviate from the protocol in unrestricted ways and collude, while others are said to be correct.
We assume that clients may also be faulty, e.g., they may send a transaction to a subset of the parties.
Parties have access to a collision-resistant hashing scheme and to an unforgeable digital signature scheme. Each party has a unique private key that allows them to sign messages, and every party knows all public keys and is able to verify all valid signatures.  

We assume that the parties are running an asynchronous DAG-based algorithm, which has been shown to ensure high performance. We build our algorithm, \name{}, on top of Narwhal/Tusk~\cite{danezis2022narwhal} because it has been deployed in production, its code has been publicly released and it has been the basis of several further works~\cite{spiegelman2022bullshark,babel2025mysticeti,polyanskii2025starfish}.

\textbf{Network model.} \name{} is designed for asynchronous but eventually reliable communication links among correct parties, which means that there is no bound on message delays and that an unknown but finite number of messages can be lost. We evaluate \name{} on random delay networks~\cite{byzantine2025danezis}, which are a subcategory of asynchronous networks, by setting parameter $K$ (see \S\ref{sec:parameter_k}) to $f+1$. Supporting any asynchronous network would simply require setting parameter $K$ to $2f+1$, which results in slightly lowered performance (see \S\ref{sec:exp-redundancy}).  

\subsection{Fair-Ordering Objectives}

The location of a transaction in the ledger may impact its successful execution or its outputs.
To protect correct clients against transaction reordering attacks, our goal is to ensure transaction fair-ordering. We aim at supporting (separately) the two state-of-the-art fair ordering properties: ordering linearizability and \gbof.

\begin{definition}[Ordering linearizability~\cite{zhang2020byzantine}]
  \label{def:linearizability} Let ${tx}_1$ and ${tx}_2$ be two client transactions.
  If the highest timestamp provided by a correct party for ${tx}_1$ is
  lower than the smallest timestamp provided by a correct party for
  ${tx}_2$, then all correct parties order (and execute) ${tx}_1$ before ${tx}_2$.
\end{definition}

Ordering linearizability requires committing, for each transaction, $2f+1$ physical-clock timestamps from different parties. Once $2f+1$ such timestamps are committed for a transaction, all correct parties deterministically order it using the median of those timestamps.

\begin{definition}[\gbof~\cite{kelkar2020order,kelkar2023themis}]
  \label{def:gbof} Given a parameter $\frac{1}{2} < \gamma \le 1$, if a fraction $\gamma$ of honest parties receive transaction ${tx}_1$ before transaction ${tx}_2$, then ${tx}_1$ is ordered no later than ${tx}_2$. If these preferences form a cycle, the involved transactions are grouped into a single batch and delivered together.
\end{definition}

\gbof{} therefore requires correct parties to share the order in which they received transactions, and to agree on how cyclic dependencies are resolved.

\textbf{System size.} The number of replicas required by our algorithm, \name{}, depends on the fairness property being enforced. For ordering linearizability, \nameol{} assumes \(n \ge 3f + 1\), which matches the minimum need for consensus. For \(\gamma\)-batch-order-fairness, \namebof{} requires \(n > \frac{f(2\gamma + 1)}{2\gamma - 1}\), which simplifies to \(n \ge 3f + 1\) when \(\gamma = 1\). For comparison, \namebof{} requires the same number of replicas as FairDAG-RL~\cite{kang2026fairdag} to ensure $\gamma$-batch-order-fairness, whereas Themis~\cite{kelkar2023themis} assumes \(n > \frac{f(2\gamma + 2)}{2\gamma - 1}\) (i.e., at least $4f+1$ when \(\gamma=1\))\footnote{As Themis is a leader-based protocol, per-round leaders can only wait to receive $n-f$ local orderings, out of which $f$ might be Byzantine, thus to accommodate the solid edge direction threshold of $2f+1$ (for $\gamma=1.0$) it is required to have $N > 4f$. On the other hand, FairDAG and Tilikum-BOF utilize the validity property of the DAG which guarantees $n - f$ honest local orderings upon commit, thus strengthening the threshold to $N > 3f$.}. 

\subsection{Execution-Ready Performance}

To evaluate the performance of a fair-ordering consensus algorithm, it is necessary to distinguish between consensus and execution performance.
In DAG-based protocols, consensus latency is the time required for a transaction to be committed in the DAG, which is lower-bounded by its first commit at any party. Consensus throughput is the average number of transactions committed per time unit. However, in fair-ordering systems, commitment does not imply immediate execution.
We call the execution latency of a transaction ${tx}_1$ at party $P$ the time from its emission by a client until it is committed in $P$'s DAG and safe to
execute, i.e., when no other transaction ${tx}_2$ can still be ordered before ${tx}_1$.
Execution latency is lower-bounded by the earliest
point in time at which a transaction becomes executable, which can be measured. Execution throughput is the rate at which transactions become executable. 
In fair-ordering systems, execution latency and throughput matter more than consensus ones, with execution throughput always being lower than consensus throughput and execution latency always being higher than consensus latency.

\subsection{Challenges}

Ensuring ordering linearizability and \gbof on a DAG while maintaining high performance is challenging for several reasons.
Unlike in leader-based protocols~\cite{zhang2020byzantine,kelkar2023themis,zarbafian2023lyra}, no single party can be tasked with collecting and committing enough local timestamps or relative orderings.
FairDAG~\cite{kang2026fairdag} has each party commit each transaction in one of its blocks, but since a round only commits $2f{+}1$ DAG blocks and weak edges must be avoided for garbage collection, there is no guarantee that enough information is ever committed per transaction.
Finally, a malicious client may send a transaction to fewer than $2f{+}1$ parties: with too few timestamps to compute a median, execution of later transactions with potentially larger timestamps stalls indefinitely.

\section{\name{}: Core Design for Ordering Linearizability}
\label{sec:details}

This section details the core design of \name, which supports ordering linearizability.

\subsection{Giving Physical Clock Timestamps to Transactions}
\label{sec:parameter_k}

In \name{}, censorship-resistance is guaranteed as soon as a client sends its transaction to at least $f+1$ parties: eventually, the transaction would be communicated to one or several correct parties that would commit it (details in Sec.~\ref{sec:inclusion}).
However, we assume that correct clients send their transactions to all the parties so that each of them gives it a timestamp as early as possible for more robust front-running resistance.
When a party learns about a new transaction $tx$, it verifies whether it is well-formed and correctly signed, and if so it associates it with a timestamp pair, made of the value of its local counter $lc(tx)$ and its UNIX physical clock value $\phi(tx)$.
Afterwards, the party increments its local counter value.
Note that parties give a timestamp to a transaction the first time they see it, which happens either when they receive it from the client or from another replica.
The full per-party state machine is given in Alg.~\ref{alg:timestamp} (Appx.~\ref{sec:pseudocode}). Since every timestamp pair is signed, two conflicting pairs at the same logical counter expose the equivocator and are rejected by correct parties.

\subsection{Including Transactions in Blocks with $2f+1$ Timestamps}
\label{sec:inclusion}

\textbf{Redundancy degree.}
In a DAG, there are two ways to commit transactions and collect $2f{+}1$ local timestamps.

First, all parties may be instructed to include all transactions  in their blocks along with their own timestamp.
Then, the first $2f{+}1$ timestamps associated to a transaction that are committed in the DAG can be used to compute its final timestamp.
This approach requires parties to resubmit a block in a round until it is included in the DAG, or the use of weak edges so that $2f{+}1$ blocks from different parties that contain the transaction are always eventually committed, which limits its applicability to some protocols, such as DAG-Rider~\cite{keidar2021all}.

Second, $K$ parties, where $0 {<} K {\le} n{-}1$, can be tasked with including a subset of the transactions in their blocks along with $2f{+}1$ timestamps that they collect from other parties.
Parties that are tasked with including a transaction in one of their blocks are deterministically selected based on the hash of the transaction.
Depending on the value of $K$, two subcases need to be distinguished. 
If $K \ge f+1$, then at least one correct party will be tasked with including the transaction in the DAG, will collect $2f{+}1$ timestamps from other correct parties and eventually commit a block with the transaction.
If $K \le f$, then it might be possible for a transaction to not be included in the DAG by the original $K$ parties, which might all be faulty. Assuming a correct client, then all parties will eventually learn about the transaction, and can start a per-transaction timer when they receive it. If they do not observe the transaction committed in a block before this timer expires, then they attempt to collect $2f{+}1$ timestamps and commit the transaction in the DAG themselves.
The first approach, where $K \le f$, leads to a lower data redundancy but, in the bad case, its use of timeouts increases execution latency.
\name{} uses $K=f+1$ in random delay networks, and $K=2f{+}1$ in asynchronous networks. 

\medskip
\textbf{Creating batches and blocks.}
Parties that are proposing a batch of transactions need to collect $2f{+}1$ timestamps per transaction.
As in Narwhal, blocks only contain hashes of transaction batches.
For a batch to be included in a block, a party first broadcasts it and waits for $2f{+}1$ signed replies. 
Each such reply contains a vector of timestamp pairs from the sending replica. Signatures on timestamps cover the transaction digest, its timestamp, and the associated logical counter. 
The party then broadcasts its batch with the $2f{+}1$ timestamp vectors it collected for reliable storage, and waits for $2f{+}1$ signed acknowledgments.
A batch is \textbf{sealed} once its proposer has collected $2f{+}1$ timestamp vectors and $2f{+}1$ acknowledgments of reliable storage; only sealed batches can be included in a block.
From here onward, parties operate an instance of the Narwhal~\cite{danezis2022narwhal} protocol normally.
Alg.~\ref{alg:seal} (Appx.~\ref{sec:pseudocode}) formalises the two-round batch-sealing procedure and how a sealed batch's hash is then assembled into a DAG block alongside parent certificates and hole fillers.

\begin{figure}[t]
  \centering
  \includegraphics[width=0.95\linewidth]{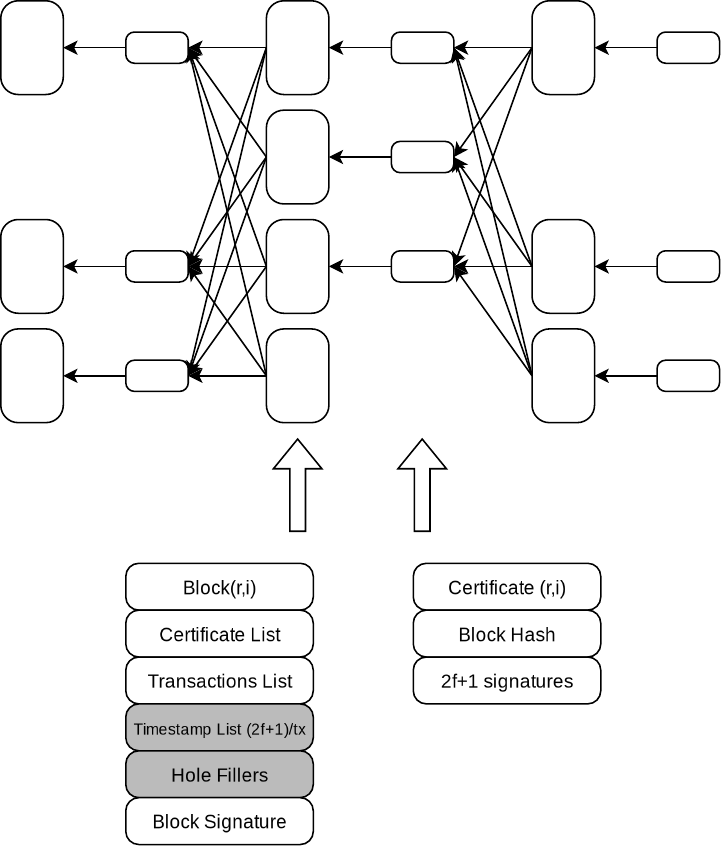}
  \caption{Structure of \name{}'s DAG. Each round contains a set of blocks, represented by large rectangles, while smaller rectangles denote certificates. The DAG excludes weak edges, i.e., references from a block at round $r_b$ to a block at round $r_a < r_b - 1$, enabling efficient garbage collection and practical deployment.}
  \label{fig:dag}
\end{figure}

\subsection{Executing Transactions}

Committed transactions are eventually executed according to the order determined by the median of their $2f{+}1$ timestamps contained in the first block where they appear.
To execute a transaction, a party needs to know for sure that there is no other transaction that has been assigned timestamps by other parties and might be ordered before it.

Each party therefore maintains a list of transactions that are pending execution, which is updated when transactions are executed and when a block of transactions is committed.
Along with these transactions, a party stores their $2f{+}1$ timestamp pairs in the \textbf{Logical Table}: a map from each party's public key to a list of the timestamp pairs emitted by that party, kept sorted by logical-counter value.

The main role of the Logical Table is to track which timestamp pairs are still missing for each other party. A missing pair (an entry whose logical counter has not yet been observed in a committed block) is called a \textbf{hole}. The head of each row is then the earliest timestamp for which the party has seen all predecessors but not the immediate successor.

Parties collect, for each other party, the smallest pair that does not miss any precedent value (i.e., the head of the list), they obtain a list of $3f{+}1$ timestamp pairs that can always be executed. This is because, for each party, parties have also committed all transactions seen by that party with a smaller timestamp than the one in the list.
Parties take the $2f{+}1$ smallest associated UNIX timestamps and compute their median, they then obtain a threshold value until which transaction execution is safe.
This computation is given in Alg.~\ref{alg:threshold} (Appx.~\ref{sec:pseudocode}) and is invoked by Alg.~\ref{alg:exec}.

\begin{algorithm}[t]
\caption{Logical Table Update \& Transaction Execution at party $p_i$}
\label{alg:exec}
\begin{algorithmic}[1]
\State \textbf{Initialization:}
\State $L_i[\,] \gets$ map from party id to sorted list of $(\lc,\,\phiclock)$ pairs
  \Comment{Logical Table}
\State $\mathit{pending}_i \gets \emptyset$
  \Comment{transactions committed but not yet executed}

\Statex
\State \textbf{upon} block $B$ is committed by Tusk \textbf{do}
  \ForAll{sealed batch $sb$ referenced by $B$}
    \ForAll{$\tx \in sb$}
      \State $\mathit{pairs} \gets$ extract $2f{+}1$ timestamp pairs from $sb$
      \State $\tx.\mathit{final\_ts} \gets \proc{Median}\bigl(\{\phiclock \mid (\lc,\,\phiclock) \in \mathit{pairs}\}\bigr)$
      \ForAll{$(p_j,\, \lc_j,\, \phiclock_j) \in \mathit{pairs}$}
        \State $L_i[p_j].\proc{Insert}(\lc_j,\, \phiclock_j)$
      \EndFor
      \State $\mathit{pending}_i \gets \mathit{pending}_i \cup \{\tx\}$
    \EndFor
  \EndFor

  \Statex

  \State{ $\triangleright$ Process hole fillers from $B$'s metadata}
  \ForAll{$(\mathit{author},\, \lc_h,\, \phiclock_h,\, \sigma_h) \in B.\mathit{fillers}$}
    \State \Call{AdvanceRow}{$L_i,\; \mathit{author},\; \lc_h,\; \phiclock_h,\; \sigma_h$}
  \EndFor

  \Statex
  \State{ $\triangleright$ Compute execution threshold and execute}
  \State $\tau \gets \Call{Threshold}{L_i}$
  \State $\mathit{ready} \gets \{\tx \in \mathit{pending}_i \mid \tx.\mathit{final\_ts} \le \tau\}$
  \State sort $\mathit{ready}$ by $\mathit{final\_ts}$ (break ties deterministically)
  \ForAll{$\tx \in \mathit{ready}$ in sorted order}
    \State $\proc{Execute}(\tx)$
    \State $\mathit{pending}_i \gets \mathit{pending}_i \setminus \{\tx\}$
  \EndFor
\end{algorithmic}
\end{algorithm}

\subsection{Hole Fillers}
\label{sec:hole-fillers}

Rows of the Logical Table can stay incomplete. When a transaction is sealed with $2f{+}1$ timestamps, the remaining parties' timestamps never reach the row, leaving holes. Without weak edges, the blocks carrying those timestamps are not guaranteed to be committed either: a certificate of availability may arrive too late to be referenced in the next round and is dropped. Danezis et al.~\cite{danezis2022narwhal} address this by re-injecting batches of uncommitted blocks so that every batch is eventually committed in expectation, but the missing Logical Table entries remain. Since the execution threshold is the median of row heads, a single stuck row halts every later transaction.

To prevent this, we introduce \textbf{Hole Fillers}: metadata added to blocks that allow other parties to catch up on the missing values and advance their Logical Table.
After a block is committed, parties verify for which transactions in the blocks their assigned Timestamp Pair was \textbf{not} included.
All these values should be retrieved from the local storage and included in their own row in the Logical Table.

Parties then add as metadata in the next block the current head of the just updated row.
When processing a committed block, parties can safely advance the logical table row of a block's author such that the new head is equivalent to the hole filler's value.
In case the row is already ahead enough, the value can be simply ignored.

Fig.~\ref{fig:dag} illustrates \name{}'s DAG and highlights in gray the additional information that \name{} commits compared to Narwhal/Tusk: $2f+1$ timestamp pairs per transaction, and a list of hole fillers.  

\begin{algorithm}[t]
\caption{Hole Filler Computation \& Application}
\label{alg:hf}
\begin{algorithmic}[1]
\Function{ComputeHoleFillers}{\,}
  \State $F \gets [\;]$
  \State $\mathit{head} \gets$ head of $L_i[p_i]$
  \State $\sigma \gets \proc{Sign}\bigl(\mathit{head}.\lc,\;\mathit{head}.\phiclock\bigr)$
  \State $F.\proc{Append}\bigl(p_i,\;\mathit{head}.\lc,\;\mathit{head}.\phiclock,\;\sigma\bigr)$

  \ForAll{$(p_j,\, \lc_j,\, \phiclock_j,\, \sigma_j) \in \mathit{pending\_fillers}_i$}
    \State{ $\triangleright$ Include signed fillers from other parties, collected at block receival}
    \State $F.\proc{Append}\bigl(p_j,\;\lc_j,\;\phiclock_j,\;\sigma_j\bigr)$
  \EndFor
  \State \Return $F$
\EndFunction

\Statex

\Procedure{AdvanceRow}{$L,\;\mathit{author},\;\lc_h,\;\phiclock_h,\;\sigma_h$}
  \State $\mathit{cur} \gets$ head of $L[\mathit{author}]$
  \If{$\lc_h > \mathit{cur}.\lc$}
    \State advance $L[\mathit{author}]$ head to $(\lc_h,\;\phiclock_h)$
      \Comment{safe: signed by author}
  \EndIf
\EndProcedure
\end{algorithmic}
\end{algorithm}

\section{\namebof{}: Extension for batch-order-fairness}

\namebof{} reuses \nameol{}'s dual-timestamping, sealed batches, and Logical Table, and adds a per-commit dependency graph (\S\ref{sec:graph-build}) and enriched Hole-Fillers carrying transaction digests. These yield \gbof via batch-unspooling, as in Themis~\cite{kelkar2023themis}.

\subsection{Inferring Local Orderings}

Each transaction $tx$ is bundled with the local counter $lc(tx)$ assigned to that transaction by $2f+1$ parties.
Once a transaction is committed, this information is used to populate the \textbf{logical table}, which is a mapping from each party to its \textbf{logical vector}.
The logical vector $\mathbf{v}$ is an ordered list such that $\mathbf{v}(i) \in T \cup{\emptyset}$ where $T$ is the set of all transactions and $\emptyset$ represents a missing transaction.
When the protocol first starts, all logical vectors are initialized at $\emptyset$.
We denote the logical vector of party $p$ at round $r$ as $\mathbf{v}_p^r$.

As the logical table is updated each round, \name{} uses it to build the transaction dependency graph, since it details the order in which each party received each transaction.
However, the entire logical table cannot be used directly to construct the dependency graph, as it may contain holes (i.e., $\emptyset$ entries) that need to be accounted for.
Instead, we define the local ordering of a party $p$ at a given round $r$ as
$
\mathcal{L}_p^r = \{ tx \in T \mid \forall i \leq lc(tx): \mathbf{v}_p^r(i) \neq \emptyset \}
$.
This method to infer the local ordering for each party ensures that if a transaction $tx$ does not appear in $\mathcal{L}_p^r$, then it is not possible for $p$ to claim that it received a different transaction $tx' \neq tx$ before $tx$, at a later round. This is similar to the \textbf{FIFO} broadcast primitive used in Aequitas~\cite{kelkar2020order}, and plays a key role in implementing batch unspooling.

\subsection{Modified Hole Fillers}

Because the logical table is the main source of information used to infer local orderings, the holes in the logical table need to be filled fully to ensure the protocol's liveness.
For this, we rely on the hole-fillers introduced in Sec.~\ref{sec:hole-fillers}, with some modifications to their content.
For $\gamma$-batch-order fairness, we need to know the transaction associated with each logical counter so that we can correctly compute ordering preferences between transaction pairs.
Therefore, we modify hole-fillers to include the transaction's digest alongside the logical
counter.

Unlike in the ordering-linearizability case, a party must supply fillers for \emph{all} missing logical counters, not just the head of its logical vector.
This imposes additional overhead and makes it harder for lagging parties to publish new local orderings that contribute to the ordering graph.
However, many missing counters may correspond to transactions that are already finalized, and publishing fillers for them is unnecessary.
To address this, we introduce an optimization that reduces redundant filler publication.

\medskip
\textbf{Hole-Filler Continuity and Optimization.}
First, we adjust the protocol such that the hole-fillers published by a party in two consecutive rounds $r, r+1$ are for strictly increasing logical counters and there are no holes between the largest counter in round $r$ and the smallest logical counter in round $r+1$.
This enables a significant optimization: a party can omit fillers for all logical counters associated with already finalized transactions and only publish fillers starting from the unfinalized transaction onward.
This makes it easier for a party that is lagging behind to catch up.

\medskip
\textbf{Handling Unreachable Blocks.}
The correctness of this optimization is void if a block carrying hole-fillers is not included in the chain.
To ensure safety under such conditions, we introduce a two-part sequence identifier for blocks carrying hole-fillers: $(\textit{retry\_count}, \textit{seqnum})$, where
$\textit{seqnum}$ represents the position of the block in the party's filler publication sequence, while $\textit{retry\_count}$ is incremented whenever a block is not committed.

When processing a block, a party applies the optimization only if no $\textit{seqnum}$ values for that party have been skipped, and the $\textit{retry\_count}$ is at least as large as the latest retry count observed for that party. If either condition fails, the optimization is disabled for that block, and no assumptions can be made about earlier holes.
When a party detects that a block containing its hole-fillers was not committed, it continues publishing fillers starting from the missing sequence number but increments its $\textit{retry\_count}$.
This ensures eventual delivery of all required fillers, while still allowing the optimization to operate when the publication sequence is intact.

\subsection{Building the Dependency Graph}
\label{sec:graph-build}
 
Our graph building and unrolling approach follows that of Themis~\cite{kelkar2023themis}, with modifications inspired by FairDAG-RL~\cite{kang2026fairdag}. Similar to Themis and FairDAG-RL we instantiate a fresh dependency graph upon commit and bound its scope to the transactions admitted at that commit, releasing its memory once the graph is sequenced.

\medskip
\textbf{Ordering Notation.} Given a set of local orderings $\mathcal{L}$, we use $tx \in_k \mathcal{L}$ to denote that transaction $tx$ is present in at least $k$ orderings in $\mathcal{L}$. Notation $tx \prec_{(\mathcal{L},k)} tx'$ indicates that $tx$ appears before $tx'$ in at least $k$ orderings in $\mathcal{L}$. Finally, $\text{Weight}_{\mathcal{L}}(tx, tx')$ represents the maximum value $k$ such that $tx \prec_{(\mathcal{L},k)} tx'$. We define $\mathcal{L}_p(lc)$ as party $p$'s transaction with logical counter $lc$ (or $\emptyset$ if none exists) and $\mathcal{L}_p(lc, lc')$ as all transactions in $\mathcal{L}_p$ with logical counters in $[lc, lc']$.

 \medskip
\textbf{Per-commit graphs.}
Each time a leader vertex $L_r$ is committed and its sub-DAG $A_r$ is delivered to the fairness layer, we instantiate a new graph $\mathcal{G}_r$ (Alg.~\ref{alg:processsubdag}). Transactions are admitted to $\mathcal{G}_r$ when they have first been seen by enough parties, that is when $tx \in_{n(1-\gamma) + f + 1} \mathcal{L}(F)$, and remain assigned to $\mathcal{G}_r$ for the remainder of their lifetime. For each pair of admitted transactions $tx, tx' \in \mathcal{G}_r$ we add the edge $tx \to tx'$ once $\text{Weight}_{\mathcal{L}(F)}(tx, tx') \geq n(1-\gamma) + f + 1$ and the reverse edge does not already exist. Pairs whose weights have not yet crossed threshold in either direction are tracked separately and revisited only when new local orderings affect them, restricting per-round work to pairs whose status could plausibly change.

\medskip
\textbf{Lockstep processing.}
We process local orderings for each party in lockstep. For each party $p$ we maintain a monotonically increasing frontier $F_p$ initialized at $0$, representing the last logical counter we have processed for that party. We define $\mathcal{L}(F) = \{\mathcal{L}_p(0, F_p) \mid F_p \in F\}$ as the set of local orderings at the given frontiers. The frontier of party $p$ advances only when its contiguous prefix reaches the next counter, which is the mechanism by which hole-fillers contribute to count growth. The function $\texttt{Order}(\mathcal{L}(F), \mathcal{G}, \gamma)$ processes transactions in $\mathcal{L}(F)$, admits them to the current graph if their count has crossed the admission threshold, and adds edges between admitted pairs whose weights cross threshold. \texttt{Order} is invoked after every leader commit and proceeds across unfinalized graphs in round order.

\medskip
\textbf{Finalization with live count re-evaluation.}
A graph $\mathcal{G}_r$ is finalizable once it is a tournament. We then condense it via Tarjan's algorithm and topologically sort the SCCs. Themis and FairDAG-RL freeze the solid classification at admission time, but we instead re-evaluate the live count at finalization, so a transaction admitted with a count just above the admission threshold can promote to solid as additional parties' prefixes advance through it via hole-fillers. We locate the last SCC containing a transaction whose live count satisfies $tx \in_{2f+1} \mathcal{L}$ and finalize all SCCs at or before this cutoff. Transactions in SCCs after the cutoff have their graph membership cleared and are re-classified against the next leader commit's graph. This live-count check is necessary to preserve liveness for transactions admitted before enough parties' prefixes reach them.

\section{Correctness Proof}

Throughout this section we assume the system model of \S\ref{sec:overview}: an asynchronous, eventually reliable network of $n \geq 3f+1$ parties of which up to $f$ are Byzantine, with all messages signed under a PKI. We first state the safety and liveness properties of the consensus layer (Tusk), following the original proofs~\cite{danezis2022narwhal}. Referring to these proofs, we then prove that \name{} ensures ordering fairness, execution safety and liveness.

\subsection{Safety and Liveness of Consensus}

\textbf{Safety.} Any two honest validators will commit the same sequence of blocks during the consensus step.

\begin{lemma}
  \label{lemma:path} If an honest validator commits block $b$ as leader in
  wave $i$, then any leader block $b'$ committed by honest validators in
  waves after $i$ has a path to $b$.
\end{lemma}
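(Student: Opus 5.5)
The plan follows the standard Tusk quorum-intersection argument from Danezis et al., using only the structure of the Narwhal DAG. Let $b$ be the leader of wave $i$. Suppose an honest validator commits $b$ directly: it then observed at least $f+1$ blocks in the first round $r'$ of wave $i+1$ that reference $b$, i.e., have an edge to $b$. A commit may also be indirect, triggered by a later leader's causal history. In that case the claim follows by an induction over the chain of committed leaders, which I would set up after handling the direct case.

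\textbf{Step 1 (quorum intersection).} Any block in round $r'+1$ carries certificates for $2f+1$ distinct blocks of round $r'$. Since $n \ge 3f+1$, a set of $f+1$ authors and a set of $2f+1$ authors out of $n$ must intersect. Because certificates of availability prevent equivocation (at most one certified block per author per round), the common author contributes the same block to both sets. So every block of round $r'+1$ has an edge to some block that references $b$, and hence has a path to $b$.

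\textbf{Step 2 (propagation).} Every block in a round $r'' > r'+1$ has $2f+1$ parents in round $r''-1$, all certified, so it is an ancestor of some block in round $r'+1$. By induction on rounds and transitivity of paths, every certified block in any round $\ge r'+1$ has a path to $b$. Any leader $b'$ of a wave $j > i$ sits in the first round of wave $j$, which is at least round $r'+1$ once waves are ordered. Hence $b'$ has a path to $b$. The degenerate case where $b'$ lies exactly in round $r'$ cannot occur, because leaders are in the first round of their wave and wave $i+1$ leaders are placed after the voting round. I would double-check the round indexing against Tusk's three-round wave layout.

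\textbf{Step 3 (indirect commits).} If $b$ was committed only because some later directly committed leader $\ell$ had a path to it, then a leader $b'$ of a wave after $i$ falls into one of two cases. If $b'$ comes after $\ell$'s wave, Step 2 gives a path from $b'$ to $\ell$, and composing with the path from $\ell$ to $b$ gives a path to $b$. If $b'$ comes between waves $i$ and $\ell$'s wave, it was itself committed through the backward path-walk from $\ell$. That walk only commits a previous leader when a path exists, and $b$ was reached from the most recent such leader. Induction along the committed chain then yields the path.

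The main obstacle is Step 3: making the inductive invariant precise (each committed leader has a path to every earlier committed leader) and reconciling it with the fact that different honest validators may directly commit different leaders. I would handle this by proving that invariant first and then deriving the lemma from it, relying on Step 2 for the directly committed anchor.
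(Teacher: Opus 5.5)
Your Steps 1--2 follow the paper's route: quorum intersection between the $f{+}1$ voters for $b$ and the $2f{+}1$ parents of every block in the next round, then induction over rounds. As written, though, they misplace the votes, and the argument then fails in the very first case, the leader of wave $i{+}1$. In Tusk, and in this paper (cf.\ the proof of Lemma~\ref{lemma:enough-blocks}), the $f{+}1$ votes for the leader of wave $i$ are in the \emph{second} round of wave $i$. The third round of wave $i$ \emph{is} the first round of wave $i{+}1$.

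You put the voters in the first round $r'$ of wave $i{+}1$. Since leaders sit in the first round of their wave, the leader of wave $i{+}1$ then lies in round $r'$ itself, alongside the voters. Step 1 only reaches round $r'{+}1$, so it gives that leader no path to $b$. Your remark that this case ``cannot occur'' contradicts your own definition of $r'$.

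Take $r'$ to be the second round of wave $i$ instead. Then round $r'{+}1$ is exactly the first round of wave $i{+}1$, every later leader lies in a round $\ge r'{+}1$, and the argument is the paper's. Note also that an $(f{+}1)$-set and a $(2f{+}1)$-set of authors are only forced to intersect when $(f{+}1)+(2f{+}1)>n$, i.e.\ $n=3f{+}1$ as in Tusk, or with parent quorums of size $n-f$. The justification ``since $n\ge 3f{+}1$'' points the wrong way.

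Step 3 goes beyond the paper, whose proof tacitly takes ``commit'' to mean the direct $f{+}1$-vote rule. However, Step 3 does not yet prove anything. You assert that a leader $b'$ strictly between wave $i$ and $\ell$'s wave ``was itself committed through the backward path-walk from $\ell$''. This is unjustified when $b'$ was committed by a different honest validator, possibly through a walk from a different anchor. You cannot fall back on agreement about the committed sequence, since Lemma~\ref{lemma:safety} is derived from this lemma. Your proposed invariant is also just the lemma restated across validators, so ``prove it first'' is circular.

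The missing idea is that path existence between certified blocks is intrinsic, because causal histories are fixed by certificates. Hence the backward walk from a directly committed leader $\ell$ defines a validator-independent chain $C(\ell)$, and every validator's actual walk from $\ell$ is a truncation of it. Suppose $\ell_0$ is directly committed in an earlier wave $w_0$. By the corrected Steps 1--2, every pointer of $C(\ell)$ in a wave above $w_0$ has a path to $\ell_0$, so $C(\ell)$ commits $\ell_0$ and coincides with $C(\ell_0)$ from there on. The chains are therefore nested. All leaders committed by any honest validator lie on one chain whose consecutive members are joined by paths, and transitivity gives the lemma for indirect commits too.
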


\begin{proof}
  Honest validators commit block $b$ in wave $i$ only if it has $f{+}1$ parties
  in the wave with paths to $b$. All blocks in the first round of wave
  $i+1$ have $2f{+}1$ paths to previous blocks. By Quorum intersection, at
  least one of the paths of the leader block of the wave starting from
  this round will go to $b$. By induction, we can show that every block in
  every round after wave $i$ has a path to $b$. \qed{}
\end{proof}

\begin{lemma}
  \label{lemma:safety} If $b$ is the leader block of wave $i$ and $b'$ is the
  leader block of wave $i'$, if an honest validator commits $b$ before
  $b'$, then no other honest validator commits $b'$ without committing first
  $b$.
\end{lemma}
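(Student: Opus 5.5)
We argue by contradiction and reduce everything to Lemma~\ref{lemma:path} together with the deterministic way Tusk commits leaders. Suppose an honest validator $v$ commits $b$ (leader of wave $i$) before $b'$ (leader of wave $i'$), and another honest validator $v'$ commits $b'$ without having committed $b$ first. First we pin down the order of the waves. In Tusk an honest validator commits leaders in increasing wave order: when it directly commits a leader, it walks back through earlier waves and commits, in wave order, every earlier leader it can reach by a path. Hence $v$ committing $b$ before $b'$ gives $i < i'$. It also suffices to treat the case where $v'$ commits $b'$ \emph{directly}, i.e.\ $b'$ gathers $f{+}1$ supporting references in its voting round. An indirect commit of $b'$ happens only while processing some later directly committed leader $b''$ of a wave $i'' > i'$, and the same argument then applies to $b''$.

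Next we invoke Lemma~\ref{lemma:path}. Since $v$ is honest and committed $b$ as leader of wave $i$, and $b'$ is a leader committed by an honest validator in wave $i' > i$, Lemma~\ref{lemma:path} yields a path from $b'$ to $b$ in the DAG. Reliable broadcast and certificates of availability make the DAG consistent across honest validators: any block referenced through a certificate is the unique certified block for its author and round, and its causal history is retrievable. So this path is present in $v'$'s local view once $v'$ holds $b'$ and its causal history, which it must hold before committing $b'$.

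Finally, we use the ordering rule. When $v'$ commits $b'$, it considers the leaders of waves $i'-1, i'-2, \dots$ down to its last committed wave. For each, it commits the leader if there is a path from the most recently accepted leader to it, and it outputs these leaders oldest first, before $b'$. If $v'$ had not committed $b$ before, then wave $i$ lies above its last committed wave. The wave-$i$ leader is selected by the shared coin, so $v'$ identifies the same block $b$ (coin agreement). Because paths compose, the chain of intermediate leaders accepted between $i'$ and $i$ each has a path to $b$; applying Lemma~\ref{lemma:path} to each intermediate committed leader ensures that the walk back does not skip $b$. Therefore $v'$ commits $b$ before $b'$, a contradiction.

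The main obstacle is exactly this last step. We must show that the backward walk, which follows paths only from the most recently accepted leader rather than directly from $b'$, still reaches $b$. We would handle it by induction on the intermediate waves. Each intermediate leader that $v'$ accepts is itself committed by an honest validator (namely $v'$) in a wave after $i$, so Lemma~\ref{lemma:path} applies to it and gives a path to $b$. Leaders that the walk skips do not break the chain, since the walk's reference point stays at the last accepted leader, which still has a path to $b$.
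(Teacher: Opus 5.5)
Your proof follows the same route as the paper's. Lemma~\ref{lemma:path} supplies a path from $b'$ to $b$, which contradicts $v'$ committing $b'$ first. Your backward-walk induction also justifies the step the paper only asserts, namely that committing $b'$ before $b$ rules out a path from $b'$ to $b$.

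One point needs tightening. Your claim that wave $i$ lies above $v'$'s last committed wave silently assumes $v'$ did not already skip wave $i$ during an earlier walk. Fix this by running your argument on the first walk of $v'$ that crosses wave $i$, i.e., take $b'$ to be the earliest-committed leader of a wave after $i$. With that choice the rest of your argument goes through unchanged.
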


\begin{proof}
  If an honest validator committed $b$ before $b'$, then there is no path
  from $b$ to $b'$. Assume by absurd that a different validator committed $b
  '$ before $b$ there would also be no path between $b'$ and $b$. But by
  Lemma~\ref{lemma:path} at least one of these paths must exist, making a contradiction. \qed{}
\end{proof}

The safety property is a direct consequence of Lemma
~\ref{lemma:safety}.

\bigskip
\textbf{Liveness.} The liveness property informally says that something is always happening, so
that the system is always progressing towards new correct events. In Tusk, we
want to show that we expect new blocks to be committed in a finite amount of
time. This liveness property is enforced \textbf{only for the consensus step}
and not for the ordering phase. That concept of liveness is defined and
proven later.

\begin{lemma}
  \label{lemma:enough-blocks} In every wave $w$ there are at least $f{+}1$
  blocks in the first round that can be committed.
\end{lemma}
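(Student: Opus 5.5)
This is the standard counting argument from Tusk/DAG-Rider: a block in the first round of wave $w$ is committable if at least $f{+}1$ blocks of the second round of the wave reference it. Here we use the voting rule recalled in the background, where a leader is committed once it gathers $f{+}1$ references from the next round. The plan is to fix the first two rounds $r$ and $r{+}1$ of wave $w$ and count references from round $r{+}1$ to round $r$. The goal is to exhibit at least $f{+}1$ blocks of round $r$ that each receive at least $f{+}1$ such references.

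\textbf{Step 1: set up the counting.} Each round of the DAG contains at least $2f{+}1$ certified blocks, since a correct party only advances after collecting $2f{+}1$ certificates of the previous round. Each block of round $r{+}1$ references at least $2f{+}1$ distinct blocks of round $r$. Restrict attention to a set $S$ of exactly $2f{+}1$ blocks in round $r{+}1$, and to the blocks of round $r$ they reference.

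Without loss of generality, each block in $S$ keeps only $2f{+}1$ references, and at most $3f{+}1$ blocks exist in round $r$. This gives at least $(2f{+}1)^2$ reference edges landing on at most $n = 3f{+}1$ targets.

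\textbf{Step 2: pigeonhole.} Suppose for contradiction that at most $f$ blocks of round $r$ receive at least $f{+}1$ references from $S$. Each of those at most $f$ blocks receives at most $|S| = 2f{+}1$ references. Each of the remaining at most $2f{+}1$ blocks receives at most $f$ references. The total is therefore at most
\[ f(2f{+}1) + (2f{+}1)f = 4f^2 + 2f < 4f^2 + 4f + 1 = (2f{+}1)^2, \]
which is a contradiction. Hence at least $f{+}1$ blocks of the first round of $w$ have $f{+}1$ supporters in the second round.

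Whichever of these blocks the shared coin elects as leader is then committed once the wave's rounds are observed. This is exactly the statement of the lemma, and by the random coin it gives probability at least $(f{+}1)/n \ge 1/3$ of commit per wave.

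\textbf{Main obstacle.} The arithmetic is routine. The subtle point is pinning down which set of second-round blocks the argument refers to, because different honest validators may see different DAG views. I would handle this as in DAG-Rider: take $S$ to be the first $2f{+}1$ blocks of round $r{+}1$ in the local view of some honest validator. I would also note that reliable broadcast with Certificates of Availability makes those blocks, and their references, eventually identical in all honest views. The "can be committed" claim therefore holds for everyone, and the coin is revealed only after round $r{+}1$ blocks are fixed, so the adversary cannot steer the choice of leader away from these $f{+}1$ blocks.
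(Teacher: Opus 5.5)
Your proof is correct and is essentially the paper's argument. Both count the $(2f{+}1)^2$ references from a set $S$ of $2f{+}1$ second-round blocks onto at most $3f{+}1$ first-round blocks: the paper grants every first-round block an $f$-reference baseline and divides the surplus $f^2+3f+1$ by the per-block slack $f{+}1$, while you cast the same count as a contradiction ($2f(2f{+}1) < (2f{+}1)^2$), which is slightly cleaner, and your remarks on differing local views go beyond anything the paper's proof addresses.
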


\begin{proof}
  Let's consider any set $S$ of blocks in the \textbf{second} round of
  wave $w$. There are then at least $(2f{+}1)^{2}$ connections to blocks in
  the first round. There also at most $3f{+}1$ blocks in the first round. For
  each of these blocks to be not allowed for commit, they need to have at
  most $f$ connections to blocks in the next round. Assuming the worst case
  where every block in the first round has exactly $f$ connections to blocks
  in $S$, the remaining number of links is still
  $(2f{+}1)^{2}- f(3f{+}1) = f^{2}+ 3f + 1$. Each block in $S$ has at most
  $2f{+}1$ connections to block in the first round. This means there are
  at least $\frac{f^{2}+ 3f + 1}{2f{+}1 - f}$ blocks with more than
  $f{+}1$ connections to the next one. \qed{}
\end{proof}

\begin{lemma}
  \label{lemma:7-rounds} In expectation, a leader is committed every 7
  rounds in a network with asynchronous adversary.
\end{lemma}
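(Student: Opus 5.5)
Since Lemma~\ref{lemma:enough-blocks} gives at least $f{+}1$ committable first-round blocks per wave, the proof reduces to a probabilistic argument: the leader chosen by the shared coin is committable with constant probability. We then convert this per-wave probability into an expected number of rounds.

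\textbf{Step 1 (per-wave success probability).} In Tusk, the leader of wave $w$ is chosen by a shared random coin that is revealed only after the wave's blocks are fixed. An asynchronous adversary therefore cannot adapt the DAG structure to the coin outcome. The coin picks one of the $n = 3f{+}1$ first-round blocks uniformly at random. By Lemma~\ref{lemma:enough-blocks}, at least $f{+}1$ of them have at least $f{+}1$ references in the next round, so the chosen leader satisfies the commit rule with probability
\[
  p \;\ge\; \frac{f+1}{3f+1} \;>\; \frac{1}{3}.
\]

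\textbf{Step 2 (independence across waves).} The coins of different waves are independent, and each wave's adversarial schedule is fixed before its coin is revealed. Hence the number of waves until the first successful commit is stochastically dominated by a geometric random variable with success probability $p$. Its expectation is at most $1/p < 3$ waves.

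\textbf{Step 3 (converting waves to rounds).} Following Tusk, waves are pipelined: a wave spans three rounds, and its last round is the first round of the next wave, so consecutive waves start two rounds apart. A successful wave, including the round in which its coin is revealed, is completed three rounds after it starts. Each failed wave costs two rounds. The expected number of rounds until a commit is therefore about
\[
  2\Bigl(\tfrac{1}{p}-1\Bigr) + 3 \;<\; 2\cdot 2 + 3 \;=\; 7,
\]
which gives the claimed bound of 7 rounds in expectation. A leader that is not directly committed is still ordered later through Lemma~\ref{lemma:path}, but this only helps, so the bound stands.

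\textbf{Main obstacle.} The delicate point is Step 1. We must justify that the adversary cannot correlate which blocks receive $f{+}1$ references with the coin, which relies on the coin's unpredictability before the threshold-signature shares are released. We must also justify that Lemma~\ref{lemma:enough-blocks} applies to the view of every honest validator. If the per-validator view is problematic, the fallback is to apply the lemma to the first $2f{+}1$ second-round blocks any honest validator sees, as in the Tusk paper. The exact constant 7 then depends only on the round accounting in Step 3, which I would check against Tusk's pipelining.
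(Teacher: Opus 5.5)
Your proposal is correct and follows essentially the same argument as the paper. Lemma~\ref{lemma:enough-blocks} gives at least $f{+}1$ committable leaders out of $3f{+}1$, so each wave succeeds with probability at least $1/3$ and at most $3$ waves are expected; because each wave's last round is the next wave's first round, the paper's $3\times 3=9$ rounds become $3+2+2=7$, which is exactly your $2(1/p-1)+3$ accounting, here done with a bit more care (explicit geometric domination and $p\ge (f{+}1)/(3f{+}1)$).
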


\begin{proof}
  By Lemma~\ref{lemma:enough-blocks}, there are at least $f{+}1$
  committable blocks for each new instance of consensus. As the random coin
  is unpredictable, there is a $\frac{1}{3}$ chance of committing a valid
  block. As consensus is run every $3$ rounds, this gives an expected
  value of $9$ rounds. However, the last round of each consensus wave is
  the same as the first round of the next wave, giving us an expectation of
  $7$. \qed{}
\end{proof}

\subsection{Fairness Properties}

This section first discusses the fairness properties that \nameol guarantees.
These are the ordering property, i.e., ordering linearizability or $\gamma$-batch-order-fairness, and the concepts of execution safety and execution liveness. \\

\textbf{Ordering Linearizability.} As \nameol{} adopts the same ordering property and underlying timestamping mechanism as Pomp\=e~\cite{zhang2020byzantine},
their proofs follow a similar logical structure.

\begin{lemma}
  \label{lemma:median} The median of $2f{+}1$ values, of which up to $f$
  can be malicious, is always both upper- and lower-bound by correct
  values.
\end{lemma}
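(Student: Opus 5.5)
We prove the statement by a counting argument on the sorted multiset of the $2f{+}1$ values. Sort them as $x_1 \le x_2 \le \dots \le x_{2f+1}$, so that the median is $m = x_{f+1}$. Exactly $f$ values lie strictly before position $f{+}1$ and exactly $f$ lie strictly after it. Let $C$ be the set of indices carrying values contributed by correct parties. Since at most $f$ values can be malicious, $|C| \ge f{+}1$.

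\textbf{Lower bound.} We show that some correct value is at most $m$. Consider the $f{+}1$ positions $\{1,\dots,f{+}1\}$. The complement $\{f{+}2,\dots,2f{+}1\}$ has only $f$ positions, so it cannot contain all $f{+}1$ correct indices. Hence some $i\in C$ satisfies $i\le f{+}1$, which gives $x_i \le x_{f+1} = m$.

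\textbf{Upper bound.} Symmetrically, $\{1,\dots,f\}$ has only $f$ positions, so some $j\in C$ satisfies $j \ge f{+}1$, which gives $x_j \ge m$. Together we obtain $\min_{c\in C} x_c \le m \le \max_{c\in C} x_c$. This is precisely the statement that the median is lower- and upper-bounded by correct values. Equivalently, $m$ lies in the interval spanned by the correct timestamps for that transaction, which is the form later needed to derive ordering linearizability (Definition~\ref{def:linearizability}).

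\textbf{Points requiring care.} The main subtlety is not the arithmetic but what the $2f{+}1$ values are. In \name{} they are the signed timestamp pairs committed in a sealed batch, and each comes from a \emph{distinct} party. So "up to $f$ malicious" really bounds the faulty contributions by $f$, since signatures prevent a Byzantine party from forging a correct party's value. I would state this explicitly, citing the unforgeable signature assumption and the sealing rule requiring $2f{+}1$ vectors from different parties. Ties among equal values need no special treatment, because the argument uses only positions in the sorted order. Finally, if fewer than $f$ values are faulty, the same pigeonhole argument goes through a fortiori, since $|C|$ only grows.
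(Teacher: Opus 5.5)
Your proof is correct and rests on the same idea as the paper's: the median has $f$ values on each side, so with at most $f$ faulty values each closed side (median included) must contain a correct value. Your positional pigeonhole is a case-free version of the paper's three-way split on where the malicious values fall, and it explicitly covers the case where the median itself is malicious, which the paper's case enumeration does not list separately.
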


\begin{proof}
  For a set of $2f{+}1$ values, we can see the median as the values
  sandwiched between $f$ smaller (or equal) values and $f$ higher (or equal)
  values. The $f$ malicious values can be spread in three ways: all in the
  smaller set, all in the higher set or spread between them. In the first
  2 cases, we have that the median itself is a correct value and that the other
  set is fully composed of correct values, bounding the median between
  honest numbers. In the last case, we have that each set contains at
  least one correct element, which is smaller or equal for the left-side and
  higher or equal on the right-side, again bounding the median between
  these two correct entries. \qed{}
\end{proof}

\begin{lemma}
  \label{lemma:linearizability} Transactions ordered by the median of $2f{+}
  1$ collected timestamps respect ordering linearizability (Definition~\ref{def:linearizability}).
\end{lemma}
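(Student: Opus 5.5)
We take two transactions $tx_1,tx_2$ satisfying the hypothesis of Definition~\ref{def:linearizability}. Let $h_1$ be the highest timestamp any correct party assigns to $tx_1$, and let $\ell_2$ be the lowest timestamp any correct party assigns to $tx_2$, so that $h_1 < \ell_2$. The plan is to show that the final timestamps satisfy $tx_1.\mathit{final\_ts} < tx_2.\mathit{final\_ts}$. Since every correct party deterministically sorts by $\mathit{final\_ts}$ in Alg.~\ref{alg:exec}, this also gives the same relative order at every correct party.

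\textbf{Step 1: agreement on the medians.} Every correct party uses the same $2f{+}1$ signed timestamp pairs for each transaction. These are the pairs contained in the first committed sealed batch containing that transaction. By Lemma~\ref{lemma:safety}, all correct parties commit the same sequence of blocks, so they identify the same first batch and hence compute the same $\mathit{final\_ts}$. Because the pairs are signed, a Byzantine proposer cannot forge a correct party's timestamp. Any value attributed to a correct party is therefore genuinely that party's timestamp, and at most $f$ of the $2f{+}1$ values come from faulty parties.

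\textbf{Step 2: bounding each median.} Apply Lemma~\ref{lemma:median} to the $2f{+}1$ values of $tx_1$. Its median is upper-bounded by some correct value, which is at most $h_1$. Applying the same lemma to $tx_2$, its median is lower-bounded by some correct value, which is at least $\ell_2$. Chaining these gives $\mathit{final\_ts}(tx_1)\le h_1<\ell_2\le \mathit{final\_ts}(tx_2)$. The strict inequality means the deterministic tie-breaking rule is never invoked for this pair.

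\textbf{Step 3: the execution order matches.} It remains to check that the execution procedure respects the order of $\mathit{final\_ts}$ across different execution batches, not only within one \emph{ready} set. The worry is that $tx_2$ is executed in an earlier invocation, before $tx_1$ has even been committed. This is the main obstacle, and it depends on the threshold of Alg.~\ref{alg:threshold} being safe. Suppose $tx_2$ becomes ready with threshold $\tau \ge \mathit{final\_ts}(tx_2)$. The plan is to argue that every transaction whose median could fall at or below $\tau$ already has all its timestamps in the Logical Table. This is because $\tau$ is the median of the $2f{+}1$ smallest row heads, and each row head certifies that all earlier timestamps of that row have been committed. If $tx_1$ were uncommitted, then at least $f{+}1$ correct parties' rows would still have $tx_1$ as a hole below their heads. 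Those parties timestamped $tx_1$ with values $\le h_1 < \ell_2 \le \tau$. Combining this with the head-prefix property yields a contradiction. If this cross-batch argument is delegated to a separate execution-safety lemma later in the paper, the present lemma reduces to Steps 1 and 2, and we would state it as concerning the order defined by $\mathit{final\_ts}$.
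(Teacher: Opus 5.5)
Your Step~2 is exactly the paper's proof. The paper derives $t_1 \le tm_1 < tm_2 \le t_2$ from Lemma~\ref{lemma:median} and then simply asserts that correct parties ``order, and so execute'' $tx_1$ before $tx_2$. Your Step~1 spells out the agreement on medians that the paper takes for granted. The cross-invocation issue you raise in Step~3 is left by the paper to a separate result, Theorem~\ref{theor:threshold-valid}, which claims that the threshold never exceeds the median of any still-uncommitted transaction.

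Your Step~3 takes a genuinely different and more robust route. You use the linearizability hypothesis itself, bounding the heads of $f{+}1$ correct parties that stamp $tx_1$ by $h_1$. Then $\tau$, the $(f{+}1)$-th smallest head, satisfies $\tau \le h_1 < \ell_2 \le \mathit{final\_ts}(tx_2) \le \tau$. The paper's general claim instead needs every sealed stamp to upper-bound its party's head, and a Byzantine party breaks this just by reporting a low stamp. For example, take $n=4$ and an uncommitted $tx$ sealed with stamps $0$ (Byzantine), $10$ and $20$. Let the remaining correct party stamp it at $30$, the correct heads be $9,19,29$, and the Byzantine head be at least $19$. The median is $10$ but $\tau=19$. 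This does no harm to linearizability precisely because your bound goes through the maximal correct stamp ($30$ here) and uses correct heads only.

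To make Step~3 complete, state the facts you use implicitly:
\begin{itemize}
\item A correct party $p$'s row head always stays strictly below its (possibly not yet assigned) counter for an uncommitted $tx_1$. Row entries appear only for committed transactions, $p$'s own fillers report the head of a row containing only committed entries, and no one else can sign fillers for $p$.
\item By clock monotonicity, the head's physical value is then at most $p$'s stamp for $tx_1$, hence at most $h_1$.
\item The $f{+}1$ such correct parties exist because $tx_1$'s eventual sealed batch carries $2f{+}1$ pairs.
\end{itemize}
Also fix the wording: the heads lie below $tx_1$'s hole, not the hole below the heads.
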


\begin{proof}
  Let us take two transactions ${tx}_{1}$ and ${tx}_{2}$ with the respective
  median timestamps $t_{1}$ and $t_{2}$. If the maximum timestamp from a
  correct party for $tx_{1}$ is $tm_{1}$ and the minimum timestamp from a
  correct party for $tx_{2}$ is $tm_{2}$, we know that $t_{1}<= tm_{1}$ and
  $t_{2}>= tm_{2}$. Hence, if $tm_{1}< tm_{2}$, the first condition of
  Definition~\ref{def:linearizability}, we know that $t_{1}\le tm_{1}< tm_{2}
  \le t_{2}$ and so $t_{1}< t_{2}$, which will make correct parties order, and
  so execute, $tx_{1}$ before $tx_{2}$. \qed{}
\end{proof}

\bigskip
\textbf{$\gamma$-batch-order fairness.} \namebof{}'s graph layer mirrors that of FairDAG-RL~\cite{kang2026fairdag}, so its $\gamma$-batch-order-fairness theorem transfers once we establish that our input local orderings and our admission rule coincide with theirs.
\begin{lemma}[Local-ordering soundness]
\label{lemma:lo-sound}
For every correct party $p$ and round $r$, the hole-free prefix of $\mathbf{v}_p^r$ equals a prefix of the order in which $p$ first observed transactions, and each $(p, \ell)$ pair binds to exactly one transaction.
\end{lemma}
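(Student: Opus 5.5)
We will prove the two parts of the statement separately: first the binding claim, then the prefix claim. Both rely on two facts: every entry of $\mathbf{v}_p^r$ is authenticated by $p$'s signature, and the rule by which a correct party assigns logical counters (Sec.~\ref{sec:parameter_k}).

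\textbf{Binding.} An entry $\mathbf{v}_p^r(\ell)$ is written only in two ways:
\begin{itemize}
\item from one of the $2f{+}1$ timestamp pairs inside a committed sealed batch, or
\item from a hole filler carried in a committed block.
\end{itemize}
In both cases the entry carries $p$'s signature over the tuple $(\mathit{digest}, \ell, \phiclock)$; in \namebof{}, hole fillers include the digest. By unforgeability, a tuple attributed to a correct $p$ was actually produced by $p$. A correct $p$ increments its local counter right after each assignment and timestamps each transaction only once, on first sight. Hence for each $\ell$ it signs at most one tuple, so $(p,\ell)$ binds to at most one transaction. By collision resistance, that digest identifies a unique transaction.

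Every correct party processes the same committed blocks in the same order (Lemma~\ref{lemma:safety}). Therefore no two correct parties can write different transactions at position $(p,\ell)$. We will also note that \textsc{AdvanceRow} and the insertion step only fill $\emptyset$ entries, or ignore fillers that are already covered. So an entry, once set, is never overwritten.

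\textbf{Prefix.} Let $k$ be the length of the hole-free prefix, i.e.\ $\mathbf{v}_p^r(i)\neq\emptyset$ for all $i\le k$. By binding, each $\mathbf{v}_p^r(i)$ is the unique transaction to which correct $p$ assigned counter $i$. Since $p$'s counter starts at its initial value and increases by exactly one per newly seen transaction, the assignment $i \mapsto \mathbf{v}_p^r(i)$ is exactly the $i$-th transaction that $p$ first observed. It follows that $(\mathbf{v}_p^r(1),\dots,\mathbf{v}_p^r(k))$ is the length-$k$ prefix of $p$'s first-observation order.

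\textbf{Main obstacle.} The delicate step is the hole-filler continuity optimisation. A party may skip publishing fillers for counters whose transactions are already finalised. It may also receive fillers under a $(\textit{retry\_count},\textit{seqnum})$ sequence that contains gaps. We must show the optimisation never marks a position as non-$\emptyset$ without a signed binding, so that the prefix is never "closed" over a genuine hole with an unbound or wrong entry.

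We plan to argue this as follows. The optimisation is applied only when no $\textit{seqnum}$ is skipped and $\textit{retry\_count}$ is monotone. In that case the positions it treats as filled are exactly those whose transactions were already placed in $\mathbf{v}_p$ via committed, signed pairs, which is why they could be finalised. Otherwise the optimisation is disabled, and only explicitly signed fillers advance the row. In either case every position in the hole-free prefix carries a signed binding from $p$, which reduces back to the two arguments above.
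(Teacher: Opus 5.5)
Your binding and prefix arguments match the paper's proof, with more detail. The paper only notes that a correct $p$ assigns $\lc$ on first sight and increments it monotonically, so $\ell \mapsto \tx$ is injective and follows $p$'s reception order. It adds that fillers are $p$-signed $(\ell,\tx)$ records, so no entry can be injected on $p$'s behalf. Your additions (signatures on the sealed-batch pairs, collision resistance of digests) are fine. The cross-party agreement remark belongs to Lemma~\ref{lemma:lo-agree} and is not needed here.

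The step that fails is your treatment of the filler-omission optimisation, which the paper's proof does not address. You claim the positions it treats as filled were already placed in $\mathbf{v}_p$ via committed signed pairs, ``which is why they could be finalised''. This is backwards on two counts.
\begin{itemize}
\item A correct $p$ only publishes fillers for counters whose pair was \emph{not} among the $2f{+}1$ in the committed sealed batch. The counters it may omit are therefore, by definition, positions with no committed signed binding from $p$.
\item Finalisation does not need $p$'s entry. A transaction can be finalised while covered by as few as $n(1-\gamma)+f+1$ prefixes, since every SCC up to the cutoff is emitted (Alg.~\ref{alg:processsubdag}). Even at $2f{+}1$, nothing forces $p$ to be among them.
\end{itemize}
So if the optimisation made such positions non-$\emptyset$, your conclusion that every position in the hole-free prefix carries a signed binding from $p$ would be false.

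There are two ways to repair this. The first is to drop the paragraph and assume, as the paper's proof implicitly does, that the optimisation never writes an entry into $\mathbf{v}_p$. Then every non-$\emptyset$ entry comes from a $p$-signed record, and your first two arguments suffice. The second applies if skipped counters are meant to count as filled. You would then authenticate the skip through $p$'s signed block together with the $(\textit{retry\_count},\textit{seqnum})$ check. You would also have to weaken the conclusion to: the \emph{unfinalised} entries of the prefix appear in $p$'s reception order.
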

\begin{proof}
A correct party $p$ runs Alg.~\ref{alg:timestamp}, where $\lc$ is incremented monotonically and a counter is assigned only on first sight, so the mapping $\ell \mapsto tx$ at $p$ is injective and the resulting sequence is by construction $p$'s reception order. Hole fillers carry $p$-signed $(\ell, tx)$ records (Alg.~\ref{alg:hf}), so no party can inject a counter on $p$'s behalf, and any $(p,\ell)$ entry that reaches the logical table is exactly the one $p$ assigned. \qed{}
\end{proof}
\begin{lemma}[Logical-table agreement]
\label{lemma:lo-agree}
At every round $r$, all correct parties observe identical logical tables.
\end{lemma}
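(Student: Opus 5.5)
The plan is to show that the logical table at round $r$ is a deterministic function of the committed prefix of the DAG up to round $r$, and then invoke consensus safety. By Lemma~\ref{lemma:safety} (and Lemma~\ref{lemma:path}), all correct parties commit the same sequence of leader blocks. Each leader's causal history is delivered by the same deterministic traversal, so correct parties that have reached the commit of round $r$ hold identical sequences of committed blocks and sealed batches. The phrase ``at every round $r$'' will be read as ``after processing the same committed prefix associated with round $r$.'' This reading is necessary, because correct parties process commits at different wall-clock times.

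The key step is to check that the table-update procedure depends only on committed data. I will split this into two cases.

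\emph{(i) Entries from sealed batches.} The entries are inserted by the update loop of Alg.~\ref{alg:exec}. The $2f{+}1$ signed timestamp pairs are stored in the batch itself. Because a sealed batch carries $2f{+}1$ storage acknowledgments, at least $f{+}1$ correct parties hold it, so every correct party can retrieve the same content by its hash. Collision resistance then guarantees that this content is unique.

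\emph{(ii) Hole fillers.} These are part of the block metadata $B.\mathit{fillers}$, which is fixed by the block's certificate. They are applied by \textsc{AdvanceRow} (Alg.~\ref{alg:hf}). For \namebof{}, the continuity and optimization rule depends only on the $(\mathit{retry\_count},\mathit{seqnum})$ values seen in previously committed blocks. That history is itself part of the agreed committed prefix.

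Since insertion into a list sorted by logical counter, and advancing a head, are deterministic operations, induction over the committed block sequence shows the resulting tables are identical. Local (uncommitted) state, such as each party's own stored timestamps, must never enter the table. I would point out that a party's own-row catch-up in \S\ref{sec:hole-fillers} affects only the fillers it later publishes, not the table's committed contents.

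The main obstacle is conflicting entries: a Byzantine party $p$ could sign two different pairs, or two different digests, for the same counter $\ell$, and these could appear in different committed batches. This might seem to make the table order-dependent, but determinism still holds. All correct parties see the conflicting records in the same committed order, so a fixed rule (keep the first one committed, and flag $p$ as an equivocator) yields the same result everywhere. For correct $p$, Lemma~\ref{lemma:lo-sound} already rules out conflicts. I would therefore state this tie-breaking rule explicitly, justify it by the remark in \S\ref{sec:parameter_k} that equivocating pairs are detected, and conclude the lemma.
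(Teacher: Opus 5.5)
Your proposal is correct and follows the same route as the paper: consensus safety (Lemma~\ref{lemma:safety}) gives every correct party the same committed block sequence, and the logical table is a deterministic function of the sealed batches and hole fillers in that sequence (Alg.~\ref{alg:exec}), so identical input yields identical tables. Your additional points are refinements that the paper's one-line argument leaves implicit, not a different approach: data availability of sealed batches, an explicit deterministic rule for equivocating pairs, and keeping a party's locally recovered own-row values out of the agreed table (a point the text of \S\ref{sec:hole-fillers} actually blurs).
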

\begin{proof}
By Lemma~\ref{lemma:safety}, correct parties commit identical sequences of leader blocks and therefore identical causal sub-DAGs. The logical table is a deterministic function of the committed blocks' sealed batches and hole fillers (Alg.~\ref{alg:exec}), and identical input yields identical state. \qed{}
\end{proof}
\begin{theorem}[\gbof]
\label{theor:bof}
\namebof{} satisfies $\gamma$-batch-order-fairness.
\end{theorem}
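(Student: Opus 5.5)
The plan is a reduction to FairDAG-RL's $\gamma$-batch-order-fairness theorem~\cite{kang2026fairdag}, which the paper explicitly sets up with Lemma~\ref{lemma:lo-sound} and Lemma~\ref{lemma:lo-agree}. FairDAG-RL proves \gbof{} for a graph layer fed with, at each commit, a set of local orderings $\mathcal{L}$. These orderings must satisfy three conditions: each correct party's ordering is a prefix of its true reception order; the orderings are FIFO-monotone, meaning a later round only extends a prefix; and all correct parties evaluate the same $\mathcal{L}$ with the same admission and edge thresholds $n(1-\gamma)+f+1$. I would therefore (i) show that \namebof{}'s inputs $\mathcal{L}(F)$ meet these conditions, (ii) show that the admission and edge rules are literally theirs, and (iii) handle separately the one deviation, namely live-count re-evaluation at finalization.

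For (i), take a correct party $p$. Lemma~\ref{lemma:lo-sound} gives that the hole-free prefix $\mathcal{L}_p^r$ is a prefix of $p$'s reception order, with each $(p,\ell)$ bound to a unique transaction. The lockstep frontier $F_p$ only advances over contiguous counters, so $\mathcal{L}_p(0,F_p)$ is monotone in $r$. Hence if $tx\notin\mathcal{L}_p^r$, $p$ can never later insert some $tx'$ before $tx$; this is exactly the FIFO property used in FairDAG-RL's proof. The hole-filler continuity and $(\textit{retry\_count},\textit{seqnum})$ rule must be checked not to break this. The optimization is applied only when no seqnum is skipped, so omitted counters correspond to already-finalized transactions and cannot affect unfinalized pairs. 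For Byzantine parties, signatures bind each $(p,\ell)$ to at most one transaction, so they contribute at most $f$ arbitrary but consistent orderings, which matches FairDAG-RL's adversary. Lemma~\ref{lemma:lo-agree} then gives agreement: every correct party computes the same $\mathcal{G}_r$ and the same finalization decision. Therefore all correct parties output the same sequence.

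For the core fairness claim, suppose a $\gamma$ fraction of correct parties receive $tx_1$ before $tx_2$. By the time both are admitted and the graph is a tournament, the counting argument applies: at most $n-f$ correct prefixes exist, of which at most $(1-\gamma)n$ disagree, and Byzantine parties can add at most $f$ reversed votes. So $\text{Weight}(tx_2,tx_1)<n(1-\gamma)+f+1$, while the requirement $n>\frac{f(2\gamma+1)}{2\gamma-1}$ makes the forward weight reach the threshold once enough correct prefixes pass both transactions. The forward edge is therefore the only one added, or the pair lies in a common SCC, and the topological order of SCCs puts $tx_1$ no later than $tx_2$. The reverse edge can never be added, since the weight bound on it holds over the whole lifetime of the graph.

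The main obstacle is (iii). FairDAG-RL freezes the solid/shaded classification at admission, whereas \namebof{} recomputes the cutoff using the live count $tx\in_{2f+1}\mathcal{L}$ and defers later SCCs to the next graph. I would argue that deferral never violates fairness. A deferred transaction $tx_2$ with $tx_1\to tx_2$ cannot be emitted before $tx_1$: $tx_1$ is either in an earlier finalized SCC or is itself deferred, because finalization cuts a prefix of the topological order. Conversely, any transaction $tx_3$ not yet admitted that should precede an emitted one would need weight above threshold over it. That would require $tx_3$ to appear in at least $2f+1$ orderings before the cutoff transaction, and hence in enough orderings to be admitted already, which contradicts its absence. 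If this live-count variant resists a clean reduction, I would fall back to the weaker, fully transferable claim: finalizing at the admission-time cutoff is exactly FairDAG-RL. The live count only moves the cutoff later, so the finalized set is still a prefix of the topological order that FairDAG-RL would eventually produce.
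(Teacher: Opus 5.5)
Your route is the paper's: Lemmas~\ref{lemma:lo-sound} and~\ref{lemma:lo-agree} give identical FIFO-consistent inputs, the admission and edge rules coincide with FairDAG-RL's, and everything hinges on the one deviation, live solidity at finalisation. Steps (i) and (ii) match the paper's use of these lemmas, but step (iii) has a genuine gap.

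As written, your counting does not force $tx_3$ into $2f+1$ orderings. At best it forces $tx_3$ past the admission threshold by the time of finalisation. Even granting that, being admitted is not the same as being in $\mathcal{G}_r$. While $\mathcal{G}_r$ waits to become a tournament, $tx_3$ can be admitted to a later graph $\mathcal{G}_{r'}$. Since graphs are finalised in round order, there is no contradiction.

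Here is a concrete run. Take $n=4$, $f=1$, $\gamma=1$, so the admission and edge threshold is $2$ and the solid threshold is $3$. Honest $p_1,p_2,p_3$ all receive $tx_3,tx_a,tx_s$ in this order, and a Byzantine $p_4$ signs counters for $tx_s,tx_a$ only.
\begin{itemize}
\item At commit $r$, only the rows of $p_1$ and $p_4$ are contiguous past these transactions, so the counts are $1,2,2$. Hence $tx_a,tx_s$ enter $\mathcal{G}_r$ with weight $1$ in each direction (no edge, not a tournament), and $tx_3$ is not admitted.
\item At commit $r'$, the row of $p_2$ fills. Now $tx_3$ reaches count $2$ and is admitted to $\mathcal{G}_{r'}$, and the edge $tx_a\to tx_s$ appears, so $\mathcal{G}_r$ becomes a tournament.
\item $tx_s$ now has live count $3=2f+1$, so $tx_a,tx_s$ are emitted before $\mathcal{G}_{r'}$ is examined. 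Thus $tx_3$ is ordered after $tx_s$, although every honest party received it first.
\end{itemize}

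Your fallback fails on the same run. FairDAG-RL froze $tx_s$ as shaded (count $2$ at admission), so it defers $tx_a,tx_s$, re-classifies them together with $tx_3$, and outputs $tx_3,tx_a,tx_s$. Its eventual order does not extend the topological order of the current $\mathcal{G}_r$, because deferred SCCs are re-sorted together with newly admitted transactions.

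The paper's own proof closes this step with essentially your fallback: every FairDAG-RL finalisation is also produced by \namebof{}. That is the liveness direction. It shows the extra finalisations happen, not that they are fair, so it does not supply the missing step either.

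What the Themis-style cutoff argument actually needs is a same-snapshot invariant: at the moment the cutoff transaction is counted as solid, every transaction whose count has reached $n(1-\gamma)+f+1$ already lies in $\mathcal{G}_r$ or an earlier graph. FairDAG-RL obtains this by evaluating solidity at the admission snapshot. With live counts, the algorithm as written does not maintain it. You must either argue safety with the frozen classification, or change the finalisation rule so that the invariant holds.
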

\begin{proof}
Suppose at least $\gamma(n-f)$ correct parties received $tx$ before $tx'$. By Lemma~\ref{lemma:lo-sound}, this is reflected in the local orderings $\mathcal{L}_p$ each correct $p$ contributes. By Lemma~\ref{lemma:lo-agree}, every correct party runs the graph builder on the same logical table $\mathcal{L}$.

On this shared input, \namebof{} admits $tx$ to graph $\mathcal{G}_r$ when its count reaches $n(1-\gamma) + f + 1$ and adds the edge $tx \to tx'$ once $\text{Weight}_\mathcal{L}(tx, tx') \geq n(1-\gamma)+f+1$. Both rules and the per-commit graph structure coincide with FairDAG-RL's graph construction approach. Finalisation emits an SCC only once it contains a transaction satisfying $tx \in_{2f+1} \mathcal{L}$, i.e., passed the solid threshold, after which it processes SCCs in topological order of the condensation and defers transactions in later SCCs to subsequent rounds, which is FairDAG-RL's \textsc{OrderFinalization} rule~\cite{kang2026fairdag}.

\namebof{} differs from FairDAG-RL only in that we evaluate the solid predicate against the live $\textit{count}(tx)$ at finalisation rather than against the type frozen at admission. Since count grows monotonically, this admits exactly the FairDAG-RL finalisations plus those where a transaction has reached $\in_{2f+1} \mathcal{L}$ between admission and tournament completion, so any ordering FairDAG-RL would produce on $\mathcal{L}$ is also produced by \namebof{}. FairDAG-RL's Theorem 8.18~\cite{kang2026fairdag} applies, yielding $\gamma$-batch-order-fairness. \qed{}
\end{proof}

\subsection{Safety and Liveness of Execution}

\textbf{Safety.}
We now prove that the threshold computed through the Logical Table is always
safe, which means that it is always lower than the assigned
timestamp of any transaction that may still possibly be committed.

\begin{lemma}
  \label{lemma:threshold-smaller-hole} The assigned timestamp $t$ of a transaction
  $tx$ is not greater than the threshold value that would be calculated if
  $tx$ is the first hole for every party.
\end{lemma}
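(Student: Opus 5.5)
The plan is to read the statement as a pure order-theoretic claim about medians. Assume $tx$ is the first hole in every party's row of the Logical Table. Then for each party $p_j$, the head of $L_i[p_j]$ is the timestamp pair $p_j$ assigned to $tx$ itself, or, for parties that never timestamped $tx$, a pair whose physical component comes no earlier than when $tx$ would have been seen. Either way, every head value $h_j$ is at least the physical timestamp that $p_j$ gives (or would give) to $tx$. I will therefore compare, party by party, the $3f{+}1$ head values used by \proc{Threshold} with the timestamps $\phiclock_j(tx)$ making up $tx$'s assigned timestamp $t$.

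First I identify $t$: it is the median of the $2f{+}1$ physical timestamps sealed with $tx$. Next, I use that local counters and physical clocks of a correct party increase together: within a correct party's row, a larger logical counter carries a physical value no smaller. Since $tx$ is the first hole, the relevant head entry is $tx$'s own counter slot, or its successor if the row was advanced. This gives $h_j \ge \phiclock_j(tx)$ for each correct $p_j$.

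Then I apply Lemma~\ref{lemma:median} together with monotonicity of order statistics. The threshold takes the $2f{+}1$ smallest heads and their median, which is the $(f{+}1)$-th smallest head overall. To show $t$ is at most this value, I need $f{+}1$ of $tx$'s sealed timestamps lying below it. Among the $2f{+}1$ smallest heads, at least $f{+}1$ come from correct parties. By the previous step each of these heads upper-bounds that party's timestamp for $tx$, and by Lemma~\ref{lemma:median} the median $t$ is bounded above by correct values. A counting argument then shows that at least $f{+}1$ of the values entering $t$ do not exceed the $(f{+}1)$-th smallest head, so $t$ is at most the threshold.

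The main obstacle is the mismatch between the two sets of parties: the $2f{+}1$ parties whose timestamps were sealed with $tx$ are not necessarily those whose heads are smallest, and Byzantine heads can be arbitrarily small. I would try to resolve this by working with correct parties only. In the set of the $2f{+}1$ smallest heads, the top $f{+}1$ include a correct head no larger than the threshold, and from there I bound $t$ using Lemma~\ref{lemma:median}'s upper bound by a correct value. If that does not close, I would fall back to reading the lemma as comparing two medians taken over the same multiset, namely the heads equal to $tx$'s own pairs.
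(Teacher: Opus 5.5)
There is a genuine gap, and it concerns the direction of the inequality. You aim to show $t \le \tau$, but this is false even when all parties are correct. Take $f=1$, $n=4$, with physical timestamps $1,2,3,4$ for $tx$, and suppose the proposer sealed the replies carrying $2,3,4$. Any $2f{+}1$ replies may be sealed, so $t=3$. Let every head equal that party's timestamp for $tx$, a case your bound $h_j \ge \phiclock_j(tx)$ allows. The threshold is then the median of $\{1,2,3\}$, i.e.\ $\tau = 2 < t$. Only one sealed value is at most the $(f{+}1)$-th smallest head, whereas your counting step needs $f{+}1=2$.

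Lemma~\ref{lemma:median} cannot repair this. It bounds $t$ above by \emph{some} correct party's value, but nothing places that party's head among the $f{+}1$ smallest. Your intermediate claim $h_j \ge \phiclock_j(tx)$ is also reversed. By Alg.~\ref{alg:threshold} the head is the last entry before the first hole. So when $tx$'s slot is the first hole, a correct party's head is the entry preceding it, and $h_j \le \phiclock_j(tx)$. If the row had been advanced past that slot, $tx$ would not be the first hole.

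The paper's proof establishes the opposite inequality: the threshold ``cannot then be bigger than $t$''. This $\tau \le t$ is also what Theorem~\ref{theor:threshold-valid} uses, since execution safety requires that the threshold never overtake an uncommitted transaction's timestamp. The lemma's wording is inverted relative to its own proof.

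The argument is your fallback with the conclusion flipped. Identify the hypothetical heads with the $3f{+}1$ timestamps the parties assigned to $tx$. Then $\tau$ is the $(f{+}1)$-th smallest of this multiset, and $t$ is the $(f{+}1)$-th smallest of a $(2f{+}1)$-element sub-multiset. The $f{+}1$ smallest sealed values are $f{+}1$ members of the full multiset, all at most $t$, so $\tau \le t$. No correct/Byzantine case split or appeal to Lemma~\ref{lemma:median} is needed.

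If you keep the literal predecessor heads instead, the same argument works provided $h_j \le \phiclock_j(tx)$ for the $f{+}1$ parties contributing the smallest sealed values. Clock monotonicity gives this for correct parties, but it is not enforced for Byzantine rows. The paper sidesteps the issue through the identification above.
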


\begin{proof}
  The assigned timestamp $t$ is the median of the timestamp assigned to $tx$
  by $2f{+}1$ parties. The threshold computed by the Logical Table in
  case this transaction is the smallest hole for all parties is the
  median of the smallest $2f{+}1$ timestamps assigned by parties. This value
  cannot then be bigger than $t$. \qed{}
\end{proof}

\begin{theorem}
  \label{theor:threshold-valid}
  Taking an arbitrary uncommitted transaction $tx$ with assigned timestamp
  $t$, the threshold $tr$ cannot be larger than $t$.
\end{theorem}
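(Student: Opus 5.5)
The plan is to reduce the general case to Lemma~\ref{lemma:threshold-smaller-hole} by a monotonicity argument on the row heads of the Logical Table.

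Fix an uncommitted transaction $tx$ with assigned timestamp $t$, the median of the $2f{+}1$ physical timestamps $\phiclock_j(tx)$ carried by its (future) sealed batch. Since $tx$ is uncommitted, none of its timestamp pairs has entered the Logical Table through a sealed batch. The same should hold for fillers: Alg.~\ref{alg:hf} only advances a row to a head value whose predecessors the author has already committed or published. So for every party $p_j$ that gave $tx$ a pair $(\lc_j(tx),\phiclock_j(tx))$, the row $L[p_j]$ still has a hole at or before position $\lc_j(tx)$. Hence its current head $h_j$ satisfies $h_j.\lc \le \lc_j(tx)$.

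The next step is to turn this counter bound into a clock bound. For a correct party, Alg.~\ref{alg:timestamp} increments $\lc$ at the moment it reads its physical clock, so physical timestamps are monotone in the logical counter, giving $h_j.\phiclock \le \phiclock_j(tx)$. The heads are therefore pointwise no larger than the heads in the hypothetical configuration of Lemma~\ref{lemma:threshold-smaller-hole}, in which $tx$ is the first hole in every row. The threshold is the median of the $2f{+}1$ smallest head values, and this function is monotone non-decreasing in each coordinate. Pointwise smaller heads therefore give $tr \le$ the hypothetical threshold, and by Lemma~\ref{lemma:threshold-smaller-hole} that threshold is $\le t$.

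I expect the main obstacle to be the Byzantine rows. A faulty party need not keep its physical clock monotone in $\lc$, and it can publish a filler pushing its head to a large $\phiclock$. To handle this, I would not compare row by row. Instead I would use the $2f{+}1$ signers of $tx$: at least $f{+}1$ of them are correct, and their heads are bounded by their own timestamps for $tx$. By the sandwich argument of Lemma~\ref{lemma:median}, the median of the $2f{+}1$ smallest heads is then at most the $(f{+}1)$-st smallest of these correct timestamps, and I would try to show that this is bounded by $t$.

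This last bound is the delicate step. The median $t$ is only guaranteed to sit between correct values, so the cleanest route is probably a counting argument. Among the $2f{+}1$ values defining $t$, at least $f{+}1$ are $\le t$. Each such value $\phiclock_j(tx)$ dominates the head $h_j$ of its row, whether the party is correct (by monotonicity) or Byzantine (because its signed pair for $tx$ is still a hole and its head signature must precede it). This yields $f{+}1$ heads that are $\le t$, so the $(f{+}1)$-st smallest head, which is the median of the $2f{+}1$ smallest, is $\le t$.
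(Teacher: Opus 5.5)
\textbf{Gap in the final step.} Your first two paragraphs are the paper's proof. An uncommitted $tx$ leaves a hole in every row, heads sit at smaller counters and hence, by monotone timestamping, at smaller clock values, and the threshold reduces to the configuration of Lemma~\ref{lemma:threshold-smaller-hole}. The paper stops there and tacitly applies ``a hole in every entry'' and ``parties increase their timestamps monotonically'' to faulty rows as well. You rightly flag that this is unjustified, but your repair does not go through.

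In the counting step, up to $f$ of the $f{+}1$ values $\le t$ may belong to faulty parties. So you really need $h_j.\phiclock \le \phiclock_j(tx)$ for faulty $p_j$. The reason you give (the pair for $tx$ is still a hole, and the head signature must precede it) does not deliver this. A signature only binds a pair to its author; it does not make $\phiclock$ monotone in $\lc$. Moreover, \proc{AdvanceRow} accepts any signed filler with a larger counter, so a faulty author can move its head past $\lc_j(tx)$ with an arbitrary clock value. That is exactly the obstacle you named in your third paragraph, so the last step re-assumes it. The route sketched in your third paragraph also fails, since $t$ can lie strictly below the $(f{+}1)$-st smallest correct timestamp when faulty values are low.

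\textbf{The bound can fail outright.} Under the protocol as described, no argument closes this without an extra hypothesis. Take $n=4$, $f=1$, with $D$ faulty. Let the sealed batch of $tx$ carry $\phiclock_D(tx)=0$, $\phiclock_A(tx)=10$ and $\phiclock_B(tx)=20$, so $t=10$. Let the correct heads be $9$ for $A$, $15$ for $B$, and $16$ for $C$ (whose own timestamp for $tx$, if any, exceeds $16$). All of these are consistent with monotone correct clocks and with $tx$ being uncommitted. Now let $D$ publish a signed filler moving its head to clock value $1000$. The threshold is the median of the three smallest heads $\{9,15,16\}$, namely $15>t$.

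\textbf{What would make it work.} The statement needs an assumption such as: every row, faulty ones included, is clock-monotone in $\lc$, and no head passes the counter of an uncommitted transaction. Correct parties would have to enforce this when accepting pairs and fillers. Under that assumption your counting argument is correct and cleaner than the paper's. It uses only that $tr$ is the $(f{+}1)$-st smallest head value, together with the $f{+}1$ signers whose values are $\le t$. It thereby avoids the hypothetical ``first hole everywhere'' configuration, and it needs nothing about parties that never timestamped $tx$.
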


\begin{proof}
  An uncommitted transaction will always result
  in a hole for every entry in the Logical Table. According to Lemma~\ref{lemma:threshold-smaller-hole},
  the threshold resulting from the situation where this is the smallest hole
  for each party, results in the threshold being valid for future
  execution of this transaction. In case this transaction is not the first
  hole for some or all parties, we know that the threshold will then be
  committed over smaller holes, which always have smaller or equal timestamps, as parties increase their timestamps monotonically. \qed{}
\end{proof}
Theorem~\ref{theor:threshold-valid} shows that it is impossible for the threshold
to be larger than the timestamp of an uncommitted transaction, showing
Execution Safety.

\textbf{Liveness.}
Remember that transactions
are first committed through a consensus mechanism (Tusk) and execution is delayed
until the \texttt{threshold} value is high enough for safe execution.
The following lemma first shows that on average and with random network latencies/delays, the threshold is always increasing and so some transaction is always eventually executed.

\begin{lemma}
  \label{lemma:prob-block} Assuming random network latencies and that every
  (correct) party proposes one block per round, each proposed block has
  a $\frac{2}{3}$ probability of getting approved in the current round.
\end{lemma}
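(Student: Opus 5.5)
Under random network latencies, the arrival order of certificates at a correct party behaves like a uniformly random permutation of the proposers. The plan is to make this precise and then count.

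First, fix a round $r$ and a correct party $p$ that proposes block $b$. In Narwhal, a party advances to round $r{+}1$ as soon as it holds $2f{+}1$ certificates of round $r$. It references exactly those certificates as parents. So $b$ is approved in round $r$ precisely when its certificate is among the first $2f{+}1$ certificates of round $r$ delivered to the parties building round $r{+}1$.

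Second, use the random-delay assumption to treat the $n = 3f{+}1$ certificates of round $r$ (every correct party proposes one block per round) as arriving in a uniformly random order. No block is favoured, so by symmetry each certificate's position in the arrival order is uniform over $\{1,\dots,n\}$. The probability that $b$ lands among the first $2f{+}1$ is therefore $\frac{2f+1}{3f+1}$. This is at least $\frac{2}{3}$ and tends to $\frac{2}{3}$ as $f$ grows. I would state the lemma's $\frac{2}{3}$ as this (asymptotic, or lower-bound) value. Equivalently, linearity of expectation gives $2f{+}1$ approved blocks out of $3f{+}1$ each round, with every block equally likely to be one of them.

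Third, I would briefly justify that acceptance at one correct party suffices to count as approved in the current round. The approval event can be defined via the next-round block of a single fixed correct proposer, or via the union over proposers, which only raises the probability. Byzantine parties withholding their own certificates only shrinks the competing pool, which again only raises the probability, so the bound holds.

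The main obstacle is the symmetry step: the paper never formalizes ``random network latencies.'' I would adopt the model of~\cite{byzantine2025danezis}, with i.i.d. link delays, and argue exchangeability of the arrival times of the certificates. The subtle point is that the certification latency (the $2f{+}1$ acknowledgments) is itself a function of i.i.d. delays and so is still identically distributed across proposers. With that exchangeability established, the counting above is routine.
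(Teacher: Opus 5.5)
This is the paper's argument: under random delays the $3f{+}1$ certificates arrive in random order and each party advances after the first $2f{+}1$, so a block is referenced with probability $\frac{2f+1}{3f+1} > \frac{2}{3}$. Your exchangeability, approval-event and Byzantine-withholding remarks make explicit what the paper's one-line proof leaves informal. One small correction to the symmetry step you identify as the crux: an observer that is itself a proposer forms its own certificate locally, with no final network hop, so exchangeability holds only among the other $3f$ certificates. Conditioning on whether the observer's own certificate is among its first $2f{+}1$ still gives every other block probability at least $\frac{2f}{3f} = \frac{2}{3}$, so the lemma's bound survives; only your claim that every block has exactly $\frac{2f+1}{3f+1}$ is lost.
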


\begin{proof}
  Having random network delays implies that the order a party receives blocks
  is random. If $3f{+}1$ blocks are submitted each round and each party
  waits for $2f{+}1$ certificates before progressing, then from the point of
  view of each party $\frac{2f{+}1}{3f{+}1}> \frac{2}{3}$ blocks will be
  included in the current round. \qed{}
\end{proof}

\begin{lemma}
  The expected value of committed blocks per party and per wave of consensus
  is 3.
\end{lemma}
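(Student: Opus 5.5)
We combine Lemma~\ref{lemma:prob-block} with the wave structure of Tusk. A wave spans three DAG rounds, and every correct party proposes one block per round, so each party has three block proposals per wave. The claim is that, in expectation, all three of these end up committed. Since expectation is linear, it suffices to show that each single proposed block is committed with probability $1$, or at least in expectation contributes one committed block. Summing over the three rounds then gives $3$.

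First, we fix a block $b$ proposed by a correct party in some round of wave $w$. By Lemma~\ref{lemma:prob-block}, $b$ is referenced in the next round, and hence enters the DAG, with probability at least $\frac{2}{3}$. If it is not referenced, the party re-injects its batch into a later block, as in the Narwhal mechanism recalled in \S\ref{sec:hole-fillers}. Each attempt succeeds independently with probability $p \ge \frac{2}{3}$, so the number of attempts until inclusion is geometric with finite mean $1/p \le \frac{3}{2}$. The batch is therefore eventually included in the DAG almost surely.

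Second, we show that an included block is eventually committed. By Lemma~\ref{lemma:path} and Lemma~\ref{lemma:7-rounds}, a leader is committed in expectation every $7$ rounds. Every block in rounds after a committed leader's round lies in the causal history of some later leader: by quorum intersection, any block referenced by $2f{+}1$ blocks of the next round is reached by a path from any later leader. A block included in the DAG is thus delivered in the sub-DAG of the next committed leader after finite expected delay.

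Combining the two steps, each of the three per-wave proposals contributes exactly one committed block in expectation, which yields the value $3$.

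The main obstacle is the accounting, not the probability. One has to argue that re-injected batches are counted toward the wave in which they were proposed, and that the steady state therefore gives an expected $3$ commits per party per wave, rather than $3\cdot\frac{2}{3}=2$ blocks committed directly in the wave. I would first try to state the lemma as a stationary rate: proposals enter at rate $3$ per wave, and every proposal is eventually committed with finite expected delay, so the long-run commit rate equals the proposal rate, namely $3$.
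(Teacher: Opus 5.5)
Your central step, that each of the three per-wave proposals ``contributes exactly one committed block in expectation'', does not follow, because your first step tracks the batch rather than the block. Re-injection copies the batches of an unreferenced block into a later block. The original block itself is never committed: without weak edges, a round-$r$ block can only be referenced from round $r{+}1$, and this dropped-certificate case is exactly what motivates Hole Fillers in Sec.~\ref{sec:hole-fillers}. The later block carrying the re-injected batch is one of the party's own later proposals and is already counted in its own wave. Crediting it back to the original wave changes the quantity to \emph{proposals whose payload is eventually committed}. The lemma is about committed blocks, and so is its use in the next lemma, where a committed block of the party must carry its author's timestamp pair. The long-run rate of committed blocks is the proposal rate times the probability that a given block lands in a committed leader's causal history. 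That probability is strictly below $1$ in this weak-edge-free setting, so your stationary argument would wrongly show that no block is ever lost. Your second step also fails as stated. By quorum intersection, a round-$r$ block is reached from every round-$(r{+}2)$ block only if at least $f{+}1$ round-$(r{+}1)$ blocks reference it, since this needs $k + 2f + 1 > 3f + 1$. Lemma~\ref{lemma:prob-block}, however, only gives each next-round block a probability of about $\frac{2}{3}$ of referencing it. So ``every included block is eventually committed'' is unsupported, as is the independence you assume across re-injection attempts.

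The paper instead keeps both losses explicit. It counts per expected inter-commit interval of $7$ rounds (Lemma~\ref{lemma:7-rounds}) and gets $7\cdot\frac{2}{3}$ blocks per author in the DAG from Lemma~\ref{lemma:prob-block}. It then applies $2/3$-Causality to retain $\frac{2}{3}$ of those in the committed history, giving $7\cdot(\frac{2}{3})^2 = \frac{28}{9} \approx 3.11$, which the lemma states as $3$. Your $3$ is per $3$-round wave and is simply the proposal count. Under the paper's reading of a wave, your steady-state claim would give about $7$ per commit, so the numerical agreement is coincidental. To repair your route you would have to replace \emph{eventually committed} by an explicit per-block commit probability, which brings back the two $\frac{2}{3}$ factors.
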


\begin{proof}
  From Lemma~\ref{lemma:7-rounds} we know that on average consensus happens
  every 7 rounds. From Lemma~\ref{lemma:prob-block} we get that
  $7\cdot\frac{2}{3}$ blocks per author are in a wave. Moreover, by $2/3$-Causality
  $\frac{2}{3}$ of these blocks will be committed, giving us an expectation
  of $7\cdot(\frac{2}{3})^{2}\approx 3.11$. \qed{}
\end{proof}

\begin{lemma}
  \label{lemma:fp1-updated} Every time a leader block is committed,
  the logical table row of at least $f{+}1$ correct parties is updated.
\end{lemma}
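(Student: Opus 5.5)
The plan is to combine the structure of a committed leader's causal history with the hole-filler mechanism of Sec.~\ref{sec:hole-fillers}. First I would argue that a committed leader block $b$ of wave $i$ has, in its causal history, blocks from at least $2f{+}1$ distinct parties in the round just below it, because every block references $2f{+}1$ parent certificates from distinct authors. Since at most $f$ of these authors are Byzantine, at least $f{+}1$ of them are correct. By Lemma~\ref{lemma:safety} and the deterministic ordering of the uncommitted causal history, all these blocks are delivered to every correct party together with $b$.

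The second step is to show that each such correct block actually updates its author's row of the Logical Table. By Alg.~\ref{alg:hf}, a correct party $p_j$ always includes in every block a signed hole filler carrying the current head of its own row $L_j[p_j]$. When the block is committed, \textsc{AdvanceRow} is invoked at every correct party $p_i$ on $(p_j,\lc_h,\phiclock_h,\sigma_h)$. I would then distinguish two cases. In the first case $\lc_h$ exceeds the head currently recorded for $p_j$, and the row advances. In the second case the row is already at or beyond $\lc_h$, which can only occur because the same information has already entered the row through committed sealed batches, an earlier filler of $p_j$, or a filler relayed in another block. In that case the row already reflects $p_j$'s latest committed state, and I would count it as ``updated'' in the sense that it has been brought up to date at this commit.

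This case split is where I expect the main obstacle. Read strictly, ``updated'' would require that $p_j$ has assigned new counters since its previous filler was committed. To guarantee this I would invoke the assumption already used in Lemma~\ref{lemma:prob-block}, namely that correct parties keep receiving client transactions. Since clients broadcast to all parties, a correct party timestamps new transactions between rounds, so its head strictly increases between consecutive committed blocks. The filler carried by its block in the leader's history is therefore strictly larger than any value previously committed for that row.

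Finally, I would apply the same argument to the sealed batches referenced by these $f{+}1$ correct blocks. Each such batch inserts $2f{+}1$ timestamp pairs, at least $f{+}1$ of them from correct parties, via the \textsc{Insert} step of Alg.~\ref{alg:exec}. This gives a second, independent source of updates whenever batches are non-empty, which strengthens the conclusion that at least $f{+}1$ correct rows advance at every leader commit.
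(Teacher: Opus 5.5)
Your main route, through the hole fillers, has a genuine gap. It sits exactly at the step meant to rule out the no-op case of \textsc{AdvanceRow}.

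The filler that $p_j$ attaches is the head of its own row $L_j[p_j]$ of the Logical Table, and that row is written only on commits. It receives pairs from committed sealed batches, plus $p_j$'s own pairs, retrieved from local storage, for committed transactions whose $2f{+}1$ pairs omitted it (Sec.~\ref{sec:hole-fillers}). Timestamping new client transactions increments the counter $\lc_j$ of Alg.~\ref{alg:timestamp}, but it does not move that head. For a correct party it must not: a filler carrying the row past counters of still-uncommitted transactions would erase those holes and break the argument of Theorem~\ref{theor:threshold-valid}.

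So continued client traffic does not make the head strictly increase. The filler in $p_j$'s round-$(r{-}1)$ block can sit at or below the head every party already holds, for instance whenever all of $p_j$'s pairs reached the table contiguously inside committed sealed batches (Alg.~\ref{alg:exec} even applies a block's batch pairs before its fillers). In that case \textsc{AdvanceRow} does nothing, and calling this ``updated'' changes the statement rather than proving it. Also, the assumption you borrow is not the one in Lemma~\ref{lemma:prob-block}, which concerns random delays and one block per round.

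What does prove the lemma is your last paragraph, and it is precisely the paper's proof. The committed causal history contains blocks carrying sealed batches. The paper gets correct-authored blocks from $1/2$-chain quality; your $2f{+}1$ distinct parents serve equally well. Each sealed batch carries $2f{+}1$ signed timestamp pairs from distinct parties. Since signatures are unforgeable, at least $f{+}1$ of these come from correct parties, and the \textsc{Insert} step of Alg.~\ref{alg:exec} adds an entry to each of their rows.

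This argument needs neither the filler case split nor any assumption about client load. It only needs at least one non-empty sealed batch to be committed with the leader, which the paper also takes for granted. Make that argument the proof, and keep the filler discussion at most as a remark.
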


\begin{proof}
  If a leader block is successfully committed, then from $1/2$-Chain
  Quality we know that at least of half the blocks in the causal history are
  from correct authors. These blocks will include timestamps from $2f{+}1$
  unique sources and so at least $f{+}1$ will be correct. \qed{}
\end{proof}

\begin{lemma}
  \label{theor:avg-liveness} In expectation, each round of consensus will
  update the local logical clock of each correct party.
\end{lemma}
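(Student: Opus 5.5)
We plan to combine the expected-commit lemmas with Lemma~\ref{lemma:fp1-updated} and the hole-filler mechanism of \S\ref{sec:hole-fillers}. The argument reduces ``the row of a fixed correct party $p$ advances in expectation per consensus round'' to two facts. First, in expectation $p$ has a block committed in each wave. Second, any committed block authored by $p$ advances $p$'s row to at least the head $p$ had when it built that block.

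\textbf{Step 1: a block of $p$ is committed in expectation.} By Lemma~\ref{lemma:7-rounds}, a leader is committed every $7$ rounds in expectation. By the preceding lemma, the expected number of committed blocks per party and per wave is about $3.11>1$. By linearity of expectation applied to the indicator ``$p$ has at least one committed block in the wave'', it suffices to show that this expected count is bounded below by a positive constant. To handle the fact that a count of $3$ in expectation does not immediately give probability $\ge$ const, we would use Lemma~\ref{lemma:prob-block} directly instead. Each of the roughly $7$ blocks $p$ proposes in the wave is included with probability about $\frac{2}{3}$, independently under the random-delay assumption. The probability that none of them is committed is therefore at most $(1-\frac{2}{3}\cdot\frac{2}{3})^{7}$, which is a small constant. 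Hence $p$'s row is touched with constant probability per wave, and the expected waiting time is $O(1)$ waves.

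\textbf{Step 2: committed blocks of $p$ advance $p$'s row.} By Alg.~\ref{alg:hf}, every block $p$ builds carries a signed filler with the current head of $L_p[p]$. By \textsc{AdvanceRow} in Alg.~\ref{alg:exec}, committing that block advances the row at every correct party to at least that head, and Lemma~\ref{lemma:lo-agree} gives that this happens identically everywhere. Moreover, $p$'s own head grows between rounds because $p$ keeps timestamping new transactions from correct clients (Alg.~\ref{alg:timestamp}), and, by the sealing rule, these timestamps reach committed batches. Lemma~\ref{lemma:fp1-updated} complements this: even in rounds where $p$'s block is not committed, at least $f{+}1$ correct rows are updated by sealed-batch timestamps. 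This gives a fallback showing that progress is never stuck on all correct rows simultaneously.

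\textbf{Main obstacle.} The delicate step is showing that a committed filler strictly advances the row, rather than just repeating an old head. For that we need $p$'s own local row $L_p[p]$ to have advanced since its previous committed filler. The first approach to try is to argue that between two committed blocks of $p$, some batch sealed with $p$'s timestamp was committed, which holds because $p$ is among the $2f{+}1$ responders with constant probability. Alternatively, $p$ retrieves its omitted pairs locally after each commit, so its head advances past every transaction committed since the last filler. Combining this with Step 1 yields an expected constant number of rounds per row update for every correct party, which is the statement.
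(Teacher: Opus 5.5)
Your plan follows the paper's proof, which is a one-line combination of the preceding lemma (about $3$ committed blocks per party per wave, in expectation) with the observation that every committed block carries at least one timestamp pair from its author (its own hole filler), so that author's row in the Logical Table is updated. Your additions go beyond the paper: the constant-probability bound is not needed for a statement phrased in expectation, and the cross-round independence it relies on is not established anywhere in the paper; the strict-advancement subtlety you flag is real and the paper does not address it, but your second fix overstates things, because the head only moves through a contiguous committed prefix of $p$'s counters, so a single uncommitted transaction timestamped by $p$ keeps it stuck.
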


\begin{proof}
  As in expectation $3$ blocks per party will be committed and these blocks
  will definitely contain at least $1$ timestamp pair from its author,
  then their entry in the logical table will be updated. \qed{}
\end{proof}

Lemmas~\ref{lemma:fp1-updated} and~\ref{theor:avg-liveness} give us that
on average, the logical table of all correct processes, with a minimum of $f{+}
1$ per wave, will be updated.
These two lemmas together prove the final theorem that guarantees execution
liveness:

\begin{theorem}
  In expectation, the threshold computed locally by each party is always
  increasing, and new transactions are always being executed.
\end{theorem}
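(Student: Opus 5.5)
The plan is to combine Lemma~\ref{lemma:fp1-updated} and Lemma~\ref{theor:avg-liveness} into a monotonicity-plus-progress argument about the row heads of the Logical Table. The threshold is defined in Alg.~\ref{alg:threshold} as the median of the $2f{+}1$ smallest UNIX timestamps among the $3f{+}1$ row heads.

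\textbf{Step 1: the threshold never decreases.} Row heads only advance. Inserting committed pairs or applying a hole filler through \textsc{AdvanceRow} moves a head to a larger logical counter, and never to a smaller one. Correct parties assign physical timestamps monotonically in their logical counter, so a correct row's head timestamp is non-decreasing. A Byzantine row could in principle carry non-monotone $\phiclock$ values. Lemma~\ref{lemma:median} handles this: the median of the $2f{+}1$ smallest heads is sandwiched between correct heads. I would check that this order statistic is monotone in each coordinate. That is the standard fact that order statistics are coordinatewise non-decreasing, applied after restricting attention to correct rows, whose heads are monotone.

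\textbf{Step 2: the threshold strictly increases in expectation.} By Lemma~\ref{theor:avg-liveness}, in expectation every correct party's row receives a new pair per consensus wave. By Lemma~\ref{lemma:fp1-updated}, at least $f{+}1$ correct rows are updated on every leader commit. Hole fillers (Alg.~\ref{alg:hf}) let a correct party's head advance past entries it timestamped but that were sealed without its pair. So a correct row's head is not stuck behind a hole for longer than the one wave it takes for the author's next committed block to carry its filler. Since correct parties keep timestamping new transactions with growing physical clocks, at least $2f{+}1$ correct heads advance, in expectation, to strictly larger physical times within a bounded number of waves. By Lemma~\ref{lemma:median}, the median of the $2f{+}1$ smallest heads is then at least the smallest of these advanced correct values, so it rises.

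\textbf{Step 3: transactions are executed.} Take a committed transaction $tx$ with final timestamp $t$. Its $2f{+}1$ timestamp pairs are bounded below by correct values (Lemma~\ref{lemma:median}), so some correct party stamped it at a time at most $t$. Once the threshold exceeds $t$, which Step~2 guarantees in expectation after finitely many waves, the rule in Alg.~\ref{alg:exec} executes $tx$. Theorem~\ref{theor:threshold-valid} ensures this does not conflict with safety.

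\textbf{Main obstacle.} The hard step is Step~2. Byzantine rows can hold stale or adversarial heads, so I must show that the $2f{+}1$ smallest heads are not dominated indefinitely by stuck Byzantine rows. My plan is a counting argument: at most $f$ rows are Byzantine, so among the $2f{+}1$ smallest heads at least $f{+}1$ are correct. The median, being the $(f{+}1)$-th smallest of these $2f{+}1$, is therefore at least the minimum advancing correct head once all correct heads pass a given time. If this bound turns out too weak, I would instead use the fact that all $2f{+}1$ correct rows advance in expectation (Lemma~\ref{theor:avg-liveness}), and argue that the $(f{+}1)$-th smallest overall head is at least the $1$st-smallest correct head.
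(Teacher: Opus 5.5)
Your route is the paper's: you sandwich the threshold between correct heads with Lemma~\ref{lemma:median}, let Lemmas~\ref{lemma:fp1-updated} and~\ref{theor:avg-liveness} move the correct heads forward, and appeal to monotone timestamps. Your counting remark is also right, and sharper than what the paper writes: at most $f$ rows are Byzantine, so the $(f{+}1)$-th smallest head is at least the smallest correct head.

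However, Step~1 does not go through. Coordinatewise monotonicity of an order statistic requires \emph{every} coordinate to be non-decreasing, and the threshold is an order statistic of all $n$ heads. Restricting to correct rows only gives you monotone \emph{bounds} on the threshold, not a monotone threshold. Nothing in the protocol stops a Byzantine author from signing a pair, or a filler, with a $\phiclock$ below its current head; \textsc{AdvanceRow} accepts any filler whose counter is larger. For example, take $f=1$, correct heads $10,20,30$ and a Byzantine head $25$: the threshold is the median of $10,20,25$, i.e.\ $20$. If the Byzantine head advances to a later counter carrying $\phiclock=5$, the threshold becomes the median of $5,10,20$, i.e.\ $10$. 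No correct head moved, and Lemma~\ref{lemma:median} still holds. You have two options:
\begin{itemize}
\item add a rule keeping every row's head timestamp non-decreasing, e.g.\ take the running maximum of $\phiclock$ along the row; this is a no-op on correct rows and gives a Byzantine author no power it lacks already through fillers;
\item weaken the claim to $\tau \ge \min_{p\ \text{correct}} \phiclock(\mathrm{head}_p)$, which is a non-decreasing quantity.
\end{itemize}
The paper's ``as timestamps cannot decrease'' hides the same issue.

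Step~2 overstates what hole fillers do. A correct party's filler is the head of its own row, and it only adds to that row its pairs for transactions that are already committed. So a correct head stays stuck behind every transaction that party stamped but that is not yet committed, which is exactly what Theorem~\ref{theor:threshold-valid} relies on. Hence your ``one wave'' bound is unjustified, and ``a new pair is inserted'' (Lemma~\ref{theor:avg-liveness}) does not imply ``the head advances''. Two ingredients are missing:
\begin{itemize}
\item every transaction stamped by a correct party is eventually committed; $K \ge f{+}1$ gives a correct includer, and re-injection of uncommitted batches does the rest, as in the first paragraph of the proof of Theorem~\ref{theor:liveness-async};
\item client traffic continues, so that correct heads keep reaching later clock values.
\end{itemize}
Finally, $\tau$ being at least the new smallest correct head does not show that $\tau$ exceeds its previous value. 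What you do get is a non-decreasing, unbounded lower envelope, and that is all Step~3 needs. With these repairs Step~3 goes through.
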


\begin{proof}
  By the safety property, each correct party will commit the same
  sequence of blocks and so have the same internal status for the Logical
  Table. The threshold is updated by taking the median value of the $2f{+}1$
  smallest timestamps for which we miss the successive value in the logical
  table. From lemma~\ref{lemma:median}, this value is both lower- and
  upper-bound by correct values and if all correct values are getting
  updated, then the threshold also gets updated. As timestamps cannot
  decrease, the threshold is always increasing. \qed{}
\end{proof}

This guarantees Execution Liveness in the average case, i.e., in random delay networks. We can also show
that it holds in the general case by exploiting the
fact that every block can be resubmitted and will therefore eventually be committed~\cite{danezis2022narwhal}.
As a consequence, since every correct party will try
to include in their blocks every transaction they have been assigned together with hole fillers, then every value in the Logical Table of correct parties will eventually be committed, and the execution threshold will regularly increase.
This second way of looking at Execution Liveness gives us the guarantee that the threshold will eventually catch up, but does not allow us to perform any theoretical analysis on the additional latency.

\begin{theorem}
  \label{theor:liveness-async}
  Under full asynchrony, if $K = 2f{+}1$ (Sec.~\ref{sec:parameter_k}),
  then the threshold computed by each correct party strictly increases
  infinitely often, and every submitted transaction is eventually executed.
\end{theorem}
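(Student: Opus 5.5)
The plan is to split the statement into two claims: (i) every submitted transaction is eventually committed with $2f{+}1$ timestamps, and (ii) every row of the Logical Table belonging to a correct party advances past any fixed logical counter infinitely often. Together these make the threshold strictly increase infinitely often and eventually exceed any committed transaction's final timestamp. Throughout we rely on eventual reliability of links among correct parties and on the Narwhal property that uncommitted batches are re-injected until committed~\cite{danezis2022narwhal}.

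\textbf{Step 1 (inclusion).} Take a transaction $tx$ submitted by a client (possibly faulty) to at least one correct party. Since $K = 2f{+}1$ parties are deterministically tasked with $tx$, at least $f{+}1$ of them are correct. For any party that has seen $tx$, I would first argue that $tx$ is eventually relayed, so every correct party eventually timestamps it. A tasked correct party then collects $2f{+}1$ timestamp vectors and $2f{+}1$ storage acknowledgments from the $n-f \ge 2f{+}1$ correct parties, which seals the batch. Asynchrony only delays these replies and never blocks them. By re-injection and the Tusk liveness results (Lemmas~\ref{lemma:enough-blocks} and~\ref{lemma:7-rounds}, which give almost-sure eventual leader commits), a block referencing the sealed batch is eventually in the causal history of a committed leader. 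This is where $K = 2f{+}1$ replaces the random-delay assumption: we need at least one correct tasked party, and we need no timeout mechanism that could misfire under asynchrony.

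\textbf{Step 2 (row progress).} Fix a correct party $p$ and a counter $\ell$ that $p$ has assigned. Either $(p,\ell)$ is among the committed $2f{+}1$ pairs of its transaction, or, after that transaction's commit, $p$ detects the omission. In the second case $p$ inserts the value into its own row and publishes the new head as a signed hole filler in a later block. That block is also eventually committed by re-injection, and \textsc{AdvanceRow} then moves $p$'s head to at least $\ell$ at every correct party. By Lemma~\ref{lemma:lo-agree} this happens identically everywhere. So every correct row's head passes any fixed counter eventually. Because correct parties keep timestamping new transactions with strictly increasing physical clocks, the head timestamps of correct rows grow unboundedly.

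\textbf{Step 3 (threshold).} The threshold is the median of the $2f{+}1$ smallest head timestamps among the $n$ rows. Faulty rows may stall forever, but at most $f$ rows are faulty. Hence the $(f{+}1)$-th smallest of the $2f{+}1$ smallest values is at least the $(f{+}1)$-th smallest correct head, using the median-sandwich argument of Lemma~\ref{lemma:median}. That quantity tends to infinity by Step 2, so the threshold is unbounded. Since it is monotone, because heads never move backward, it strictly increases infinitely often.

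Execution of $tx$ then follows. Its final timestamp is a fixed finite median, so once the threshold passes it, Alg.~\ref{alg:exec} executes $tx$. Execution Safety (Theorem~\ref{theor:threshold-valid}) guarantees that this is consistent with the order.

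The main obstacle is Step 2 under asynchrony. Hole-filler blocks can repeatedly miss their round, and we must ensure they are eventually included without weak edges. For this I would lean directly on Narwhal's re-injection guarantee, applied to hole-filler metadata exactly as to batches. I would also argue that later fillers subsume earlier ones, since a later head dominates, so an infinite sequence of re-proposals can only help.
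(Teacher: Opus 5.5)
Your decomposition is sensible. The per-row Step~2 is in fact what the threshold argument needs: Lemma~\ref{lemma:fp1-updated}'s $f{+}1$ updated correct rows per leader would not move the median if the other $f$ correct rows stalled.

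\textbf{Step 1 has a genuine gap.} You get eventual inclusion from Narwhal's re-injection of uncommitted batches by a single correct tasked party. The paper says this mechanism commits batches only ``in expectation'', i.e.\ it is the random-delay guarantee the paper pairs with $K=f{+}1$. Under full asynchrony the scheduler can deliver one correct party's certificate after every round's $2f{+}1$ parents are chosen. That party's blocks, and anything it re-injects, are then never referenced, and without weak edges they are never committed. Tellingly, your argument never uses $K=2f{+}1$: ``at least one correct tasked party'' already holds for $K=f{+}1$.

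The missing idea is the pigeonhole that the hypothesis buys. The $f{+}1$ correct tasked parties all keep carrying $tx$'s sealed batch in every round until it is committed. Every round-$(r{+}1)$ block references $2f{+}1$ distinct round-$r$ authors, so it omits at most $f$ (for $n=3f{+}1$). Hence it references at least one correct tasked block containing $tx$. From some round on, $tx$ lies in the causal history of every block, and the next committed leader (Lemma~\ref{lemma:7-rounds}) delivers it. This is the paper's argument.

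\textbf{Step 2 inherits the same flaw.} You flag filler inclusion as the main obstacle, then resolve it with the same unavailable re-injection guarantee for $p$'s own blocks; the paper's proof is equally terse here. A repair in the same spirit: Alg.~\ref{alg:hf} has correct parties relay the signed fillers they collect ($\mathit{pending\_fillers}_i$). If each correct party keeps carrying the latest filler it holds for $p$ (later ones dominate, as you note), then in every late enough round all $2f{+}1$ correct blocks carry a filler for $p$ at or beyond $\ell$. The pigeonhole above then commits it. Step~2 also needs every counter a correct party assigns to belong to an eventually committed transaction, including one first seen in a Byzantine proposer's \textsc{TsRequest}. Your relay claim would supply that, but it is asserted, not argued.

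\textbf{Step 3 has the inequality reversed.} The $(f{+}1)$-th smallest head is at least the \emph{smallest} correct head and at most the $(f{+}1)$-th smallest correct head. With $f=1$, correct heads $10,20,30$ and a faulty head $0$ give threshold $10<20$. Since Step~2 makes every correct head diverge, the minimum does too, so your conclusion survives.

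Drop the monotonicity claim as well. Heads never move backward in counter, but a Byzantine row's physical timestamps need not increase with its counter. You do not need monotonicity: an unbounded sequence strictly increases infinitely often. Once the smallest correct head exceeds $tx$'s final timestamp, $tx$ is executed at the next commit.
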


\begin{proof}
  Each transaction $tx$ is assigned to $2f{+}1$ parties, of which at least
  $f{+}1$ are correct. A round is certified when $2f{+}1$ block authors are
  included, so it excludes the blocks of at most $f$ authors. By pigeonhole,
  at least one of the $f{+}1$ correct parties tasked with $tx$ has its block
  certified in every round following $tx$'s submission, hence $tx$ appears
  in at least one certified block of a correct party.
  By the eventual-commit property of the DAG~\cite{danezis2022narwhal}, this block
  is eventually committed, so the system has no infinite holes due to
  missing transactions.

  Each committed $tx$ is bundled with $2f{+}1$ timestamp pairs, and any
  correct party whose pair is not among them triggers a hole filler
  (Sec.~\ref{sec:hole-fillers}) that is inserted in a subsequent block
  and fills the corresponding entry. By Lemma~\ref{lemma:fp1-updated} every
  committed leader updates the row of at least $f{+}1$ correct parties, and
  by Lemma~\ref{lemma:median} the median of the $2f{+}1$ smallest hole
  timestamps is bounded by correct values. The threshold is therefore
  monotonically increasing.
  By Lemma~\ref{lemma:threshold-smaller-hole} it eventually surpasses $tx$'s
  assigned timestamp, and $tx$ is executed. \qed{}
\end{proof}

\begin{figure*}[t]
  \centering
  
  \begin{minipage}{0.48\textwidth}
    \centering
    \includegraphics[width=\linewidth]{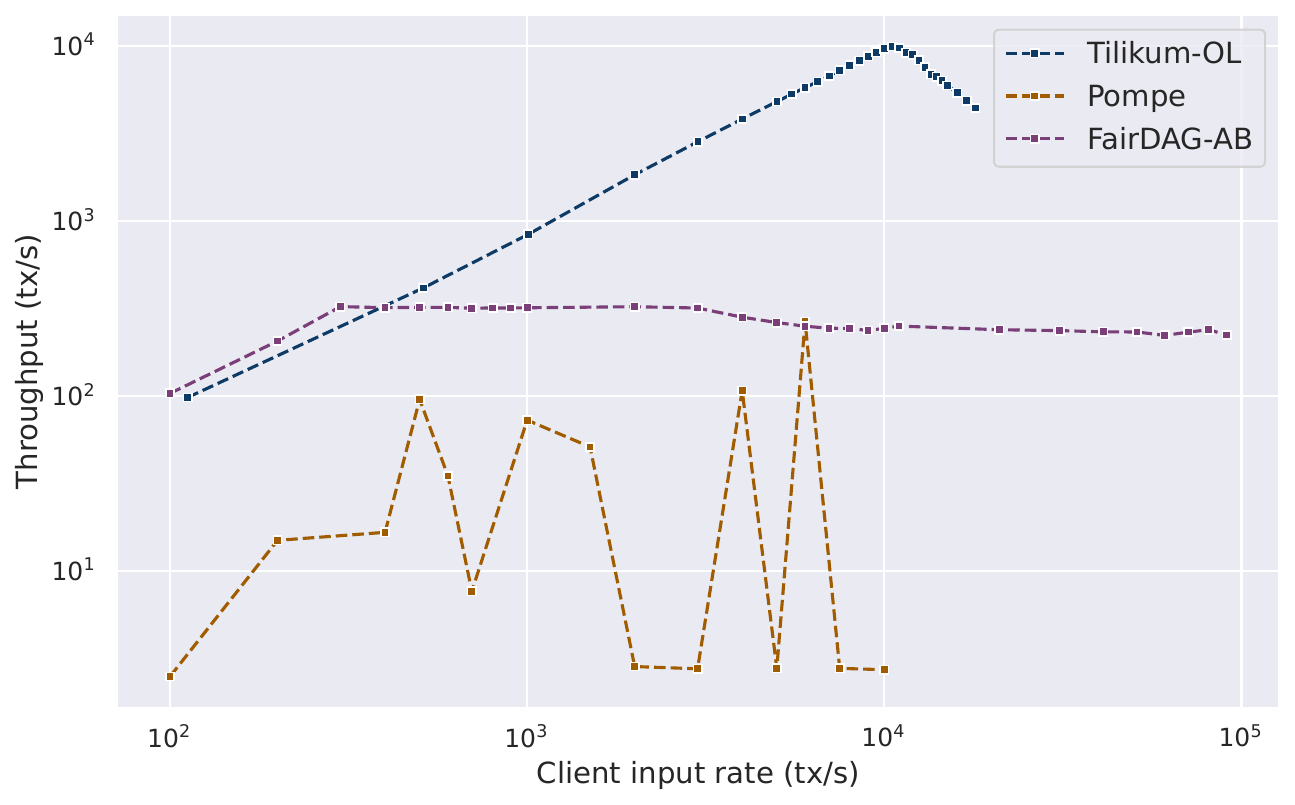}
    \caption{Ordering Linearizability --- Execution throughput ($\uparrow$) depending on transaction input rate ($n=16$)}
    \label{fig:rate-tp-ol}
  \end{minipage}
  \hfill 
  \begin{minipage}{0.48\textwidth}
    \centering
    \includegraphics[width=\linewidth]{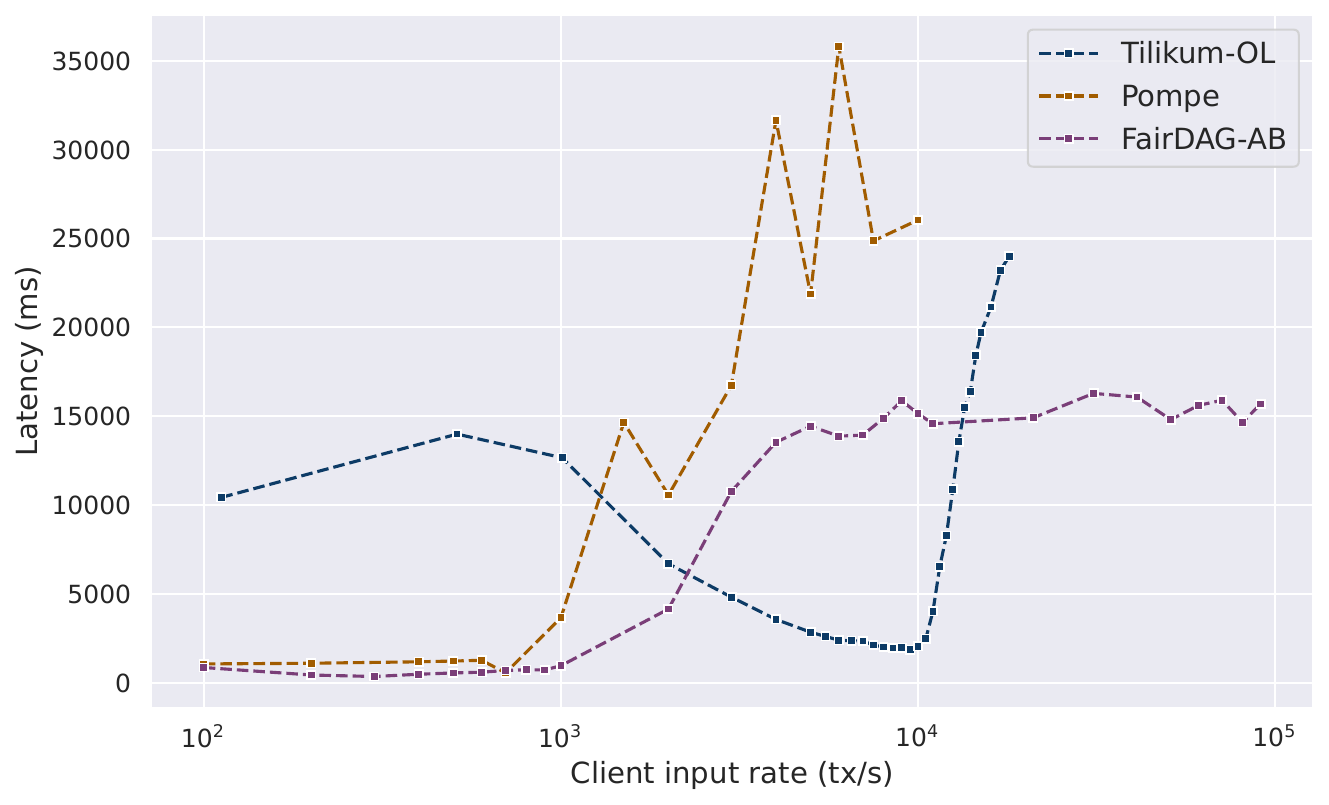}
    \caption{Ordering Linearizability --- Execution latency ($\downarrow$) depending on transaction input rate ($n=16$)}
    \label{fig:rate-lat-ol}
  \end{minipage}

  \vspace{2em} 

  \begin{minipage}{0.48\textwidth}
    \centering
    \includegraphics[width=\linewidth]{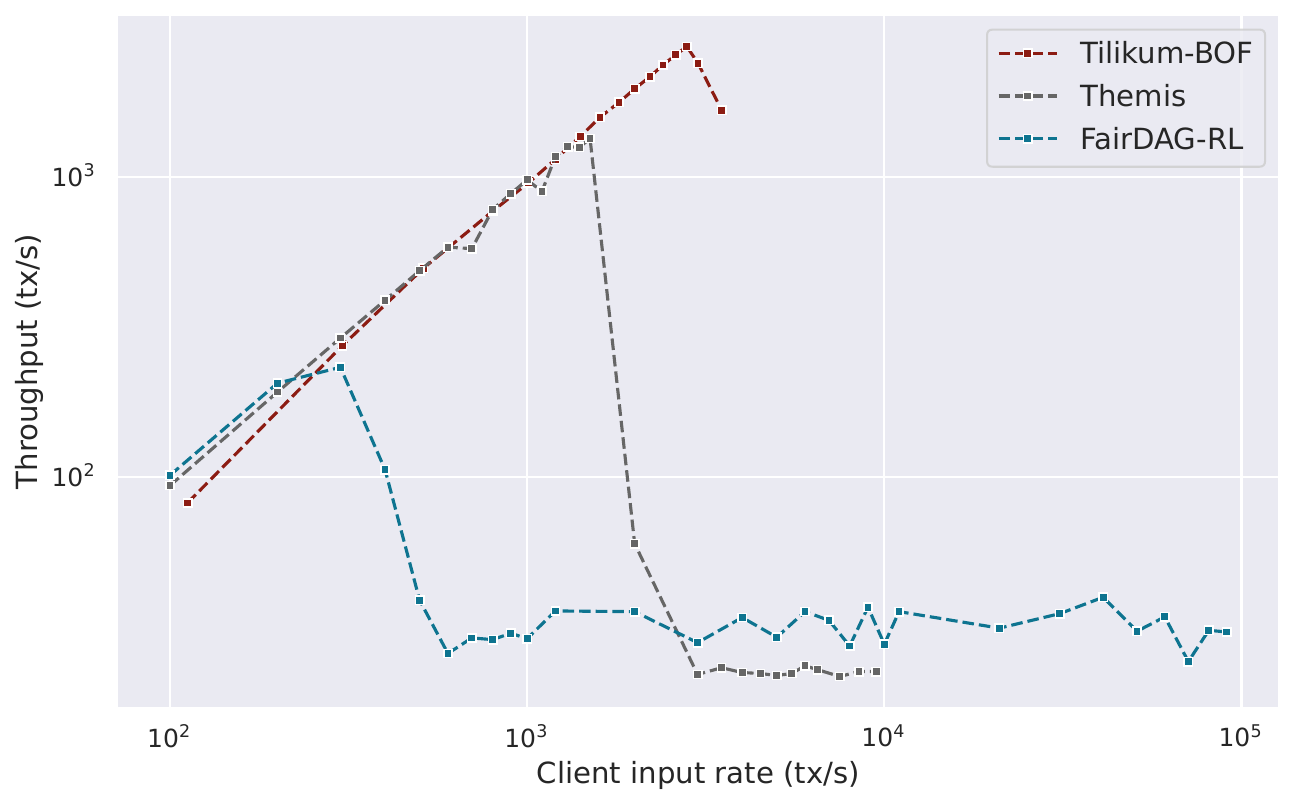}
    \caption{\gbof --- Execution throughput ($\uparrow$) depending on transaction input rate ($n=16$)}
    \label{fig:rate-tp-bof}
  \end{minipage}
  \hfill
  \begin{minipage}{0.48\textwidth}
    \centering
    \includegraphics[width=\linewidth]{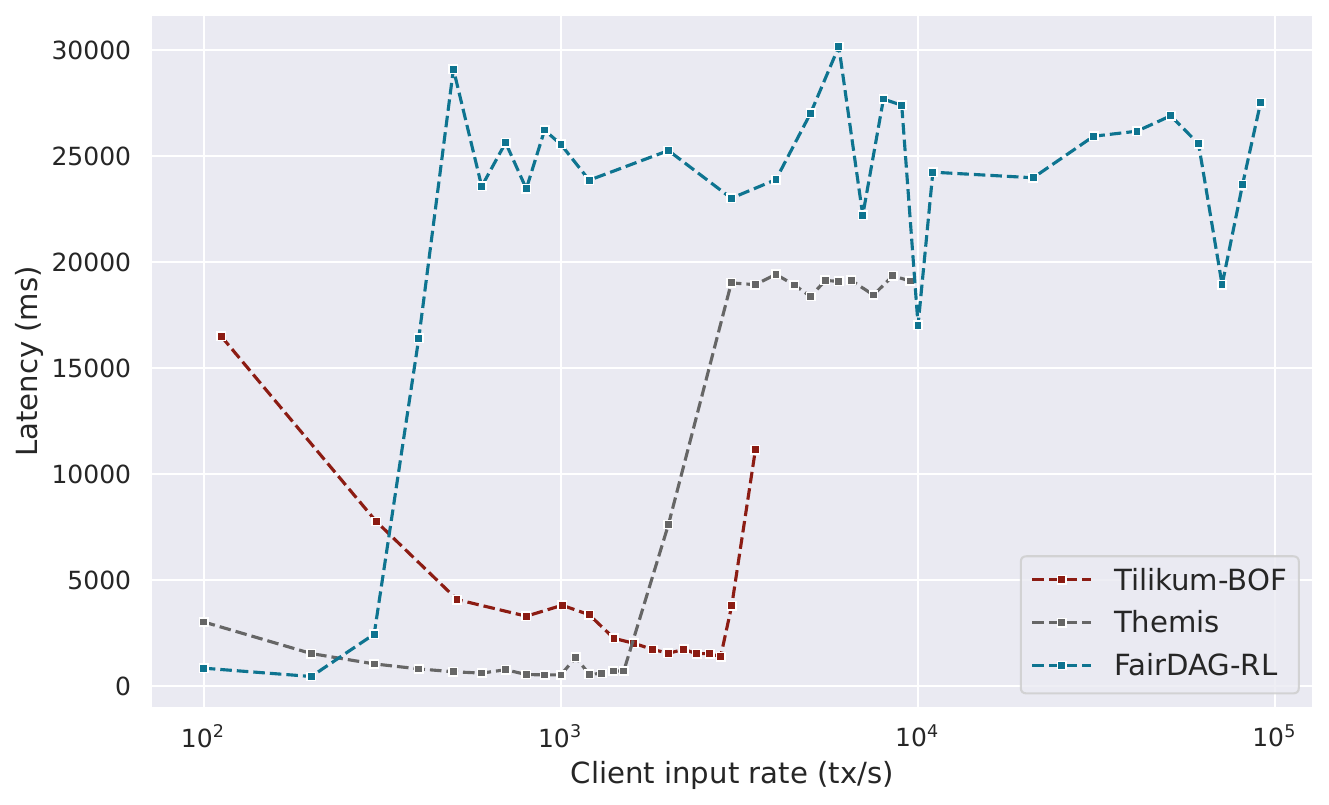}
    \caption{\gbof --- Execution latency ($\downarrow$) depending on transaction input rate ($n=16$)}
    \label{fig:rate-lat-bof}
  \end{minipage}
\end{figure*}

\section{Performance Evaluation}

\subsection{Setup}

We implemented \name starting from Narwhal/Tusk's Rust original implementation\footnote{\url{https://github.com/asonnino/narwhal/}}.
Our modifications span a total of $4,000$ LOCs, while our experimental scripts required $2,500$ additional LOCs. 
We backported some features of the production-ready implementation of Narwhal by Sui\footnote{\url{https://github.com/MystenLabs/sui/blob/narwhal-votes/narwhal}}, namely the re-transmission of uncommitted certificates and the order in which batches are added to a block.
Except when indicated otherwise (i.e., in \S\ref{sec:exp-redundancy}), we deploy the random delay network version of \name in which $K=f+1$ parties are tasked with including a transaction in one of their block. \\

\textbf{Baselines.} We compare \name{}'s throughput and latency against those of Pomp\=e~\cite{zhang2020byzantine}, Themis~\cite{kelkar2023themis}, the FairDAG variants~\cite{kang2026fairdag} and the vanilla Narwhal/Tusk~\cite{danezis2022narwhal}. Pomp\=e implements ordering linearizability on top of HotStuff~\cite{yin2019hotstuff}.  FairDAG-AB and FairDAG-RL respectively support ordering linearizability and \gbof{} on top of any DAG-based algorithms that uses weak edges.

We use FairDAG's code\footnote{\url{https://github.com/apache/incubator-resilientdb/commit/9e6c46f6f1d56ed88aae1034699c12d2097b1313}}, which we slightly adapted for a fair comparison against \name.
Our changes are described at the end of the section together with a theoretical comparison of \name and FairDAG~\cite{kang2026fairdag}, focusing in particular on maximum theoretical throughput, real-life feasibility and design decisions.
FairDAG is built on top of ResilientDB~\cite{gupta2020resilientdb} and has a different concept of batches.
A batch is a collection of transactions from the same client and is assigned a single timestamp.
To fairly compare against \name and Pomp\=e~\cite{zhang2020byzantine}, the batch size was set to $1$.
This value indicates the amount of client-submitted transactions that are treated as a single one in the system, which also mean a single timestamp is assigned to this batch. 

The released implementation of FairDAG does not support the simulation of faulty parties. Consequently, we did not evaluate it under this scenario.
FairDAG's numbers may underrepresent its potential because of ResilientDB overhead; we contacted the authors to confirm our setup. Either way, its reliance on weak edges and high redundancy (Appx.~\ref{sec:details_FairDAG_tilikum}) would have to be removed for production use, which would only lower its throughput further. \\

\textbf{Experimental settings.}
All experiments were run on the DAS5 distributed cluster~\cite{bal2016medium}.
Each individual run lasted $60$\,s and each experiment is repeated $5$ times.
Clients are assigned their own machine, while a party's workers and primary are colocated. 
%
Transactions have a fixed size of $128$ bytes. Pomp\=e's implementation is based on \texttt{libhotstuff} which does not exchange transactions, it instead reaches consensus on their $32$ bytes cryptographic hash digest, assuming that the actual transaction is sent for execution when confirmed.
FairDAG's implementation is based on ResilientDB~\cite{gupta2020resilientdb} and operates analogously.
Having the transaction size to be around the same size of the exchanged hash values in these protocols makes the comparison with \name{} as fair as possible.

Workers build batches of $2,000$ bytes (around $16$ transactions) or after a $1$\,s timeout.
Primaries build blocks containing $256$ bytes of payload, which is equivalent to $8$ batches (each batch is represented by a $32$ bytes hash).
A block can also be built after a timeout of $2$ seconds.

In \name, transactions are broadcast by clients to all parties and when a worker receives a transaction an extra communication step to the primary is required to assign a Timestamp Pair.
We kept the Narwhal/Tusk experiment code in which clients send transactions to only one party so that the maximum achievable throughput can be showcased.

The security and fairness evaluation was performed with fixed parameters, clients sent $13{,}000$ transactions per second, with $10$ clients deployed; this rate sits at \nameol{}'s saturation point in Figs.~\ref{fig:rate-tp-ol} and~\ref{fig:rate-lat-ol}, so the system runs under full load while we measure attack effects. Block and batch sizes were left unchanged.
Two sets of experiments were performed, one with no faulty (silent) parties and varying number of arbitragers from $1$ to $5$ and another with a fixed amount of $3$ arbitragers and varying the number of silent parties from $0$ to $3$.

It is important to note that while the sluggish and speculative attack do not imply that the party deviates from the protocol, the Fissure attack might lead to that happening.
The experiments executing the fissure attack were carried out with the total number of malicious parties (arbitragers $+$ silent) below $1 /3$ of the total.

\subsection{Throughput and Latency Depending on Injection Rate}

We first ran a set of experiments with $N=16$ parties where we aimed at finding the maximum achievable input rate for each algorithm.
We progressively increased the input rate until the effective throughput stopped increasing or a substantial increase in latency was observed.
Latency and throughput are reported in Figs.~\ref{fig:rate-tp-ol}--\ref{fig:rate-lat-bof}. 
Note that with transaction input rate we mean the amount of unique transactions inserted in the system, which results in at least $f+1$ messages for each one.
Most protocols follow a similar pattern: inserting more transactions increases throughput linearly until a stall point is reached.
The system becomes saturated and we see an increase in latency and a decrease or flattening of throughput.
All of \name's versions have higher latency at low input rates.
This is a consequence of the system waiting for enough transactions to fill batches and blocks, so latency is mainly driven by the timeout settings.





\begin{figure}[t]
  \centering
  \includegraphics[width=1\linewidth]{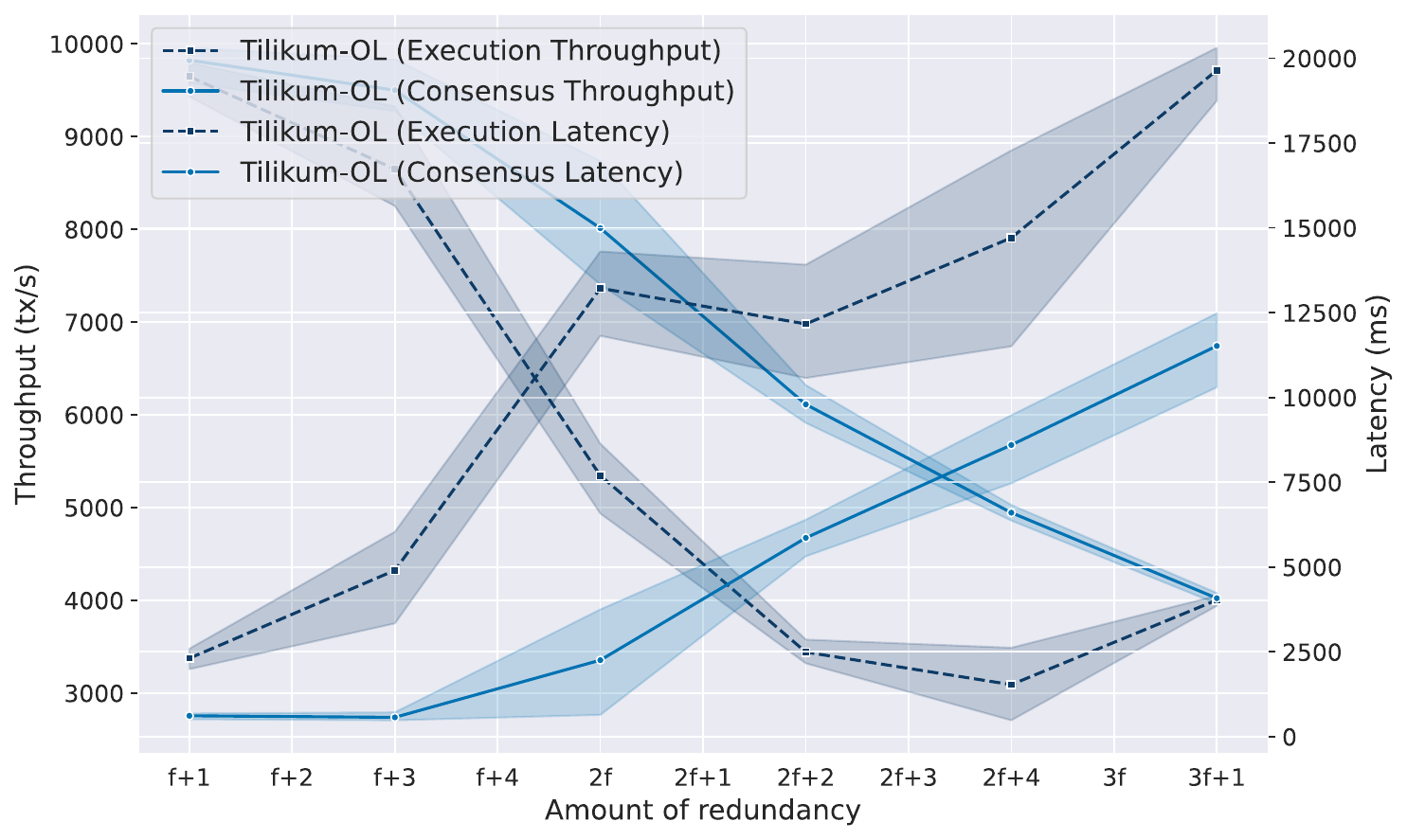}
  \caption{Ordering Linearizability --- Consensus and execution throughput ($\uparrow$, downward trend curves) and latency ($\downarrow$, upward trend curves) as a function of redundancy degree $K$ ($n=16$)}
  \label{fig:redundancy}
\end{figure}

\subsection{Evaluating Data Redundancy}
\label{sec:exp-redundancy}

Fig.~\ref{fig:redundancy} reports the latency, consensus and execution throughput of \name's variants when changing the number $K$ of parties that are assigned to include a transaction.
We run these experiments with $16$ parties, all parties behaving honestly, and varying the number $K$ from $4$ to $10$.
%
Execution throughput starts at approximately $9{,}500$--$10{,}000$~tx/s for redundancy levels between $f+1$ and $f+3$, then drops sharply to about $5{,}300$~tx/s at $2f$, and further decreases to roughly $3{,}000$--$4{,}000$~tx/s near $2f+2$--$3f+1$. In contrast, consensus throughput decreases more gradually, from around $6{,}700$~tx/s at $f+1$ to approximately $4{,}000$~tx/s at $3f+1$.
Latency exhibits the opposite trend. Consensus latency increases steadily from roughly $200$--$400$~ms at low redundancy levels to approximately $13{,}000$--$14{,}000$~ms at $3f+1$. Execution latency grows even more sharply, rising from around $2{,}000$--$3{,}000$~ms at $f+1$ to nearly $20{,}000$~ms at $3f+1$.



\begin{figure}[t]
  \centering
  
  \begin{minipage}[t]{0.48\textwidth}
    \centering
    \includegraphics[width=\linewidth]{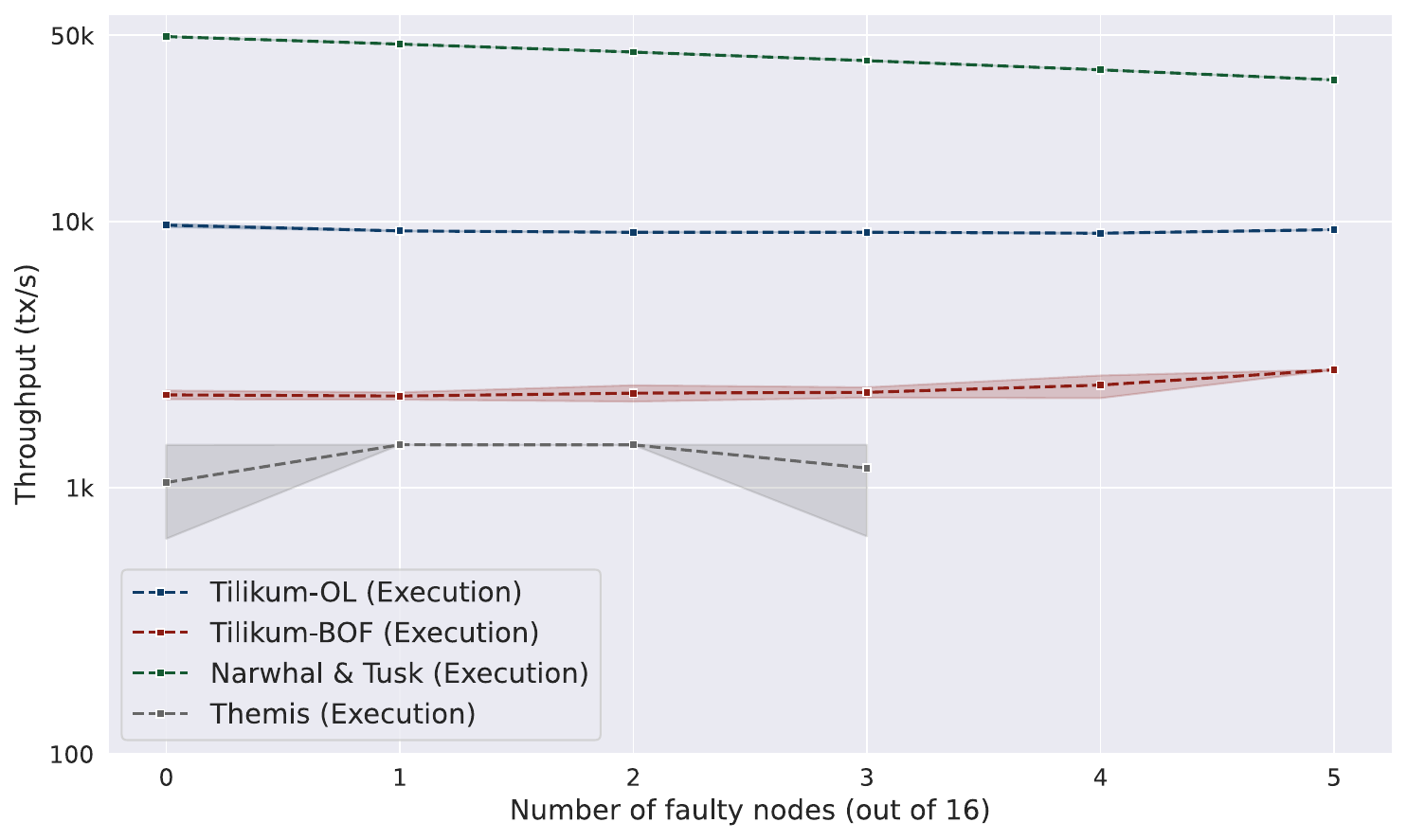}
    \caption{Execution throughput ($\uparrow$) depending on the actual number of faulty parties ($n=16$, with respectively up to $5$ and $3$ faults with OL and BOF)}
    \label{fig:throughput-f}
  \end{minipage}
  \hfill
  \begin{minipage}[t]{0.48\textwidth}
    \centering
    \includegraphics[width=\linewidth]{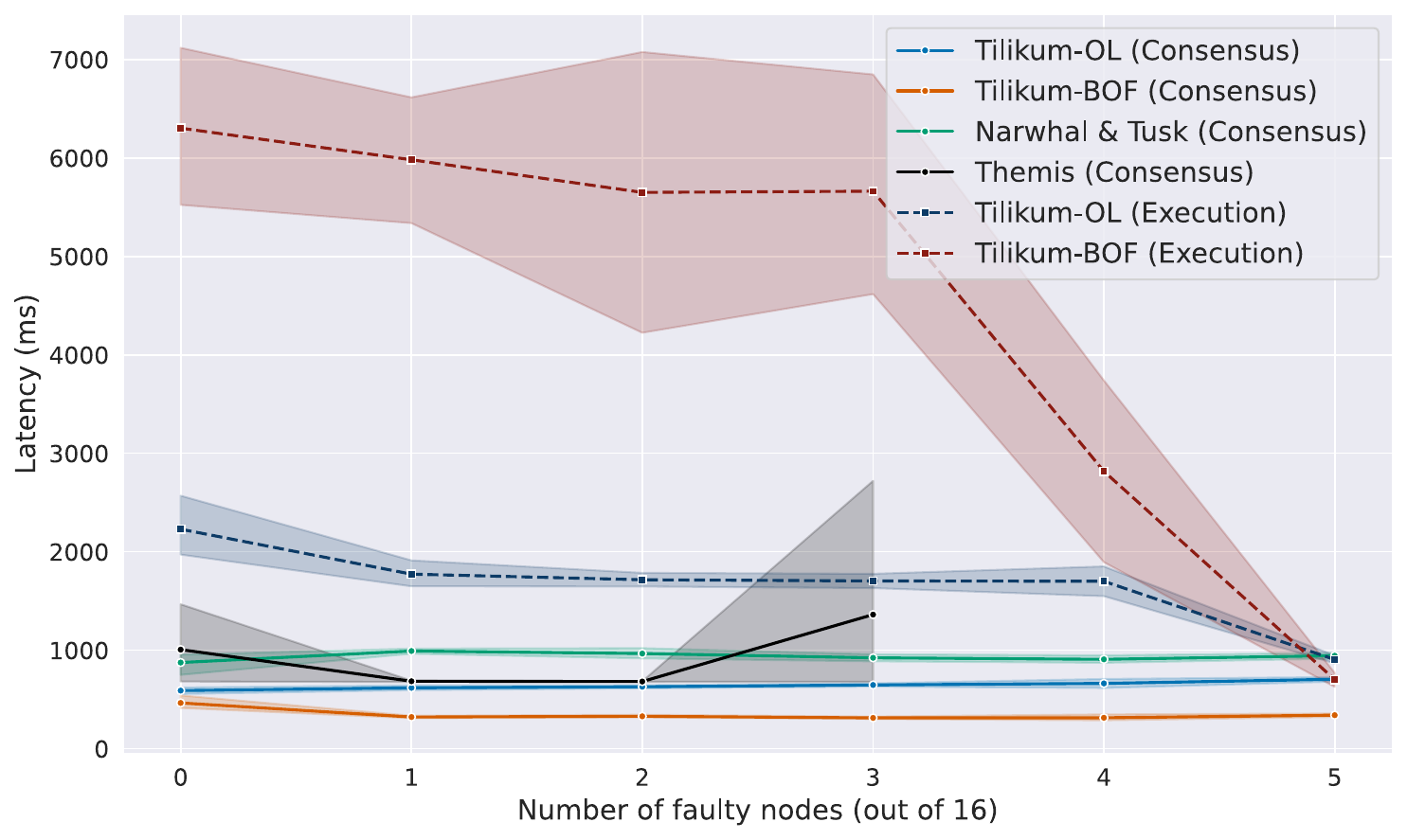}
    \caption{Execution latency ($\downarrow$) depending on the number of faulty parties ($n=16$)}
    \label{fig:latency-f}
  \end{minipage}
\end{figure}

\subsection{Throughput and Latency Depending on System Size}

We now evaluate the throughput and latency of \nameol{} and \namebof{}, benchmarking them against existing state-of-the-art fair ordering protocols. \\

\begin{figure*}[tp]
  \centering
  
  \begin{minipage}{0.48\textwidth}
    \centering
    \includegraphics[width=\linewidth]{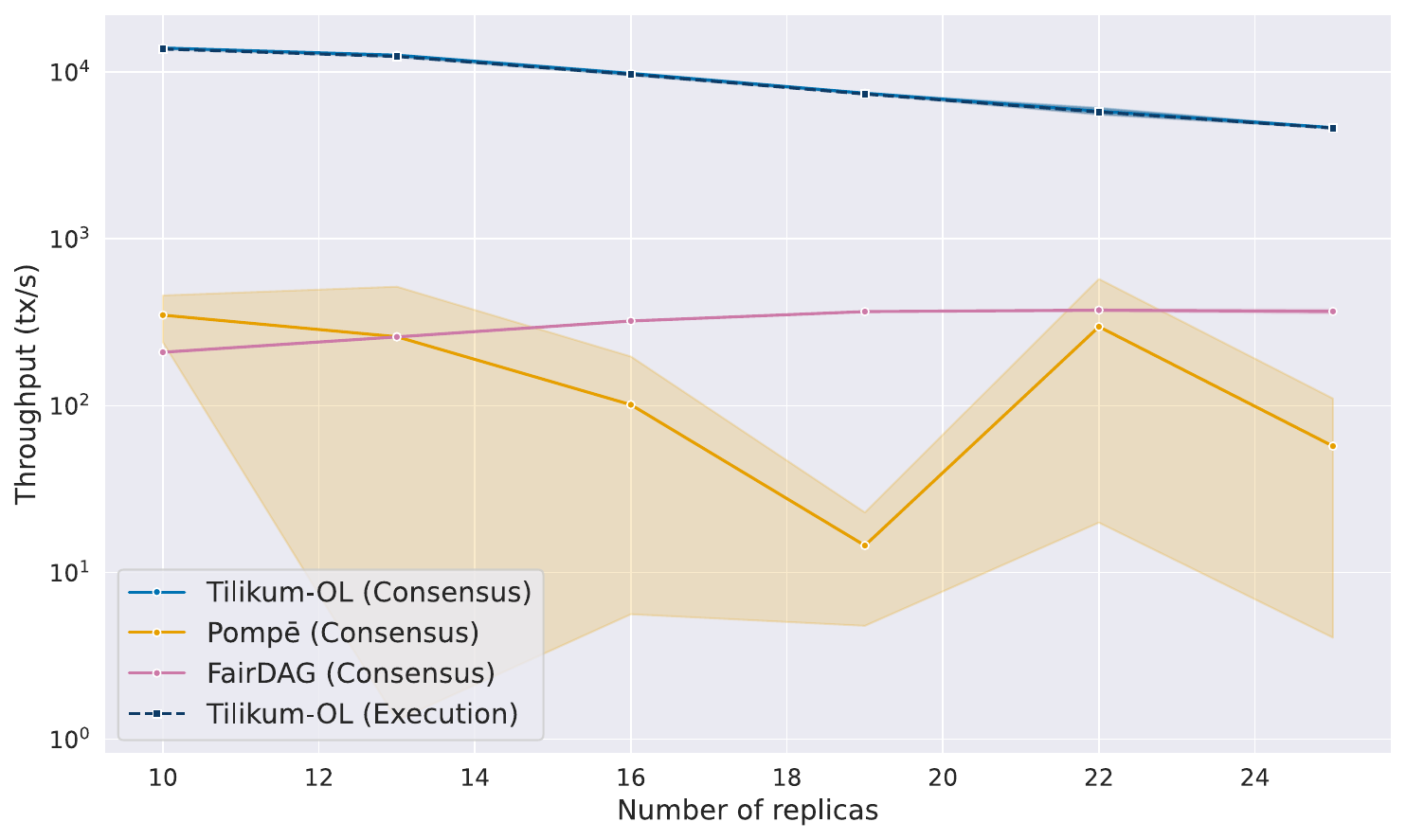}
    \caption{Ordering linearizability --- Execution throughput ($\uparrow$) depending on system size}
    \label{fig:throughput-n}
  \end{minipage}
  \hfill
  \begin{minipage}{0.48\textwidth}
    \centering
    \includegraphics[width=\linewidth]{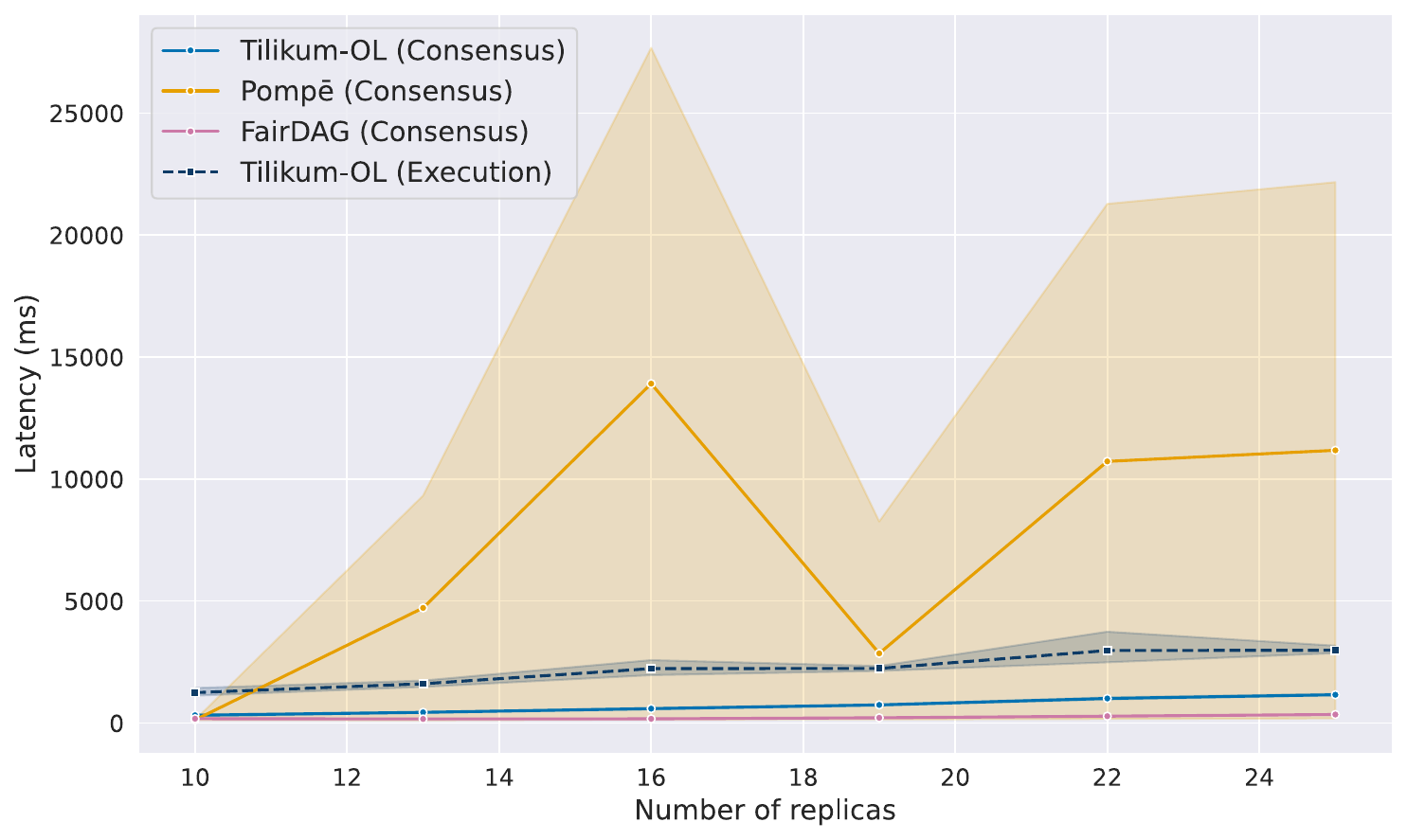}
    \caption{Ordering linearizability --- Execution latency ($\downarrow$) depending on system size}
    \label{fig:latency-n}
  \end{minipage}

  \vspace{2em} 

  \begin{minipage}{0.48\textwidth}
    \centering
    \includegraphics[width=\linewidth]{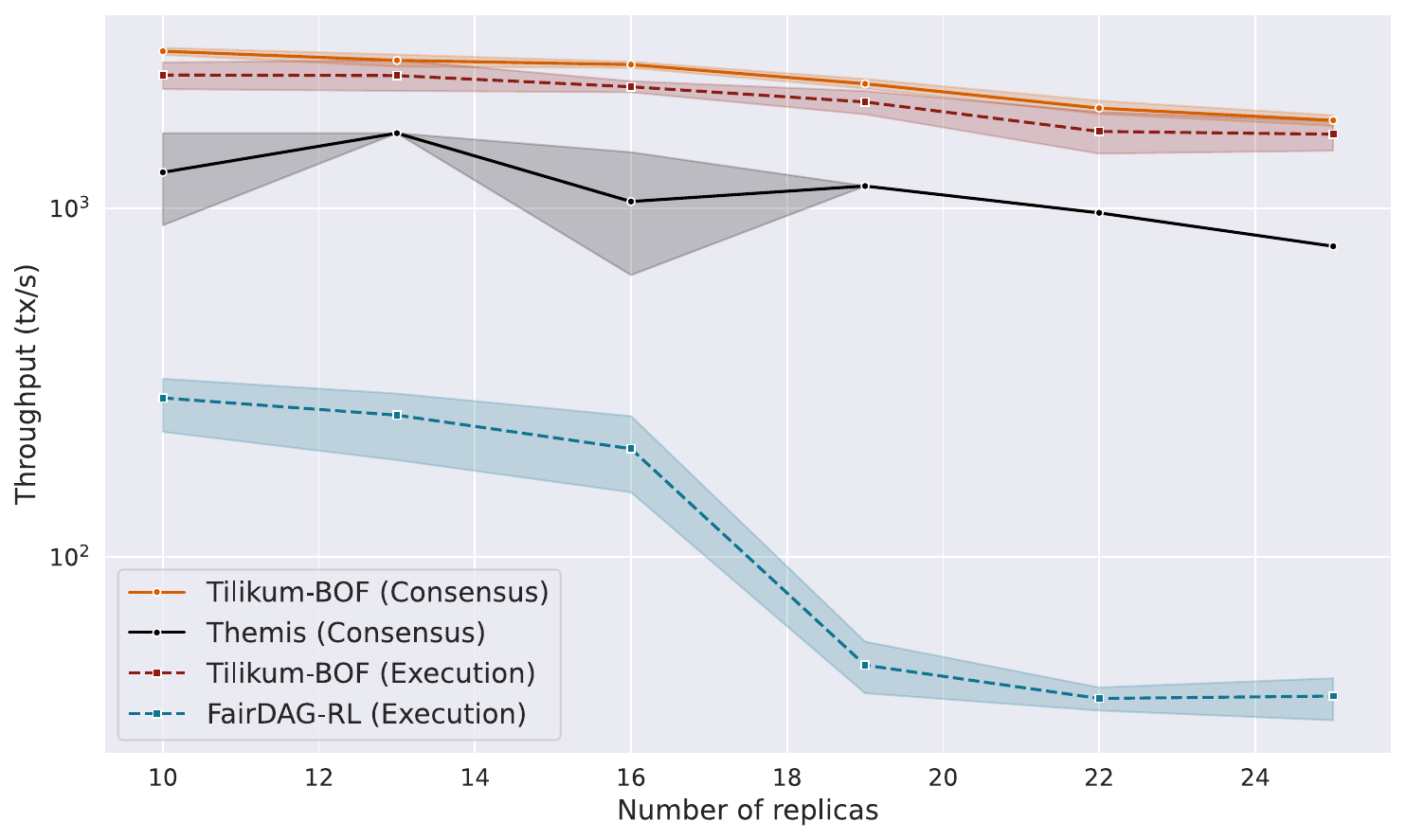}
    \caption{\gbof{} --- Execution throughput ($\uparrow$) depending on system size}
    \label{fig:throughput-tilikum-rl}
  \end{minipage}
  \hfill
  \begin{minipage}{0.48\textwidth}
    \centering
    \includegraphics[width=\linewidth]{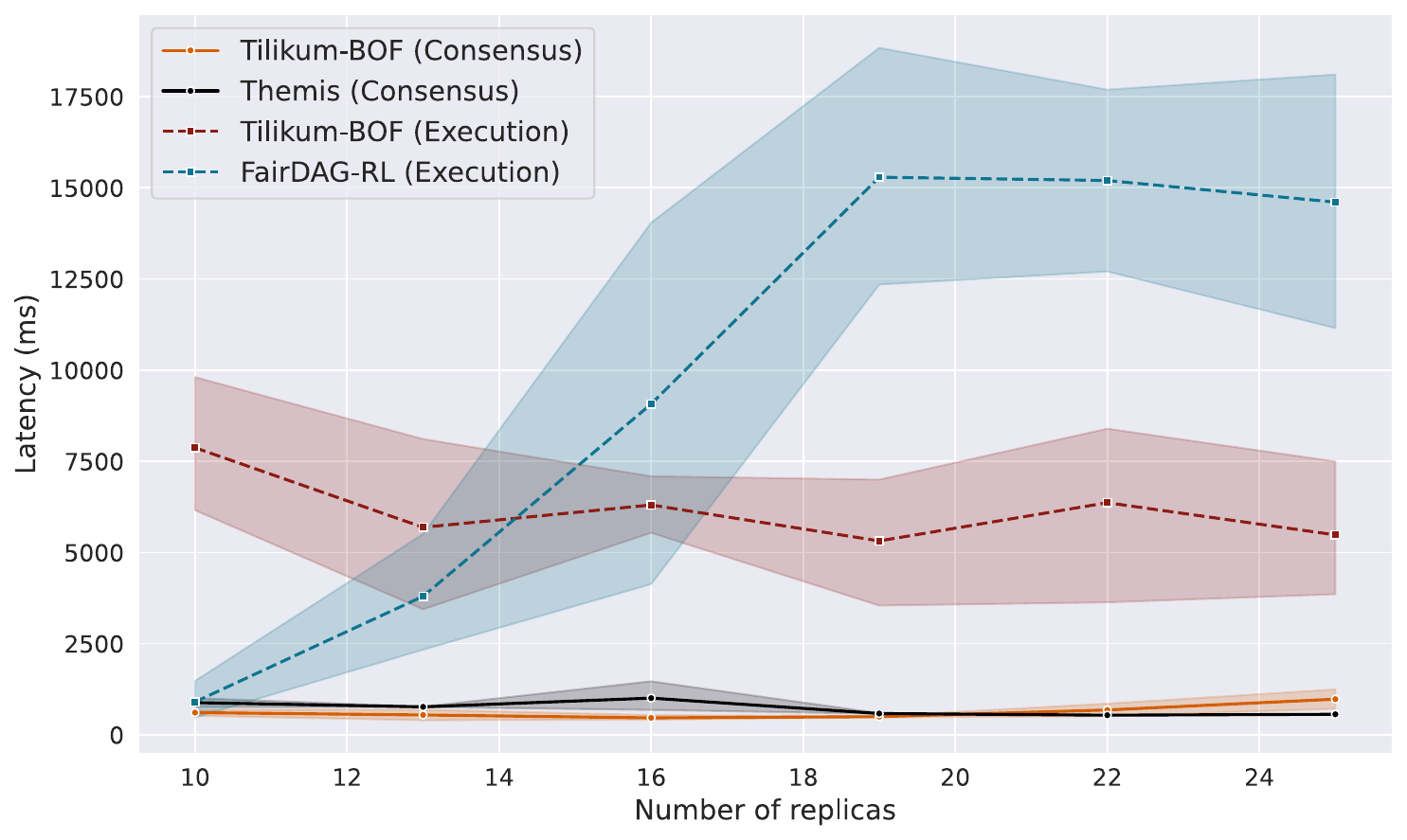}
    \caption{\gbof{} --- Execution latency ($\downarrow$) depending on system size}
    \label{fig:latency-tilikum-rl}
  \end{minipage}
\end{figure*}





\textbf{Consensus and Execution Throughput.}
Figs.~\ref{fig:throughput-n} and \ref{fig:throughput-tilikum-rl} report throughput when increasing the number of total parties.
We can notice that the overhead incurred by \name, as more data needs to be exchanged more times, is reduced on average by a factor of $4$ when compared to Narwhal/Tusk~\cite{danezis2022narwhal}.

\nameol remains faster than Pomp\=e~\cite{zhang2020byzantine}, with throughput of $14{,}000$\,tx/s at $N=10$, around $39$ times higher on average, and $4{,}600$\,tx/s at $N=25$, $81$ times higher.
Pomp\=e's runs vary by an order of magnitude in both throughput and latency (latency from $<1$\,s to $>10$\,s), likely due to its time-slotted consensus stalling on bad timestamp clusters.

FairDAG's consensus and execution throughput are always identical, showing how the combination of redundancy and weak edges allows transactions to be executed quickly.
This is also showcased by the low latency of the system.

The execution throughput of \nameol is, on average, $66$ times higher than
FairDAG-AB's at $N=10$ and $12$ times higher at $N=25$, although this might just be a consequence of the underlying consensus protocol, the fact that the Tusk~\cite{danezis2022narwhal} implementation in ResilientDB~\cite{gupta2020resilientdb} does not employ a worker-primary architecture, transaction redundancy and weak edges.
The results for \gbof show that \namebof achieves a consensus and execution throughput around $8$ times higher than FairDAG-RL at $N=10$ and $41$ times higher at $N=25$.
\namebof also exhibits throughput around $2$ times higher than Themis for all system sizes.
%
%
Silent parties have a very limited impact on the throughput of \nameol, as seen in Fig.~\ref{fig:throughput-f}.
This can be mostly explained by transactions now being proposed by multiple parties. \\

\textbf{Consensus and Execution Latency.} 
Figs.~\ref{fig:latency-n} and \ref{fig:latency-tilikum-rl} report latency when increasing the system size.
Latency is affected by the extra communication rounds required to collect timestamps, which is noticeable in the consensus latency.
Execution latency is also impacted, as the threshold needs to reach specific values to allow committed transactions to be executed.
The cost of managing the Logical Table is also not negligible.
We notice that in the case of $f=5$, shown in Fig.~\ref{fig:latency-f}, when all correct parties are able to successfully insert a block in every round there is a slight decrease in execution latency, moving it closer to the maximum theoretical value of the consensus latency.
FairDAG's~\cite{kang2026fairdag} usage of weak edges allows for the execution threshold to grow at a more consistent and reliable pace.
This is visible in the low and consistent latency values.

The time between receiving a transaction and it being included in a batch is slightly higher, as \name requires two rounds of reliable broadcast.
This results in latency being $1.2s$ at $N=10$ and $2.9s$ at $N=25$.
The value remains manageable, but for the smallest system size this is $9$ times slower than Pomp\=e and $2$ times slower than FairDAG-AB, while for the largest it's $5$ times slower than FairDAG-AB and $4$ times faster than Pomp\=e.

For \gbof, \namebof's latency is below $8s$ for $N=10$ and below $5.5s$ for $N=25$.
This positions it at around $10$ times slower than Themis.
\namebof is $9$ times slower than FairDAG-RL at $N=10$, but $3$ times faster for the largest system size.

\subsection{Reordering Attack Resistance.}

We evaluate the susceptibility of \name{}, and Narwhal/Tusk to the fissure, sluggish and speculative attacks proposed by Zhang et al.~\cite{zhang2025nofish}.
In these attacks, a malicious party attempts to front-run a transaction included within a correct party's block by generating a competing adversarial block. To achieve this, the attacker deviates from the standard consensus algorithm to ensure that their transaction is ordered before the victim transaction.
In the context of \nameol{}, where the execution order is determined by timestamps rather than by the order in which transactions appear in blocks, we consider a front-running attack to be successful if the attacker successfully commits an attacking block and if at least one transaction from its block is executed before a transaction from a victim block.
Despite adopting this broad success criterion, our results (Tab.~\ref{tab:asr_nodes_faults}) confirm that Zhang and Kate's attack are effective against the standard Narwhal/Tusk, with success rates ranging from 14\% to 95\% depending on the system configuration (system size and number of active malicious parties), but they also exhibit a 0\% success rate against \name{}.

\begin{table}[t]
\centering
\caption{Front-running success rate.}
\label{tab:asr_nodes_faults}
\resizebox{\columnwidth}{!}{
\begin{tabular}{p{2cm}cc|ccc}
\toprule
\textbf{Protocol} & \textbf{\# Parties} & \textbf{\# Faults} & \textbf{Fissure} & \textbf{Sluggish} & \textbf{Speculative} \\
\midrule
\multirow{3}{*}{Tusk} & 7 & 0 & 26.32\% & 36.59\% & 42.86\% \\
                              & 7 & 1 & 14.29\% & 46.15\% & 63.89\% \\
                              & 7 & 2 & 0.00\%  & 43.75\% & 93.55\% \\
\cmidrule{2-6}
\multirow{4}{*}{Tusk} & 10 & 0 & 27.81\% & 44.77\% & 75.46\% \\
                              & 10 & 1 & 36.11\% & 31.71\% & 66.67\% \\
                              & 10 & 2 & 50.00\% & 50.00\% & 57.58\% \\
                              & 10 & 3 & --      & 60.00\% & 95.24\% \\
\midrule
\textbf{\name}        & 10 -- 25 & 0 -- 5 & \multicolumn{3}{c}{\textbf{0.00\% (Resistant)}} \\
\bottomrule
\end{tabular}
}
\end{table}

\section{Related Work}
\label{sec:related_work}

\textbf{DAG-based protocols.}
Recent DAG-based consensus protocols improve block dissemination and reduce
latency. DAG-Rider~\cite{keidar2021all} builds a DAG via reliable broadcast and
uses a shared coin to order blocks. Narwhal/Tusk~\cite{danezis2022narwhal}
separate mempool and consensus, enabling garbage collection by removing weak edges and outperforming
DAG-Rider. Bullshark~\cite{spiegelman2022bullshark} adds a synchronous fast path,
while Mysticeti~\cite{babel2025mysticeti} leverages threshold clocks for pipelined
commits. Shoal~\cite{spiegelman2024shoal} and Shoal++~\cite{arun2025shoal++}
further reduce latency through leader reputation, optimized commit rules, and
parallel DAGs. Mahi-Mahi~\cite{jovanovic2025mahi} introduces an uncertified DAG
to reduce delay, and Starfish~\cite{polyanskii2025starfish} achieves linear
communication complexity via erasure coding. \\

\textbf{Order fairness.}
Fair-ordering properties aim to prevent adversarial reordering.
Pomp\=e~\cite{zhang2020byzantine} introduced ordering linearizability using
median timestamps. Aequitas~\cite{kelkar2020order} introduced $\gamma$-batch-order fairness, which Themis~\cite{kelkar2023themis}
and Lyra~\cite{zarbafian2023lyra} later extended. Wendy~\cite{wendy2020kursawe}
proposed probabilistic and timed fairness to guarantee termination. Other designs
include AOAB~\cite{gramoli2024aoab} (threshold signatures), Alpos et
al.~\cite{alpos2024eating} (anti-sandwiching), SpeedyFair~\cite{mu2024separation}
(decoupling ordering and consensus). Subsequent leader-based fairness work has explored data-dependent ordering~\cite{Nagda_UPenn_Rashnu_May2024}, weight-based sorting~\cite{Chen_China_Auncel_Aug2024}, position fairness~\cite{Wang_China_Dikaios_Oct2025}, cross-round graph maintenance~\cite{Ren_Melbourne_AUTIG_Oct2025}, bounded unfairness~\cite{Kiayias_Edinburgh_Taxis_2024}, and minimal batch variants~\cite{Ramseyer_Stanford_StreamingSocial_Oct2024}, but these target linear blockchains and are orthogonal to this work. 

FairDAG~\cite{kang2026fairdag} adds $\gamma$-batch-order fairness to DAG protocols that use weak edges, we compare extensively against FairDAG in this paper.
TEE-based approaches~\cite{ciampi2024universal} and $\kappa$-differential fairness schemes~\cite{cachin2022quick} have also been explored.
Amores-sesar et al.~\cite{rethinking2025amoressesar} proposed a timestamping mechanism to equip generic consensus protocols with the concept of time.
This approach uses local timestamps provided by parties, which is protocol independent, while byzantine fault tolerance is provided through mechanisms such as median for leader-based systems or quorum conditions in leaderless cases. \\

\textbf{Reordering attacks on DAG-based algorithms.}
Zhang et al.~\cite{zhang2025nofish} identified the fissure, sluggish, and speculative front-running attacks on Narwhal/Tusk.
In the fissure attack, attackers disconnect a victim's block from the rest of the DAG, forcing it to be ordered later by reducing its connectivity. In the speculative attack, attackers speculatively construct blocks with higher ordering priority to ensure their transactions precede the victim's.
Finally, in the sluggish attack, attackers deliberately create blocks in lower rounds that the ordering rule assigns higher priority, allowing them to be ordered ahead of newer victim blocks.
Mahe et al.~\cite{mahe2025order} showed a similar attack on DAG-Rider.
The Ambush attack~\cite{park2025frontrunning}
targets $\gamma$-batch-order fairness and has been demonstrated on single-leader protocols.
Our experiments show that \name{} perfectly defends against the fissure, sluggish and speculative attacks. 
Authors of the ambush attack propose a defensive strategy, i.e., immediate transaction dissemination upon receival, which is what Narwhal already does.
Evaluating the impact of the ambush attack on DAGs is future work as it requires significant computing power. \\

\textbf{Commit-reveal.}
Commit-reveal schemes~\cite{malkhi2022maximal,fernando2025trx} hide block contents
(e.g., via threshold encryption) to mitigate front-running. While compatible with
fair-ordering, they are orthogonal solutions and do not fully prevent reordering,
since attackers can still exploit metadata or sender information.

\section{Conclusion}

This paper presented \name{}, a DAG-based fair-ordering protocol that advances the state of the art by providing strong fairness guarantees without sacrificing scalability or resilience.
Unlike prior works, \name{} avoids reliance on weak edges, thereby enabling efficient garbage collection and reducing data redundancy, while maintaining robustness against malicious clients.
We proved that \name{} ensures safety and liveness.
We presented two versions of \name, namely \nameol and \namebof, that respectively guarantee the ordering linearizability and $\gamma$-batch-order fairness ordering properties.
Our experimental evaluation confirmed that \name{} matches or outperforms existing fair-ordering systems: \nameol{} sustains $14{,}000$\,tx/s at $N{=}10$ with $1.2$\,s latency ($39\times$ Pomp\=e's throughput) and stays $4\times$ faster than Pomp\=e at $N{=}25$, while \namebof{} doubles Themis' throughput at every evaluated system size. \name{} also fully prevents the reordering attacks of Zhang et al.~\cite{zhang2025nofish}, dropping their success rate from $14\%$--$95\%$ on vanilla Narwhal/Tusk to $0\%$.


\printbibliography


\newpage

\appendices

\section{Detailed Comparison with FairDAG}
\label{sec:details_FairDAG_tilikum}

In this section we discuss limitations of FairDAG~\cite{kang2026fairdag}, whether they could be mitigated and explain why \name does not share them.
The first part discusses the disadvantages of using weak edges and having redundant transactions in the DAG.
Finally, we propose three attack scenarios where a single malicious client can break liveness for the system, and how to prevent this.
Table~\ref{tab:FairDAG-compare} summarizes the main difference between \name and FairDAG.

\begin{table*}[t]
\small
  \centering
  \caption{Comparison between DAG-based fair ordering approaches}
  \label{tab:FairDAG-compare}
  \resizebox{\textwidth}{!}{
    \footnotesize
    \begin{tabular}{|p{3.5cm}|p{3cm}|p{3cm}|p{3cm}|}
      \hline
      & FairDAG~\cite{kang2026fairdag} & Narwhal/Tusk with full redundancy (FairDAG's approach without weak edges) & \name (this work) \\ \hline
      Mempool time complexity & NA & 2 & 4 \\ \hline
      Consensus time complexity & unmodified (full blocks of transactions) & unmodified (blocks of batch digests) & unmodified (blocks of batch digests) \\ \hline
      Garbage collection & \checkno & \checkyes & \checkyes \\ \hline
      Expected tx redundancy in DAG & $3f+1$ & $3f+1$ & $K \in [f+1, 3f+1]$ \\ \hline
      Expected \# signs per tx in DAG & $3f+1$ & $3f+1$ & $K (2f+1)$, or $K$ with BLS aggregate signatures (not implemented) \\ \hline
      Execution Liveness with $f$ faulty fast replicas & \checkyes & \checkno & \checkyes \\ \hline
      Robust against malicious clients & \checkno (\checkyes with perf.~degradation) & \checkno (\checkyes with perf.~degradation) & \checkyes \\ \hline
      Transaction Insertion Order & In-order & Out-of-order & Out-of-order \\ \hline
      Rebroadcasts & None & Full block of transactions & Hole Fillers + Batch digests \\ \hline
    \end{tabular}
  }
\end{table*}

\subsection{Weak Edges \& Redundancy}
\label{sec:redundancy}

FairDAG~\cite{kang2026fairdag} relies on weak edges for its implementation.
As a transaction needs to appear $2f{+}1$ times in the DAG to be executed, losing the strong theoretical guarantees given by the Validity property is not permissible.
As explained by Danezis et al.~\cite{danezis2022narwhal}, maintaining weak edges comes at the cost of losing garbage collection.
Without garbage collection, the system is unusable in practice, which we believe is a big design shortcoming.
The solution adopted by Narwhal/Tusk to achieve validity in the absence of weak edges is to retransmit the payload (i.e., transactions) of blocks that were not committed~\cite{danezis2022narwhal} together with the payload of the block for the round where this is detected.
This approach cannot be easily adapted to FairDAG.
The execution threshold relies on the assumption that the timestamps in the block created in round $r$ by a replica are always smaller than the ones in future rounds (i.e., $r+i$ for $i>0$).

If we allow the retransmission of the uncommitted payload of a block created in round $r$ during $r+i$, the blocks of rounds $r < r' < r+i$ will effectively include ordering indicators that are larger than the ones in the block for $r+i$, breaking the assumption.
Removing the need for weak edges is then not trivial and reduces the applicability of FairDAG in a realistic scenario.
To remove weak edges, replicas should rely on a way to detect the actual exact order of block's payloads and detect missing ones that might be retransmitted.
A mechanism similar to \name's logical clocks could be employed.

Apart from disallowing garbage collection, the way in which FairDAG computes the execution threshold and its reliance on weak edges \textbf{does not allow} for transactions to be inserted into the DAG in a different order than the one shown by their timestamps.
This assumption is relatively strong and can be problematic when paired with an asynchronous network.

\name does not require transactions to appear more than once in the DAG for it to be executed, as all necessary information is bundled in the batch's metadata.
As now the DAG will have redundant information, if we assume that no other factor is bottlenecking the system, its throughput should be at most the one achieved by the underlying DAG layer divided by the amount of (correct) parties.
This does not seem to be the case in the plots provided by the authors~\cite{kang2026fairdag}.
The author re-implemented Narwhal/Tusk~\cite{danezis2022narwhal} on the same framework they used for FairDAG~\cite{kang2026fairdag}, ResilientDB~\cite{gupta2020resilientdb}.
We think that the reason for this discrepancy is found in the way transactions are treated.

ResilientDB~\cite{gupta2020resilientdb} allows client to batch transactions.
These batches are then treated by the underlying protocol (in this case FairDAG on top of Narwhal/Tusk) as a single transaction.
This results in a batch of transactions being assigned a single timestamp, removing fairness between them.
Moreover, the throughput values were effectively multiplied by the batch size.
The default value available in the code was of $400$, which means that each $400$ transactions were assigned a single timestamp.

We also noticed that the script extracting the throughput and latency values from the execution logs was intentionally excluding values lower than $1,000$.
While we do not know the rationale behind this decision, we think it is not a correct behavior.
Both \name and FairDAG's execution speeds depend on a threshold which might not grow or grow particularly slowly in periods of network asynchrony or other types of faults.
We think that measuring the system's behavior under these circumstances is more correct, and we did so in our experiments.

\subsection{Attacks on Liveness with Malicious Clients}
\label{sec:-client-attack}

FairDAG assumes that ``clients generate and submit transactions to replicas, then await execution results in response'', and that ``there is no assumption that clients always behave correctly; they may exhibit arbitrary or malicious behavior''~\cite{kang2026fairdag}.
This however clashes with the later assumption that ``clients broadcast their transactions to all replicas''~\cite{kang2026fairdag}.

If we actually assume arbitrary client behavior, the system suffers from some of its design choices.
We now propose three scenarios in which malicious clients can halt the system and break liveness.
The first two scenarios are slight variations of each other, but they achieve different results.
The first one leads to an overall slowdown in throughput, while the second can potentially stop the execution threshold from advancing, completely blocking the system.
We then suggest two mitigation strategies which might have consequences on the scalability of the system.
The last attack strategy is more inherent to the system's assumptions.

The main way clients can misbehave is by not actually broadcasting their transactions, but instead send them to a subset of replicas.
This has negative outcomes for the client, as also pointed out by the authors, as malicious replicas could censor or delay the inclusion of transactions in their blocks.
However, the system makes no assumption that correct replicas will try to include transactions that have not been sent directly by a client.

A transaction received only by a subset of parties will then be included only by that same subset of replicas.
We will now see how different sizes of this subset impact the system, with two main types of attacks.

Let us start with the simple case where a client sends a transaction to $S < f{+}1$ parties.
In this case, those parties will use some of their block space to include this transaction, thus reducing throughput.
The LPAOI of this transaction will however always increase.
The entry in \texttt{lp\_ois} for the $S$ parties that received it will be constant, but for all other parties  it will keep increasing as the DAG expands.
Given that the LPAOI is the $(f{+}1)$th lowest value in the array and there are less than $f{+}1$ constant values, then it will always be changing.
The only consequence of this attack is reduced throughput, as block space is taken by this unexecutable transaction, and higher resource usage for the parties  that need to keep track of the Ordering Indicators for this transaction.

The second possibility for clients is more impactful.
If a malicious client sends a transaction $t$ to $f{+}1 < S < 2f{+}1$ parties, the LPAOI will be eventually computed to a constant value, but the transaction will never be executed.
As the rest of the DAG grows, the LPAOI of $t$ will eventually be the smallest and become the execution threshold for the system.
The transaction $t$ will never receive enough ordering indicators for execution and so it will always remain in the set of transactions without AOI, and always be the transaction determining the execution threshold.
The system will then fully come to a halt, with the DAG continuing to expand, but no new transaction being executed.

There are two ways in which this issue can be solved, and they both have some impact on the overall throughput of the system.
The simpler way requires correct replicas to re-broadcast all transactions they receive to all other replicas, similarly to the third mitigation proposed by Vafadar et al.~\cite{vafadar2023condorcet}.
The second attack scenario is fully stopped by this mitigation.
At least one correct replica will receive the transaction and broadcast it to all other replicas, making it eventually executable.
The first attack scenario is still possible, but its impact was not as severe as the second one to begin with.

If we want to fully stop the first attack as well, we can introduce a way for replicas to include transactions in their blocks even if they were not received by a client (or by another replica during the re-broadcasting step described above).
A correct replica that receives a valid block including a transaction digest $d$ that was not received by any client, can consider the receipt of this block as seeing the transaction for the first time, thus assigning its own ordering indicator and adding $d$ to its next block.
This is similar to what \name does.
A request for timestamps for a specific batch may include unseen transactions, which are then assigned a Timestamp Pair on the spot.

The last way malicious clients can influence the system has to do with how FairDAG stores transactions in the DAG.
Transactions are actually not stored in the ledger, but a cryptographically secure digest is used instead to reduce network and storage usage.
This works correctly assuming that all replicas have access to all transactions, and have a fast way to match the committed digest to the actual transaction content.
A malicious client partially sending a transaction, enough times to allow execution (i.e., $2f{+}1$), can stop parties  that did not receive it from knowing its contents, stopping their ability to maintain the correct state.

This attack is fully mitigated by the first strategy described above.
If at least one correct replica sees the transaction, they will broadcast it allowing every other correct party to store it.

Note that \name does not suffer from any of the above issues, all information required for executing a transaction is available in any block including the transaction and the execution threshold always grows, at least at the same speed as the slowest correct node.
We believe that one of the above mitigations should be implemented in FairDAG~\cite{kang2026fairdag}, so that their impact on the performance metrics can be measured.

\section{Additional Pseudocode}
\label{sec:pseudocode}

\begin{algorithm}
\small
\caption{First-Sight Timestamping at party $p_i$}
\label{alg:timestamp}
\begin{algorithmic}[1]
\State \textbf{Initialization:}
\State $\lc_i \gets 0$ \Comment{local logical counter}
\State $\mathit{known}_i \gets \emptyset$ \Comment{set of already-timestamped transactions}
\State $\mathit{store}_i \gets \{\}$ \Comment{map: $\tx \mapsto (\lc,\,\phiclock)$}

\Statex
\State \textbf{upon} receive transaction $\tx$ from client or replica \textbf{do}
  \If{$\tx \in \mathit{known}_i$}
    \State \textbf{return} \Comment{timestamp only on first sight}
  \EndIf
  \State $\lc_i(\tx) \gets \lc_i$ \Comment{assign current counter value}
  \State $\phiclock_i(\tx) \gets \proc{Clock}()$ \Comment{record physical time}
  \State $\lc_i \gets \lc_i + 1$
  \State $\mathit{known}_i \gets \mathit{known}_i \cup \{\tx\}$
  \State $\mathit{store}_i[\tx] \gets \bigl(\lc_i(\tx),\;\phiclock_i(\tx)\bigr)$
\end{algorithmic}
\end{algorithm}

For completeness, this appendix collects pseudocode for procedures of \name{} that were deferred from the main text: first-sight timestamping (Alg.~\ref{alg:timestamp}, Sec.~\ref{sec:parameter_k}), batch sealing  and DAG block creation (Alg.~\ref{alg:seal}, Sec.~\ref{sec:inclusion}), execution-threshold computation for \nameol{} (Alg.~\ref{alg:threshold}, Sec.~\ref{sec:details}) and per sub-DAG commit graph processing for \namebof{} (Alg.~\ref{alg:processsubdag}, Sec.~\ref{sec:graph-build}).
The notation below is shared by all algorithms. \\

\begin{algorithm}
\small
\caption{Batch Sealing (proposer side) and DAG block creation (round $r$) at party $p_i$ }
\label{alg:seal}
\begin{algorithmic}[1]
\Procedure{SealBatch}{$\mathit{batch}$}
  \Comment{$\mathit{batch}$: list of transactions to propose}
   \Statex
  \State{$\triangleright$ \textbf{Round 1:} collect $2f{+}1$ timestamp vectors}
  \Broadcast{$\langle \proc{TsRequest},\;\mathit{batch} \rangle$ to all parties}
  \State $\mathit{replies} \gets \{\}$
  \WaitUntil{$|\mathit{replies}| \ge 2f{+}1$}
  \State \hspace{\algorithmicindent} \textbf{upon} receive $\langle \proc{TsReply},\;\mathit{batch},\;\vec{v}_j,\;\sigma_j \rangle$ from $p_j$ \textbf{do}
      \If{$\proc{VerifySig}(\sigma_j,\, p_j)$}
        \State $\mathit{replies}[p_j] \gets \bigl[(\lc_j(\tx),\,\phiclock_j(\tx))\bigr]_{\tx \in \mathit{batch}}$
      \EndIf

  \Statex
  \State{$\triangleright$ \textbf{Round 2:} reliable storage of sealed batch}
  \State $\mathit{sealed} \gets (\mathit{batch},\;\mathit{replies})$
  \Broadcast{$\langle \proc{StoreReq},\;\mathit{sealed} \rangle$ to all parties}
  \State $\mathit{acks} \gets \{\}$
  \WaitUntil{$|\mathit{acks}| \ge 2f{+}1$}
  \State \hspace{\algorithmicindent} \textbf{upon} receive $\langle \proc{StoreAck},\; h,\; \sigma_j \rangle$ from $p_j$ \textbf{do}
      \If{$\proc{VerifySig}(\sigma_j,\, p_j)$}
        \State $\mathit{acks}[p_j] \gets \sigma_j$
      \EndIf
  \State \Return $h \gets \proc{Hash}(\mathit{sealed})$
    \Comment{ready for inclusion in DAG block}
\EndProcedure
\Statex
\Procedure{CreateBlock}{$r$}
  \State $\mathit{payloads} \gets \bigl\{\proc{Hash}(sb) \mid sb \in p_i.\mathit{sealedReady}\bigr\}$
  \State $\mathit{parents} \gets$ collect $2f{+}1$ certificates from round $r{-}1$
  \State $\mathit{fillers} \gets \Call{ComputeHoleFillers}{\,}$
    \Comment{Alg.~\ref{alg:hf}}
  \State $\mathit{block} \gets (p_i,\; r,\; \mathit{parents},\; \mathit{payloads},\; \mathit{fillers})$
  \State proceed with Narwhal consensus protocol
\EndProcedure
\end{algorithmic}
\end{algorithm}



\noindent\textbf{Notation.}
Each party $p_i$ maintains a monotonically increasing local counter $\lc_i$
and has access to a UNIX physical clock $\phiclock_i(\cdot)$.
A \emph{timestamp pair} for transaction $\tx$ at party $p_i$ is
$(\lc_i(\tx),\;\phiclock_i(\tx))$.
The set of $K$ parties responsible for including $\tx$ in a block is
determined by $\mathcal{H}(\tx)$ (a deterministic hash-based assignment).

\begin{algorithm}
\caption{Execution Threshold via Logical Table (\nameol{})}
\label{alg:threshold}
\begin{algorithmic}[1]
\Function{Threshold}{$L$}
  \State $H \gets [\;]$ \Comment{collect head timestamps}
  \ForAll{party $p_j$}
    \State $\mathit{head}_j \gets$ smallest $\lc$ in $L[p_j]$ where $\lc{+}1$ is absent
    \State $H.\proc{Append}(\mathit{head}_j.\phiclock)$
  \EndFor
  \State sort $H$ in ascending order
  \State \Return $\proc{Median}\bigl(H[1..2f{+}1]\bigr)$
\EndFunction
\end{algorithmic}
\end{algorithm}

\begin{algorithm}[t]
\small
\caption{Per-commit graph processing (\namebof{})}
\label{alg:processsubdag}
\begin{algorithmic}[1]
\State \textbf{State (long-lived):}
\State \hspace{\algorithmicindent} $\mathit{graphs} \gets [\,]$
\State \hspace{\algorithmicindent} $\mathit{node}[tx]$ with fields $\mathit{count}$, $\mathit{seqnums}$, $\mathit{graph}$
\State \hspace{\algorithmicindent} $F_p \gets 0$ for every party $p$
\Statex
\Procedure{ProcessSubdag}{$L_r,\, A_r$}
  \State $\mathcal{G}_r \gets$ \Call{NewGraph}{$r$}; \; $\mathit{graphs}.\proc{Push}(\mathcal{G}_r)$

  \Statex
  \State $\triangleright$ Advance frontiers and grow $\mathit{count}$
  \ForAll{$(p, \tx, \ell) \in A_r \cup \mathit{HoleFillers}(A_r)$}
    \State \Call{UpdateLvec}{$p,\tx,\ell$}
    \If{$F_p$ advanced past $\tx$ for the first time}
      \State $\mathit{node}[tx].\mathit{count} \mathrel{+}= 1$
    \EndIf
  \EndFor

  \Statex
  \State $\triangleright$ Admit transactions to $\mathcal{G}_r$
  \ForAll{$tx$ updated this round}
    \If{$\mathit{node}[tx].\mathit{graph} = \bot$ \textbf{and} $\mathit{node}[tx].\mathit{count} \ge n(1{-}\gamma)+f+1$}
      \State $\mathcal{G}_r.\mathit{nodes}.\proc{Add}(tx)$
      \State $\mathit{node}[tx].\mathit{graph} \gets \mathcal{G}_r$
    \EndIf
  \EndFor

  \Statex
  \State $\triangleright$ Aggregate pairwise weights and add edges
  \ForAll{newly admitted $tx \in \mathcal{G}_r$, $tx' \in \mathcal{G}_r.\mathit{nodes}$}
    \State update $\text{Weight}_\mathcal{L}(tx, tx')$ from new contributors
    \If{$\text{Weight} \ge n(1{-}\gamma)+f + 1$ \textbf{and} no edge exists}
      \State add edge in direction of larger weight
    \EndIf
  \EndFor

  \Statex
  \State $\triangleright$ Finalize tournaments in round order
  \ForAll{$\mathcal{G} \in \mathit{graphs}$ in round order}
    \If{$\mathcal{G}$ is not a tournament} \textbf{break} \EndIf
    \State $[S_1, \ldots, S_s] \gets$ \Call{TarjanSCC}{$\mathcal{G}$}
    \State $j^* \gets \max\{j \mid \exists\, tx \in S_j,\; \mathit{node}[tx].\mathit{count} \ge 2f{+}1\}$
    \For{$j = 1, \ldots, j^*$}
      \State append $S_j$ to final ordering
    \EndFor
    \For{$j = j^{*}{+}1, \ldots, s$}
      \State $\mathit{node}[tx].\mathit{graph} \gets \bot$ for $tx \in S_j$
    \EndFor
    \State release $\mathcal{G}$ from $\mathit{graphs}$
  \EndFor
\EndProcedure
\end{algorithmic}
\end{algorithm}

\end{document}